\newtheorem{theorem}{Theorem}
\newtheorem{lemma}[theorem]{Lemma}
\newtheorem{proposition}[theorem]{Proposition}
\newtheorem{definition}[theorem]{Definition}
\newtheorem{corollary}[theorem]{Corollary}
\theoremstyle{remark}
\newtheorem{example}[theorem]{Example}
\newtheorem{remark}[theorem]{Remark}
\newtheorem{observation}[theorem]{Observation}
\DeclareMathOperator{\npclass}{\mathsf{NP}}
\DeclareMathOperator{\wclass}{\mathsf{W}}
\DeclareMathOperator{\conpclass}{\mathsf{coNP}}
\DeclareMathOperator{\bigO}{O}
\DeclareMathOperator{\regex}{\mathsf{RX}}
\DeclareMathOperator{\regexp}{\mathsf{RE}}
\DeclareMathOperator{\MFA}{\mathsf{MFA}}
\DeclareMathOperator{\NFA}{\mathsf{NFA}}
\DeclareMathOperator{\DFA}{\mathsf{DFA}}
\DeclareMathOperator{\MDMFA}{\mathsf{md-MFA}}
\newcommand{\kSigma}{\Sigma_k}
\newcommand{\eSigma}{\Sigma_{\eword}}
\newcommand{\ekSigma}{\Sigma_{\eword, k}}
\newcommand{\emSigma}{\Sigma_{\eword, m}}
\DeclareMathOperator{\samereach}{\hspace{-0.15em}\Downarrow\hspace{-0.1em}}
\DeclareMathOperator{\notsamereach}{\hspace{-0.15em}\not\Downarrow\hspace{-0.1em}}
\DeclareMathOperator{\memSync}{\Bumpeq}
\DeclareMathOperator{\notmemSync}{\not\Bumpeq}
\DeclareMathOperator{\mdet}{\textsf{md}}
\DeclareMathOperator{\minVarPara}{\mathsf{minvar}}
\DeclareMathOperator{\contracted}{\mathsf{con}}
\DeclareMathOperator{\deltaContr}{\delta_{\contracted}}
\DeclareMathOperator{\paraAlphabet}{\Gamma}
\DeclareMathOperator{\eword}{\varepsilon}
\DeclareMathOperator{\lang}{\mathcal{L}}
\DeclareMathOperator{\reflang}{\mathfrak{R}}
\DeclareMathOperator{\var}{\textsf{var}}
\DeclareMathOperator{\altop}{\vee}
\DeclareMathOperator{\lpara}{\Rsh}
\DeclareMathOperator{\rpara}{\Lsh}
\DeclareMathOperator{\refexp}{\textsf{ref}}
\newcommand{\deref}{\mathcal{D}}
\DeclareMathOperator{\opened}{\mathtt{O}}
\DeclareMathOperator{\closed}{\mathtt{C}}
\DeclareMathOperator{\oneinthreethreesat}{\textsc{1-in-3 3SAT}}
\DeclareMathOperator{\threesat}{\textsc{3SAT}}
\DeclareMathOperator{\nondef}{\bot}
\DeclareMathOperator{\synchmeminstrTable}{\mathbf{T}}
\DeclareMathOperator{\lce}{\textsf{LCE}}
\DeclareMathOperator{\prefrel}{\preceq_{\textsf{pref}}}
\DeclareMathOperator{\SynchMemAlgo}{\textsc{SyncMembership}}
\DeclareMathOperator{\nonSyncBranch}{\mathsf{NSB}}
\DeclareMathOperator{\memInstAlphabet}{\Gamma}
\DeclareMathOperator{\canonicalNFA}{\mathcal{R}}
\DeclareMathOperator{\canonicalMFA}{\mathcal{M}}
\DeclareMathOperator{\preRel}{\vartriangleright_{\textsf{def}}}
\DeclareMathOperator{\postRel}{\vartriangleright_{\textsf{call}}}
\DeclareMathOperator{\notpostRel}{\not\vartriangleright_{\textsf{call}}}
\DeclareMathOperator{\notpreRel}{\not\vartriangleright_{\textsf{def}}}
\DeclareMathOperator{\subexpression}{\mathsf{se}}
\DeclareMathOperator{\syntaxtree}{\mathcal{T}}
\DeclareMathOperator{\nodes}{\mathcal{N}}
\DeclareMathOperator{\nodelabel}{\mathsf{type}}
\DeclareMathOperator{\altoptree}{[\altop]}
\DeclareMathOperator{\concoptree}{[\cdot]}
\DeclareMathOperator{\plusoptree}{[+]}
\DeclareMathOperator{\memorylist}{\mathfrak{M}}
\DeclareMathOperator{\notinuse}{\bot}
\DeclareMathOperator{\instate}{\mathsf{in}}
\DeclareMathOperator{\outstate}{\mathsf{out}}
\DeclareMathOperator{\interstate}{\mathsf{m}}
\DeclareMathOperator{\actvarset}{\mathsf{avs}}
\DeclareMathOperator{\avd}{\mathsf{avd}}
\DeclareMathOperator{\savd}{\mathsf{savd}}
\DeclareMathOperator{\crudeAutomaton}{\mathcal{H}}
\DeclareMathOperator{\crudeAutomatonNodes}{\mathcal{N}_{\crudeAutomaton}}
\DeclareMathOperator{\compressedInstructions}{\Phi}
\newcommand{\open}[1]{\mathtt{o}({#1})}
\newcommand{\close}[1]{\mathtt{c}({#1})}
\newcommand{\openshort}[1]{\mathtt{o}{#1}}
\newcommand{\closeshort}[1]{\mathtt{c}{#1}}
\newcommand{\ta}{\ensuremath{\mathtt{a}}}
\newcommand{\tb}{\ensuremath{\mathtt{b}}}
\newcommand{\tc}{\ensuremath{\mathtt{c}}}
\newcommand{\td}{\ensuremath{\mathtt{d}}}
\newcommand{\tzero}{\ensuremath{\mathtt{0}}}
\title{Regular Expressions with Backreferences:\\ Polynomial-Time Matching Techniques}
\author[1]{Markus L. Schmid}
\affil[1]{Humboldt-Universit\"at zu Berlin, Berlin, Germany, \texttt{MLSchmid@MLSchmid.de}}
\date{\vspace{-1cm}}
\begin{document}

\maketitle

\begin{abstract}
Regular expressions with backreferences (regex, for short), as supported by most modern libraries for regular expression matching, have an NP-complete matching problem. We define a complexity parameter of regex, called active variable degree, such that regex with this parameter bounded by a constant can be matched in polynomial-time. Moreover, we formulate a novel type of determinism for regex (on an automaton-theoretic level), which yields the class of memory-deterministic regex that can be matched in time $O(|w|p(|\alpha|))$ for a polynomial $p$ (where $\alpha$ is the regex and $w$ the word). Natural extensions of these concepts lead to  properties of regex that are intractable to check.
\end{abstract}

\section{Introduction}\label{sec:intro}

Regular expressions were first introduced by Kleene in 1956~\cite{Kleene1956} as a theoretical concept (an early implementation is due to Thompson~\cite{Thompson1968}). Since then, they have been enriched with practically motivated extensions and modifications, which is mainly due to their rather high practical relevance (see the IEEE POSIX standard~\cite{ieee:posix} or the W3C recommendations~\cite{w3c:dtd, w3c:xsd, w3c:sparql11}, see \cite{Friedl2006} for an overview of the role of regular expressions as a practical tool, and also note that variants of regular expressions are intensively investigated in the database theory community, due to their relevance for graph databases (see, e.\,g., \cite{LibkinVrgoc2012, BarceloEtAl2012, BjorklundEtAl2015, LosemannMartens2013, BaganEtAl2013, BarceloEtAl2013_2, MartensTrautner2018} or \cite{Barcelo2013} for a survey) and the information extraction framework of document spanners (see, e.\,g., \cite{AmarilliEtAl2021, FlorenzanoEtAl2020, FaginEtAl2015, FreydenbergerHolldack2018, SchmidSchweikardt2022b, SchmidSchweikardt2021a, SchmidSchweikardt2021b} or \cite{SchmidSchweikardt2022a} for a survey). Regular expressions have excellent decidability- and complexity-properties, while at the same time providing expressive power that is sufficient for many important computational tasks. Most of the practical enhancements added over the years are mere ``syntactic sugar'' and therefore preserve these positive properties. However, adding so-called \emph{backreferences} drastically increases expressive power and therefore leads to intractability and even undecidability.\par
A \emph{backreference} in a regular expression is a possibility to repeat the subword matched to a specific subexpression. For example, the $x\{\dots\}$-construct in the expression $r = x\{(\ta \altop \tb)^*\} \tc x$ stores in variable $x$ whatever subword is matched to the subexpression $(\ta \altop \tb)^*$, and the following occurrence of variable $x$ then refers to exactly this subword (thus, $r$ describes  the non-regular language $\{w \tc w \mid w \in \{\ta, \tb\}^*\}$). In the following, we denote regular expressions with such backreferences by the term \emph{regex}. The \emph{matching problem} of regex, i.\,e., deciding whether a given regex can match a given word, is $\npclass$-complete (even for strongly restricted variants) \cite{Aho1990, FernauSchmid2015, FernauEtAl2016, FernauEtAl2020}, and decision problems like inclusion, equivalence and universality are undecidable~\cite{Freydenberger2013} (even if the input expressions only use one variable with only a bounded number of occurrences). Nevertheless, regular expression libraries of almost all modern programming languages (like, e.\,g., Java, PERL, Python and .NET) support backreferences (although they syntactically and even semantically slightly differ from each other (see the discussion in~\cite{FreydenbergerSchmidJCSS})), and they are even part of the POSIX standard~\cite{ieee:posix}.

\subsection{The Regex Matching Problem}

The arguably most important problem for practical considerations is the \emph{matching problem}. Its general $\npclass$-completeness was shown in~\cite{Aho1990}, but also follows from matching \emph{patterns with variables}~\cite{Angluin1980}, i.\,e., checking whether the variables $x_i$ in a pattern $\alpha \in (\Sigma \cup \{x_i \mid i \in \mathbb{N}\})^*$ can be uniformly replaced by words from $\Sigma^*$ in order to obtain a given word (see~\cite{ManeaSchmid2019} for a survey or the more recent publications~\cite{FernauEtAl2020, DayEtAl2017, DayEtAl2018}). These patterns are a quite successful tool for obtaining negative results for regex,\footnote{The undecidability results of~\cite{Freydenberger2013} also follow from the fact that regex can describe systems of patterns.} but the many known positive algorithmic approaches to matching patterns (see~\cite{ReidenbachSchmid2014, FernauEtAl2015, DayEtAl2017, DayEtAl2018}) are tailored to the ``backreferencing-aspect'' and seem unfit for handling the ``regular expression-aspect'' of regex. In fact, even though there are many deep theoretical (yet negative) results about the complexity and decidability of regex, positive algorithmic approaches are rather scarce. \par
In~\cite{FreydenbergerSchmidJCSS}, \emph{deterministic} regex (\emph{det-regex}) are introduced.\footnote{Deterministic (classical) regular expressions are an established concept~\cite{CzerwinskiEtAl2017, Gelade2012, LosemannEtAl2016, GrozManeth2017}.} Since they are characterised via a purely deterministic automaton model, they can be matched efficiently and, if further restricted, they have some decidable problems in static analysis (their language theoretical properties have been thoroughly investigated in~\cite{FreydenbergerSchmidJCSS}). However, if efficient matchability is our main concern, det-regex seem unnecessarily restricted, since they do not cover all regular languages. In fact, det-regex cover very well what it means for a regex to be deterministic in the strongest possible way, but not quite what it means to be ``easily matchable''. 

\subsection{Our Contribution}

We develop two different approaches to efficient regex matching:

\begin{itemize}
\item \textbf{Regex with bounded active variable degree}:
We define a complexity parameter of regex, called \emph{active variable degree} (denoted by $\avd(\alpha)$), and show that regex can be matched in time $|\alpha||w|^{\bigO(\avd(\alpha))}$. Note that $|\alpha||w|^{\bigO(\var(\alpha))}$is a trivial upper bound, where $\var(\alpha)$ is the total number of variables of $\alpha$, and that $\avd(\alpha)$ is always upper bounded by $\var(\alpha)$. Intuitively speaking, the parameter $\avd(\alpha)$ measures the number of variables that can be \emph{active} at the same time in a match, and the algorithmic application relies in devising a matching procedure, which, in a sense, reuses variables that are currently \emph{not active}. This approach can also be seen as a technique to reduce the number of variables of a regex, a problem that, in its general form, is undecidable (see~\cite{Freydenberger2013}).
\item \textbf{Memory-deterministic regex}: We come up with a possibility to limit the inherent non-determinism of regex to those parts that have nothing to do with backreferences. The thus obtained class of \emph{memory-deterministic regex} enforces some synchronisation between different computational branches in a matching procedure, and therefore can be matched in time $p(|\alpha|)|w|$ for a (low-degree) polynomial $p$. This means that matching memory-deterministic regex can be done in time linear in $|w|$ if measured in \emph{data complexity}.\footnote{Data complexity is motivated by considerations in database theory; in this regard, a regex can be seen as a (usually short) query that is to be evaluated on a (potentially large) data-object, i.\,e., the word.} This is worth pointing out, since the full class of regex can most likely not be matched in time $f(|\alpha|)g(|w|)$ for \emph{any} polynomial $g$ and \emph{computable} function $f$, or for \emph{any} polynomial $f$ and \emph{computable} function $g$ (this follows from the $\wclass[1]$-hardness of the problem if parameterised by the size of the regex or by the size of the input word~\cite{FernauEtAl2016}). The concept of memory determinism is rather complicated, since it cannot be achieved by some local and syntactical restrictions. Hence, a main challenge is to show that memory determinism can be checked efficiently.
\end{itemize}

These positive results are complemented with lower bounds. The active variable degree can be improved to a much stronger complexity parameter (that also can be exploited in similar ways), but computing it is $\conpclass$-hard. The development of memory determinism is carefully governed by intractability results as follows: First, we show that even rather strong restrictions of non-determinism will lead to an intractable matching problem, as long as these restrictions are of a local and syntactical nature. This observation leads to a regex-property that is entirely non-syntactic in the sense that it is formulated with respect to the possible matchings. While this property is sufficient for efficient matching, it is also $\conpclass$-hard to be checked for. The concept of memory determinism results from finding a balance between matching-complexity and the complexity of checking the property. 

\subsection{Techniques}

Our main algorithmic tool is \emph{memory automata} ($\MFA$), a recently introduced automaton-based characterisation of regex (see~\cite{Schmid2016, FreydenbergerSchmidJCSS}). If regex are represented as $\MFA$, their structure is much easier to analyse and we can conveniently abstract from the actual backreferences by interpreting an $\MFA$ as an $\NFA$ that accepts a regular language with special meta-symbols. This point of view is vital and provides the necessary leverage for developing our concepts and proving the respective results. In this way, we are able to define and exploit the active variable degree by analysing the automaton-structure underlying the regex, and the development of memory determinism will also be done on the level of $\MFA$. To the knowledge of the author, restricting non-determinism by talking about computations of the automaton rather than syntactical properties is a novel approach.

\section{Preliminaries}\label{sec:prelim}

Let $\mathbb{N} = \{1, 2, 3, \ldots\}$ and $[n] = \{1, 2, \ldots, n\}$, $n \in \mathbb{N}$.  For a set $A$, by $\mathcal{P}(A)$ we denote its power set. For a string $w$, $|w|$ denotes its length and, for every $i \in [|w|]$, $w[i]$ denotes the $i^{\text{th}}$ symbol of $w$. Moreover, by $w[i..j]$, we denote the factor of $w$ from symbol $i$ to symbol $j$, and for $b \in \Sigma$, $|w|_{b}$ denotes the number of occurrences of $b$ in $w$. The symbol $\eword$ denotes the empty word. For an alphabet $A$, $A^+$ denotes the set of non-empty words over $A$ and $A^* = A^+ \cup \{\eword\}$; we set $A_{\eword} = A \cup \{\eword\}$ (i.\,e., we also use $\eword$ as a symbol denoting the empty word). For any language descriptor $D$, $\lang(D)$ denotes the language of $D$.

\subsection{Regular Expressions with Backreferences} 

Let $X$ denote a finite set of \emph{variables}. The set $\regex_{\Sigma, X}$ of \emph{regular expressions with backreferences} (\emph{over $\Sigma$ and $X$}), also denoted by \emph{regex}, for short, is recursively defined as follows:
\begin{compactenum}
\item\label{regexDefPointOne} $a \in \regex_{\Sigma, X}$ and $\var(a) = \emptyset$, for every $a \in \eSigma$,
\item\label{regexDefPointTwo} $(\alpha \cdot \beta) \in \regex_{\Sigma, X}$, $(\alpha \altop \beta) \in \regex_{\Sigma, X}$, and $(\alpha)^+ \in \regex_{\Sigma, X}$, for every $\alpha,\beta\in \regex_{\Sigma, X}$;\\ furthermore, $\var((\alpha \cdot \beta)) = \var((\alpha \altop \beta)) = \var(\alpha) \cup \var(\beta)$ and $\var((\alpha)^+) = \var(\alpha)$, 
\item $x \in \regex_{\Sigma, X}$ and $\var(x) = \{x\}$, for every $x \in X$, 
\item $x\{\alpha\} \in \regex_{\Sigma, X}$ and $\var(x\{\alpha\}) = \var(\alpha) \cup \{x\}$, for every $\alpha \in \regex_{\Sigma, X}$ and $x \in X \setminus \var(\alpha)$.
\end{compactenum}
For $\alpha \in \regex_{\Sigma, X}$, we set $\alpha^* = \alpha^+ \altop \eword$, and we usually omit the operator `$\cdot$'. In a regex, we call an occurrence of symbol $x \in X$ a \emph{recall of variable $x$} and a subexpression of the form $x\{\alpha\}$ a \emph{definition of variable $x$}; if we just talk about (\emph{occurrences of}) \emph{variables}, then we refer to a recall or a definition. The subset of $\regex_{\Sigma, X}$ that can be created by Points~\ref{regexDefPointOne}~and~\ref{regexDefPointTwo} is exactly the set of \emph{regular expressions} over $\Sigma$, which we also call \emph{classical} regular expressions. \par
The \emph{syntax tree} $\syntaxtree(\alpha)$ of $\alpha \in \regex_{\Sigma, X}$ with $X = [m]$ with nodes $\nodes(\alpha)$ is defined as follows. 
\begin{itemize}
\item If $\alpha \in \Sigma_{\eword} \cup X$, then $\syntaxtree(\alpha)$ is a single node labelled with $[\alpha]$.
\item If $\alpha = (\beta \altop \gamma)$ (or $\alpha = (\beta \cdot \gamma)$), then the root of $\syntaxtree(\alpha)$ is labelled with $\altoptree$ (or $\concoptree$, respectively) and has the root of $\syntaxtree(\beta)$ as its left and the root of $\syntaxtree(\gamma)$ as its right child.
\item If $\alpha = (\beta)^+$ (or $\alpha = (x\{\beta\})$), then the root of $\syntaxtree(\alpha)$ is labelled with $\plusoptree$ (or $[x\{\}]$, respectively) and has the root of $\syntaxtree(\beta)$ as its only child.
\end{itemize}

\begin{figure}
\centering
\scalebox{0.6}{\includegraphics{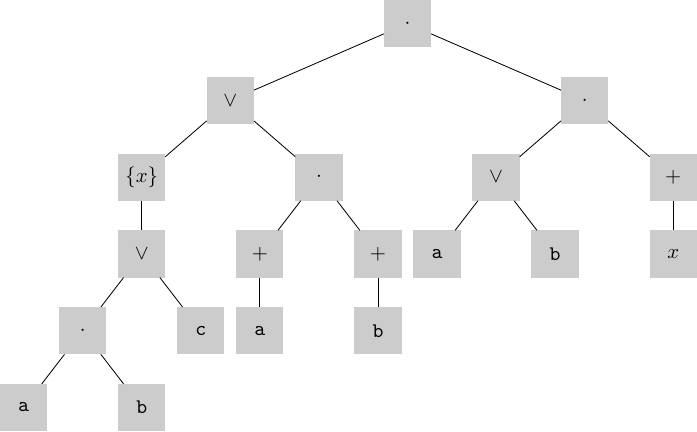}}
\caption{The syntax tree $\syntaxtree(\alpha)$ for $\alpha = (x\{\ta \tb \altop \tc\} \altop (\ta^+ \tb^+)) \: (\ta \altop \tb) \: x^+$.}
\label{Fig:STExample}
\end{figure}

See Fig.~\ref{Fig:STExample} for an illustration of a syntax tree. By $\subexpression(t)$, we refer to the subexpression of $\alpha$ that corresponds to a node $t \in \nodes(\alpha)$, i.\,e., $\subexpression(t)$ is the regex represented by the subtree of $\syntaxtree(\alpha)$ rooted by $t$. By $\nodelabel(t)$, we refer to the label of node $t \in \nodes(\alpha)$. In the following, we assume that regex are always given as syntax trees $\syntaxtree(\alpha)$; in particular, we set $|\alpha| = |\syntaxtree(\alpha)|$. \par
On an intuitive level, the semantics of a regex are clear: the expression is matched to a word as for classical regular expressions and if we encounter a definition $x\{\beta\}$, then the factor $v$ matched to $\beta$ is stored in $x$ and further occurrences of $x$ are treated as $v$ (as a particularity, \emph{undefined} variables are treated as $\eword$). However, several particularities, that are easily overlooked on this intuitive level, complicate the behaviour of regex considerably. For example, in $(x\{\ta^*\} \altop (x\{\tb^*\} x)) (x \altop x)$, depending on the alternations, either the first or the second definition of $x$ is instantiated and either the second or third recall of $x$ is instantiated, while the first recall is instantiated if and only if the second definition is. Moreover, the second and third recall of $x$ can refer to the first or the second definition of $x$, while the first recall can only refer to the second definition. As a result, whether or not the second or third recall of $x$ refers to the same factor as the first recall depends on the alternations. The situation is even more complicated  by operator $+$ as, e.\,g., in $\alpha = ((x\{\ta^*\} \altop (y\{\tb^*\} y)) (x \altop y))^+$. Now there is a potentially unbounded number of instances of each of the definitions and recalls of variables $x$ and $y$, and the allocation between definitions and recalls can reach over several iterations of the operator $+$. For example, if the definition of $x$ is instantiated in the first, and the definition of $y$ in the second to fifth iteration, then the recall of $x$, if instantiated in the fifth iteration, refers to the definition of $x$ of the first iteration. By using nesting of subexpressions in combination with operator $+$, rather complicated regex can be constructed.\par
We refer to~\cite{Schmid2016, FreydenbergerSchmidJCSS} for a detailed definition of the semantics for regex; moreover, the following automaton representations of regex, that are central for this work, will also implicitly give a definition of the regex-languages. \par
For a class $R \subseteq \regex_{\Sigma, X}$, the \emph{matching problem} (\emph{for $R$}) is the problem to decide whether $w \in \lang(\alpha)$ for given $\alpha \in R$ and $w \in \Sigma^*$. 

\subsection{Memory Automata}

An $\NFA$ is a tuple $M = (Q, \Sigma, \delta, q_0, F)$ with a set $Q$ of states, a finite alphabet $\Sigma$, a start state $q_0$, a set $F$ of accepting states and a transition function $\delta : Q \times (\Sigma \cup \{\eword\}) \to \mathcal{P}(Q)$. Configurations of $M$ (on input $w$) are pairs $(q, u)$, where $q \in Q$ and $u$ is a (possibly empty) suffix of $w$; $(q_0, w)$ is the start configuration (of $M$ on $w$) and a configuration $(q, \eword)$ is accepting if $q \in F$. The transition relation $\vdash_M$ on the configurations is induced by $\delta$ in the natural way and a word $w$ is accepted (i.\,e., in the language $\lang(M)$ of $M$) if $(q_0, w) \vdash^*_M (q, \eword)$ with $q \in F$ (where $\vdash^*_M$ is the reflexive-transitive closure of $\vdash_M$). \par
In order to derive \emph{$k$-memory automata} ($\MFA(k)$, for short) from $\NFA$, we first define for every $k \in \mathbb{N}$ an alphabet $\memInstAlphabet_k = \{\open{x}, \close{x} \mid x \in [k]\}$ and for any alphabet $\Sigma$, we set $\Sigma_k = \Sigma \cup [k]$ and $\ekSigma = \Sigma_k \cup \{\eword\}$. Syntactically, an $\MFA(k)$ is an $\NFA = (Q, \Delta, \delta, q_0, F)$ with $\Delta = \ekSigma \cup \memInstAlphabet_{k}$; the semantics are as follows. Configurations of $\MFA(k)$ are tuples $(q, w, (u_1, r_1), \ldots, (u_k, r_k))$ with $q$ and $w$ being the current state and remaining input, respectively, and $(u_i, r_i)$ is the configuration of \emph{memory} $i$, for every $i \in [k]$, where $r_i \in \{\opened, \closed\}$ is the \emph{status} and $u_i \in \Sigma^*$ is the \emph{content} of memory $i$. The transition relation $\vdash_M$ is induced by $\delta$ as follows. We have $c \vdash_{M} c'$ if one of the following two cases apply:
\begin{enumerate}
\item $c = (q, v w, (u_1, r_1), \ldots, (u_k, r_k))$ and $c' = (p, w, (u'_1, r_1), \ldots, (u'_k, r_k))$ with
\begin{itemize}
\item $p \in \delta(q, x)$ with either ($x \in \eSigma$ and $v = x$) or ($x \in [k]$, $r_x = \closed$ and $v = u_x$), and,
\item for every $\ell \in [k]$, $r_{\ell} = \opened$ implies $u'_{\ell} = u_{\ell} v$, and $r_{\ell} = \closed$ implies $u'_{\ell} = u_{\ell}$.
\end{itemize}
\item $c = (q, w, (u_1, r_1), \ldots, (u_k, r_k))$ and $c' = (p, w, (u_1, r_1), \ldots, (u'_{\ell}, r'_{\ell}), \ldots, (u_k, r_k))$ with $p \in \delta(q, x)$ with $x = \open{\ell}$, $r'_{\ell} = \opened$ and $u'_{\ell} = \eword$, or with $x = \close{\ell}$, $r'_{\ell} = \closed$ and $u'_{\ell} = u_{\ell}$.
\end{enumerate}

Hence, intuitively speaking, we can consume either single symbols from the remaining input, or the whole content $u_i$ of a memory $i$ (although for this the memory must be closed, i.\,e., $r_x = \closed$), while everything that we consume from the input is appended to the content of every memory $j$ that is open (i.\,e., $r_{j} = \opened$). The special symbols $\open{i}$ and $\close{i}$ change the status of a memory $i$, without consuming anything from the remaining input.

The initial configuration of $M$ (on input $w$) is the configuration $(q_0, w, (\eword, \closed), \ldots, (\eword, \closed))$, a configuration $(q, \eword, (u_1, r_1), \ldots, (u_k, r_k))$ is an accepting configuration if $q \in F$, and $\lang(M)$ is the set of accepted inputs.\par
\begin{figure}
\centering
\scalebox{1}{\includegraphics{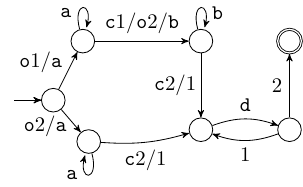}}
      	\caption{Example $\MFA(2)$.}
		\label{exampleMFAFigure}
\end{figure}
In the following, we shall denote $\MFA(k)$ by tuples $(Q, \Sigma, \delta, q_0, F)$ (i.\,e., we only explicitly state the ``actual'' terminal alphabet $\Sigma$). If the number $k$ of memories is not important, we also write $\MFA$. It will often be convenient to allow a slight abuse of notation and address memories with `names' rather than integers. For every $x \in \ekSigma \cup \memInstAlphabet_k$, transitions $p \in \delta(q, x)$ are called \emph{$x$-transitions}, and we also write $p \in \delta(q, x)$ as $(q, x) \to_{\delta} p$ (and also drop the subscript $\delta$ if it is clear from the context). An $x$-transition is a \emph{memory recall transition} if $x \in [k]$. A sequence $c_1, c_2, \ldots, c_m$ of configurations with $c_i \vdash_M c_{i + 1}$, $1 \leq i \leq m-1$, is called \emph{computation} (\emph{of $M$}), it is called a \emph{computation} (\emph{of $M$}) \emph{on input $w$}, if $c_1$ is the initial configuration on input $w$. For the sake of convenience, we also write computations as tuples $\vec{c} = (c_1, c_2, \ldots, c_m)$. For a configuration $(q, w, (u_1, r_1), \ldots, (u_k, r_k))$, $(u_1, \ldots, u_k)$ are the \emph{memory contents} and $(r_1, \ldots, r_k)$ the \emph{memory statuses}. \par
As usually done for $\NFA$, we also interpret $\MFA$ as directed graphs with vertices $Q$ and transitions as edge-labels; the start state is marked by an incoming arrow and accepting states are double-circled. We also label edges with several elements $x_1, x_2, \ldots, x_n \in \ekSigma \cup \memInstAlphabet_k$, separated by `\slash', in order to denote a sequence of transitions in a compact way.

\begin{example}\label{TMFAExample}
Consider the $M \in \MFA(2)$ illustrated in Fig.~\ref{exampleMFAFigure} (note that $\open{x}$ and $\close{x}$ are compressed to $\openshort{x}$ and $\closeshort{x}$, respectively). $M$ can either record a word from $\ta^+$ in memory $1$ and then a word $\tb^+$ in memory $2$ (this corresponds to the `upper branch'), or a word from $\ta^+$ in memory $2$ (this corresponds to the `lower branch'). Then, memory $1$ is recalled followed by reading $\td$, and these two steps can be repeated arbitrarily often. Finally, $M$ enters an accepting state by recalling memory $2$. Note that in the lower branch, memory $1$ is necessarily empty. As can be easily verified, $\lang(M) = \{\ta^n \tb^m (\ta^n \td)^k \tb^m \mid n, m, k \geq 1\} \cup \{\ta^n \td^m \ta^n \mid n, m \geq 1\}$; moreover, $\lang(M) = \lang (((x\{\ta^+\} y\{\tb^+\}) \altop y\{\ta^+\}) (x \td)^+ y)$. See also Example~\ref{regexExample} in the Appendix.
\end{example}

Given an $\MFA(k)$ and a word $w$, we can check whether $w \in \lang(M)$ as follows. In the graph that has the configurations $(q, u, (r_1, u_1), \ldots, (r_k, u_k))$ of $M$ on $w$ as vertices and edges given by the relation $\vdash_M$, we simply search for a path from the initial configuration to an accepting configuration. This directly yields the following trivial upper bound for the matching problem for memory automata. 

\begin{lemma}\label{MFAMembershipLemma}
Given $w \in \Sigma^*$ and $M = (Q, \Sigma, \delta, q_0, F) \in \MFA(k)$ with $|\delta| = \bigO(|Q|)$, we can decide $w \in \lang(M)$ in time $|Q||w|^{\bigO(k)}$.
\end{lemma}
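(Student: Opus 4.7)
The plan is to perform reachability analysis in the configuration graph of $M$ on input $w$, after observing that reachable configurations admit a succinct encoding of polynomial size in $|w|$.

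\textbf{Step 1: Memory contents are always factors of $w$.} By induction on the length of a computation of $M$ on $w$, I would show that in every reached configuration $(q, u, (u_1, r_1), \ldots, (u_k, r_k))$, every memory content $u_i$ is either $\eword$ or a factor $w[a..b]$ of $w$. The point is that $u_i$ is reset to $\eword$ by an $\open{i}$-transition fired at some input position $a$, and from that moment until the matching $\close{i}$-transition, $u_i$ only has terminal symbols consumed from the input appended to it (or recalled memories, which by the inductive hypothesis are themselves factors of $w$ that are simultaneously being re-read off the input). Hence at the moment $\close{i}$ fires at input position $b{+}1$, we have $u_i = w[a..b]$, and it stays so until the next $\open{i}$.

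\textbf{Step 2: Succinct encoding.} Using Step~1, I would encode every reachable configuration as $(q, j, (r_1, \sigma_1), \ldots, (r_k, \sigma_k))$, where $j = |w| - |u| \in \{0, \ldots, |w|\}$ is the current input position and $\sigma_i \in \{\bot\} \cup \{(a,b) : 0 \le a \le b \le |w|\}$ either represents $u_i = \eword$ (via $\bot$) or a pair of positions with $u_i = w[a..b]$. The total number of such encoded configurations is at most $|Q| \cdot (|w|+1) \cdot \bigl(2 \cdot (1 + (|w|+1)^2)\bigr)^k = |Q| \cdot |w|^{O(k)}$.

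\textbf{Step 3: Fast factor-equality tests.} To simulate memory-recall transitions in constant time, I would precompute in $O(|w|)$ time a longest-common-extension data structure on $w$ (e.g.\ a suffix array with LCP information), so that equality $w[a..b] = w[c..d]$ of two factors can be tested in $O(1)$. Then a recall of memory $i$ with $\sigma_i = (a,b)$ from input position $j$ is enabled iff $b - a \le |w| - j$ and $w[a..b] = w[j..j{+}(b{-}a)]$; all other transition types update $j$, $r_\ell$, and $\sigma_\ell$ in constant time directly from the encoding.

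\textbf{Step 4: Reachability.} Run BFS from the encoded initial configuration in this implicit graph and accept iff some reached configuration has $q \in F$ and $j = |w|$. The vertex count is $|Q| \cdot |w|^{O(k)}$, and since a configuration with state $q$ has outdegree bounded by the $\delta$-outdegree of $q$, the total edge count is bounded by $|\delta| \cdot |w|^{O(k)} = O(|Q|) \cdot |w|^{O(k)}$. Each edge is processed in $O(1)$, yielding overall running time $|Q| \cdot |w|^{O(k)}$.

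\textbf{Main obstacle.} The only non-routine step is Step~1: a priori the configuration space is infinite because memory contents are arbitrary strings over $\Sigma$, and without restricting reachable contents to factors of $w$ there is no polynomial bound on the state space. Once this observation is made, the rest of the argument is a standard implicit-graph reachability computation with mild preprocessing, and the factors $|Q|$ and $|w|^{O(k)}$ in the stated running time fall out directly from the encoding size and the assumption $|\delta| = O(|Q|)$.
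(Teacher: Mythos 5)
Your proposal is correct and follows essentially the same route as the paper's proof: a breadth-first search over the configuration graph restricted to configurations whose remaining input is a suffix of $w$ and whose memory contents are factors of $w$, giving $|Q| \cdot 2^k \cdot |w|^{O(k)}$ vertices and (via $|\delta| = O(|Q|)$) a matching edge bound. You in fact supply more detail than the paper does (the justification that reachable memory contents are factors, the position-pair encoding, and the LCE structure for constant-time recall tests), all of which is sound.
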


\subsection{Memory Automata for Regex}\label{sec:RegexToMFA}

For $\alpha \in \regex_{\Sigma, X}$ with $X = [m]$, we transform $\syntaxtree(\alpha)$ into a directed, edge-labelled graph $\crudeAutomaton(\alpha)$. Every $t \in \nodes(\alpha)$ is replaced by nodes $t^{\instate}$, $t^{\outstate}$ if $\nodelabel(t) \in \{\altoptree, \plusoptree, [b], [x], [x\{\}] \mid b \in \Sigma_{\eword}, x \in X\}$, and by nodes $t^{\instate}$, $t^{\interstate}$, $t^{\outstate}$ if $\nodelabel(t) = \concoptree$. For every leaf $t \in \nodes(\alpha)$, we add an edge $(t^{\instate}, t^{\outstate})$, and for every non-leaf $t \in \nodes(\alpha)$, we do the following. 
\begin{itemize}
\item If $\nodelabel(t) = \concoptree$ and $r$ and $s$ are the left and right children of $t$, respectively, then we add edges $(t^{\instate}, r^{\instate})$, $(r^{\outstate}, t^{\interstate})$, $(t^{\interstate}, s^{\instate})$ and $(s^{\outstate}, t^{\outstate})$.
\item If $\nodelabel(t) = \altoptree$ and $r$ and $s$ are the left and right children of $t$, respectively, then we add edges $(t^{\instate}, r^{\instate})$, $(t^{\instate}, s^{\instate})$, $(r^{\outstate}, t^{\outstate})$ and $(s^{\outstate}, t^{\outstate})$.
\item If $\nodelabel(t) = \plusoptree$ and $r$ is $t$'s child, then we add edges $(t^{\instate}, r^{\instate})$, $(r^{\outstate}, t^{\outstate})$ and $(t^{\outstate}, t^{\instate})$.
\item If $\nodelabel(t) = [x\{\}]$ and $r$ is $t$'s child, then we add edges $(t^{\instate}, r^{\instate})$ and $(r^{\outstate}, t^{\outstate})$.
\end{itemize}
Every edge $(t^{\instate}, t^{\outstate})$ with $\subexpression(t) \in \eSigma \cup X$ is labelled by $\subexpression(t)$, every edge $(t^{\instate}, p)$ with $\nodelabel(t) = [x\{\}]$, for some $x \in X$, is labelled with $\open{x}$, and every edge $(t^{\outstate}, p)$ with $\nodelabel(t) = [x\{\}]$, for some $x \in X$, is labelled with $\close{x}$. Moreover, all other edges are labelled with $\eword$. We denote the set of nodes of $\crudeAutomaton(\alpha)$ by $\crudeAutomatonNodes(\alpha)$. \par
The graph $\crudeAutomaton(\alpha)$ is a directed graph with edge labels from $\emSigma \cup \memInstAlphabet_m$ and every vertex corresponds to a node of the syntax tree $\syntaxtree(\alpha)$ (see Fig.~\ref{Fig:mainExample} for an illustration). Consequently, $\crudeAutomaton(\alpha)$ can be interpreted both as an $\NFA$ over alphabet $\emSigma \cup \memInstAlphabet_m$ or as an $\MFA(m)$ over alphabet $\Sigma$. By $\canonicalNFA(\alpha)$ we denote the $\NFA$ obtained from $\crudeAutomaton(\alpha)$ by defining $t^{\instate}$ to be the initial state, $t^{\outstate}$ to be the only accepting state, where $t$ is the root of $\syntaxtree(\alpha)$, and the transition function to be represented by the edge-labels. Analogously, $\canonicalMFA(\alpha)$ is obtained by interpreting $\canonicalNFA(\alpha)$ as an $\MFA(m)$. 

In principle, the transformation of $\alpha$ into $\crudeAutomaton(\alpha)$ is the Thompson-construction that obtains an $\NFA$ from a regular expression. However, for our purpose it is convenient to keep this implicit correspondence between states and nodes of the syntax tree.

By consulting the formal definition of the syntax of regex in~\cite{Schmid2016, FreydenbergerSchmidJCSS}, and by considering that $\crudeAutomaton(\alpha)$ is a variant of the Thompson construction, the following is immediate.

\begin{proposition}\label{computeCrudeAutomatonProposition}
For every $\alpha \in \regex_{\Sigma, X}$, $\crudeAutomaton(\alpha)$ can be computed in time $\bigO(|\crudeAutomaton(\alpha)|) = \bigO(|\alpha|)$. Moreover, $\lang(\canonicalMFA(\alpha)) = \lang(\alpha)$. 
\end{proposition}

With Lemma~\ref{MFAMembershipLemma}, this means that the matching problem for $\regex_{\Sigma, X}$ can be solved in time $|\alpha||w|^{\bigO(|X|)}$.

\section{Regex with Bounded Active Variable Degree}\label{AVDSection}

\begin{figure}
\centering
\scalebox{0.8}{
\includegraphics{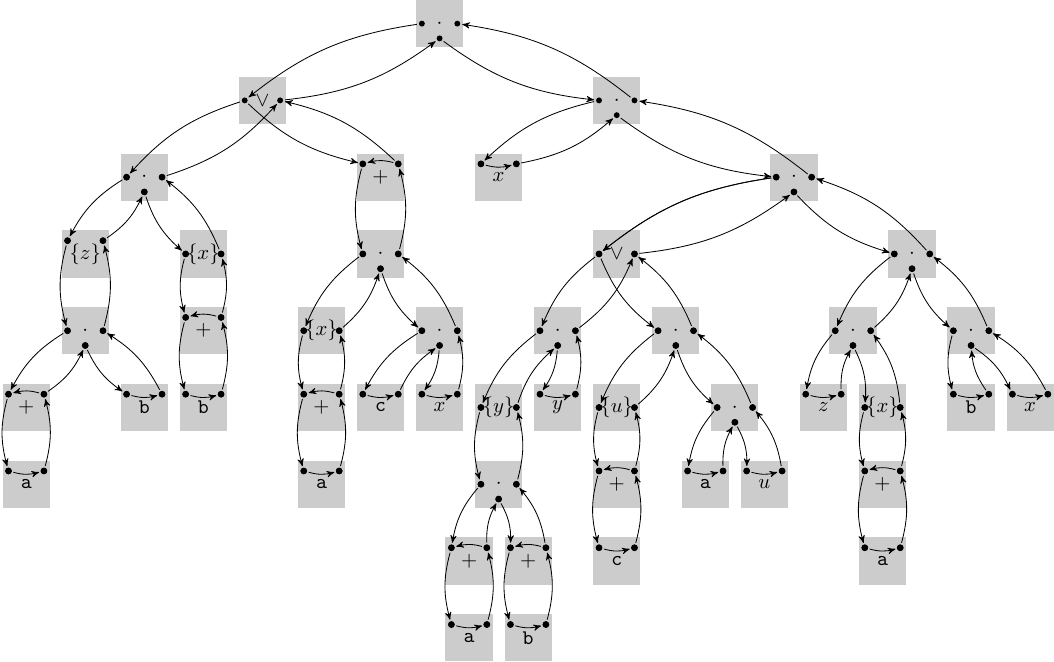}
}
\caption{$\crudeAutomaton(\alpha)$ for $\alpha = ((z\{\ta^+\tb\} x\{\tb^+\}) \altop (x\{\ta^+\}\tc x)^+) x ((y\{\ta^+\tb^+\}y)  \altop (u\{\tc^+\}\ta u)) z x\{\ta^+\}\tb x$. Every $t \in \syntaxtree(\alpha)$ is represented by a grey square labelled with $\nodelabel(t)$, which contains the nodes $t^{\instate}$ and $t^{\outstate}$ (and $t^{\interstate}$ if $\nodelabel(t) = [\cdot]$). In this way, the picture also implicitly shows $\syntaxtree(\alpha)$ (which also implicitly determines the omitted edge labels of $\crudeAutomaton(\alpha)$).}
\label{Fig:mainExample}
\end{figure}

In this section, we define efficiently matchable classes of regex by formalising the following observations. For regex of the form 
\begin{align*}
&(x_1\{\ta_1^+\} \altop x_2\{\ta_2^+\} \altop \ldots \altop x_n\{\ta_n^+\}) x_1 x_2 \ldots x_n \text{ or}\\
&(x_1\{\ta_1^+\}\:x_1)\: (x_2\{\ta_2^+\}\:x_2) \ldots (x_n\{\ta_n^+\}\:x_n)\,, 
\end{align*}
it is intuitively clear that they can be matched by a procedure that only has to store the value of one variable at a time, or, more formally, it is straightforward to construct an $\MFA(1)$. This is due to the fact that the $n$ different variables are independent. On the other hand, $$x_1\{\ta_1^+\}\: x_2\{\ta_2^+\} \ldots x_n\{\ta_n^+\}\: x_1 x_2 \ldots x_n$$ does not seem to have this nice property. A more complicated instance of this behaviour can be seen in Fig.~\ref{Fig:mainExample}: whenever variable $y$ or $u$ is defined, then the already defined variable $x$ will necessarily be redefined before it is recalled again. Moreover, as soon as $x$ is defined again, variables $y$ and $u$ are never recalled again. Consequently, when we reach a definition of $y$ or $u$, we can afford to forget $x$'s value, or, in the $\MFA$-perspective, it should be possible to use the memory for $x$ now for $y$ or $u$, and then later again for $x$, handling variables $x$, $y$ and $u$ with only one memory instead of three. For the example of Fig.~\ref{Fig:mainExample}, we can actually just rename $y$ and $u$ to $x$ and obtain an equivalent regex with only two variables. Unfortunately, the situation that variables can be reused, is not easily covered by a simple renaming of variables. For example, consider $\alpha = x\{\ta^+\} \tb x (y \tc y\{\tb^+\})^+ x\{\tb^+\} \ta x$ and $\beta = x\{\ta^+\} \tb x ((y\{\tb^+\}) \altop (z\{\tb^+\} \tc z)) y \ta x\{\tb^+\} \ta x$. For both these regex, it is again not necessary to store $x$'s value once we encounter another variable. However, renaming $y$ (and $z$ in the case of $\beta$) to $x$ produces non-equivalent regex, since it can happen that a former undefined occurrence of $y$ becomes a defined occurrence of $x$, e.\,g., $\alpha$ (with $y$ renamed to $x$) cannot generate $\ta \tb \ta \tc \tb \tb \ta \tb$, while $\beta$ (with $y$ renamed to $x$) cannot generate $\ta \tb \ta \tb \tc \tb \ta \tb \ta \tb$.\par
We now derive a complexity measure from these observations. For our definitions and the algorithm, we shall mainly rely on the automata-perspective, i.\,e., we work with $\canonicalNFA(\alpha)$ and $\canonicalMFA(\alpha)$. First, we define relations $\preRel \subseteq X \times \crudeAutomatonNodes(\alpha)$ and $\postRel \subseteq \crudeAutomatonNodes(\alpha) \times X$ as follows:\smallskip\\
\begin{tabular}{lcl}
$x \preRel q$ & $\Leftrightarrow$ & $\canonicalNFA(\alpha)$ can reach $q$ by reading a word $w$ with $|w|_{\open{x}} \geq 1$.\\
$q \postRel x$ & $\Leftrightarrow$ & starting in $q$, $\canonicalNFA(\alpha)$ can read a word $w x$ with $|w|_{\open{x}} = 0$.
\end{tabular}\smallskip\\
Intuitively speaking, $x \preRel q$ describes the situation that we can reach $q$ such that a definition for $x$ is reached along the way, which means that some memory is currently storing a value for $x$. Moreover, $q \postRel x$ means that from $q$ we can reach a point where the \emph{currently} stored value for $x$ is recalled, i.\,e., we can reach an $x$-transition without first resetting the memory for $x$ by an $\open{x}$-transition. Consequently, $x \preRel q \postRel x$ means that we can reach $q$ such that a memory is currently used for storing some value for $x$ and we cannot afford to lose this value.\par
For every $q \in \crudeAutomatonNodes(\alpha)$, the \emph{set of active variables} (\emph{for $q$}) is $$\actvarset(q) = \{x \mid x \preRel q \postRel x\}\,.$$ By $\avd(\alpha) = \max\{|\actvarset(t^{\instate})| \mid t \text{ has parent $t'$ with } \nodelabel(t') = [x\{\}], x \in X\}$, we denote the \emph{active variable degree} of $\alpha$. Finally, for every $k \in \mathbb{N}$, we define \emph{regex with active variable degree at most $k$} as $\regex^{\avd\leq k}_{\Sigma, X} = \{\alpha \in \regex_{\Sigma, X} \mid \avd(\alpha) \leq k\}$.\par
Coming back to our example of Fig.~\ref{Fig:mainExample}, we observe that $z \preRel t^{\instate} \postRel z$, where $t$ is the child node of $t'$ with $\subexpression(t') = [y\{\}]$. Moreover, $x \preRel t^{\instate} \notpostRel x$ and $u \notpreRel t^{\instate} \notpostRel u$. Thus, $\actvarset(t^{\instate}) = \{y, z\}$. In fact, we have $|\actvarset(s^{\instate})| \leq 2$ for every child node $s$ of some node $s'$ with $\nodelabel(s') \in \{[x\{\}], [y\{\}], [z\{\}], [u\{\}]\}$, and therefore $\avd(\alpha) = 2$.

\begin{lemma}\label{computeAVDLemma}
For $k \in \mathbb{N}$ and $\alpha \in \regex_{X, \Sigma}$, we can check $\avd(\alpha) \leq k$ in time $\bigO(|X||\alpha|^2)$.
\end{lemma}

\begin{proof}
For given $x \in X$ and $q \in \crudeAutomatonNodes(\alpha)$, we can check whether $x \preRel q$ in time $\bigO(|\alpha|)$. Indeed, this can be done by simply checking whether in $\canonicalNFA(\alpha)$ we can reach $q$ with a word that contains $\open{x}$. Analogously, we can check whether $x \preRel q$ in time $\bigO(|\alpha|)$, which means that the relations $\preRel$ and $\postRel$ can be computed in time $\bigO(|X||\alpha|^2)$.\par
Let $T = \{t^{\instate} \in \crudeAutomatonNodes(\alpha) \mid \nodelabel(t) = [x\{\}], x \in X\}$. In order to check $\avd(\alpha) \leq k$, we compute $\actvarset(t)$ for every $t \in T$. Since we have computed relations $\preRel$ and $\postRel$, this can be done in time $\bigO(|\crudeAutomatonNodes(\alpha)||X|) = \bigO(|\alpha||X|)$. Consequently, $\avd(\alpha) = \max\{|\actvarset(t)| \mid t \in T\}$ can be computed in total time $\bigO(|X||\alpha|^2)$.
\end{proof}

Let $t \in \nodes(\alpha)$ with $\nodelabel(t) = [x\{\}]$. Then every word $w$ that can be read by $\canonicalNFA(\alpha)$ starting in $t^{\instate}$ is a word that begins with $\open{x}$. Consequently, we conclude the following:

\begin{observation}\label{relationsObservation}
For every $t \in \nodes(\alpha)$ with $\nodelabel(t) = [x\{\}]$, we have $t^{\instate} \notpostRel x$.
\end{observation}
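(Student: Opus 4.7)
My plan is to read off the conclusion directly from the local structure of $\crudeAutomaton(\alpha)$ at a node $t$ with $\nodelabel(t) = [x\{\}]$. Recalling the construction, such a node contributes exactly two edges incident to $t^{\instate}$ and $t^{\outstate}$: a single outgoing edge $(t^{\instate}, r^{\instate})$, where $r$ is the unique child of $t$, and an edge $(r^{\outstate}, t^{\outstate})$. The labelling rules then force the edge $(t^{\instate}, r^{\instate})$ to be labelled $\open{x}$ (and $(r^{\outstate}, t^{\outstate})$ to be labelled $\close{x}$). In particular, every edge leaving $t^{\instate}$ in $\canonicalNFA(\alpha)$ is an $\open{x}$-transition.

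With that local fact in hand, the rest is a one-line symbol count. Suppose for contradiction that $t^{\instate} \postRel x$, so that $\canonicalNFA(\alpha)$ can read some word $w x$ starting in $t^{\instate}$ with $|w|_{\open{x}} = 0$. The word $wx$ is non-empty (it ends in the recall symbol $x$), hence the first transition used in this run must consume its first symbol, and by the observation above that first symbol must be $\open{x}$. Therefore $wx$ begins with $\open{x}$, so $|wx|_{\open{x}} \geq 1$. But the trailing symbol $x$ is a memory recall symbol, not $\open{x}$, so $|w|_{\open{x}} = |wx|_{\open{x}} \geq 1$, contradicting $|w|_{\open{x}} = 0$. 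Hence $t^{\instate} \notpostRel x$, as claimed.

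The only point requiring mild care is making sure nothing sneaks in from other subexpressions of $\alpha$: an outgoing edge of $t^{\instate}$ is created only by the rule associated to $t$ itself (edges created for other nodes go between \emph{their own} vertices), so the ``only outgoing edges are $\open{x}$-labelled'' claim is genuinely local. Once this bookkeeping is acknowledged, the argument is immediate and there is no real obstacle.
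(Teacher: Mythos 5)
Your argument is correct and takes essentially the same route as the paper's one-sentence justification: the only outgoing edge of $t^{\instate}$ is the $\open{x}$-labelled edge to $r^{\instate}$, so every nonempty word readable from $t^{\instate}$ begins with $\open{x}$, which rules out reading any $wx$ with $|w|_{\open{x}} = 0$. One harmless slip in your parenthetical: the edge $(r^{\outstate}, t^{\outstate})$ is not the one labelled $\close{x}$ --- by the construction the $\close{x}$ labels are placed on the edges \emph{leaving} $t^{\outstate}$ --- but this plays no role in your argument.
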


We are now ready to prove that regex with bounded active variable degree can be matched in polynomial-time.

\begin{theorem}\label{regexBoundedSCDTheorem}
For $k \in \mathbb{N}$, the $\regex^{\avd\leq k}_{\Sigma, X}$-matching problem can be solved in polynomial-time.
\end{theorem}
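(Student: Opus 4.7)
The plan is to perform a breadth-first search akin to Lemma~\ref{MFAMembershipLemma} on $\canonicalMFA(\alpha)$, but over a \emph{reduced} configuration space that tracks only those variables that are ``currently live'' in the run. Given $\alpha \in \regex^{\avd\leq k}_{\Sigma, X}$ and $w \in \Sigma^*$, I would first compute $\canonicalMFA(\alpha)$ (Lemma~\ref{computeCrudeAutomatonLemma}) together with $\preRel$, $\postRel$ and the sets $\actvarset(q)$ (Theorem~\ref{computeAVDTheorem}). A \emph{reduced configuration} is a triple $(q, w', L)$ where $w'$ is a suffix of $w$ and $L$ is a partial map from $X$ to $\{\opened,\closed\} \times F_w$ (with $F_w$ the set of factors of $w$), recording only the states of the live memories. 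The initial reduced configuration is $(t_r^{\instate}, w, \emptyset)$ with $t_r$ the root of $\syntaxtree(\alpha)$, and acceptance corresponds to reaching any $(t_r^{\outstate}, \eword, L)$.

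Transitions are induced by those of $\canonicalMFA(\alpha)$, with two refinements. After each move into a state $p$, we \emph{prune} from $L$ every $x$ with $p \notpostRel x$: such an $x$ cannot be recalled from $p$ without first being re-opened (which resets its memory), so the pre-pruning value cannot influence future acceptance. Symmetrically, when a transition references a variable $x \notin \text{dom}(L)$, we supply the default value $(\closed,\eword)$ (for non-opening transitions) or $(\opened,\eword)$ (for an $\open{x}$ transition itself); a short case analysis on $\preRel$ and $\postRel$ (using that a recall transition at $p$ forces $p \postRel x$) shows that the defaults correctly reflect the memory state in every underlying run.

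The core technical step is the invariant that $|\text{dom}(L)| \leq \avd(\alpha) \leq k$ holds along any BFS trajectory corresponding to an actual run of $\canonicalMFA(\alpha)$. To establish it, consider such a run $c_0 \vdash c_1 \vdash \cdots \vdash c_n$, let $c_s$ be the configuration immediately after its most recent memory-opening transition (or $c_0$ if none has occurred), and write $q_s, q_n$ for the states of $c_s, c_n$. By construction of $\crudeAutomaton(\alpha)$, $q_s$ is either $t_r^{\instate}$ or the in-state $r^{\instate}$ of the child $r$ of some $[y\{\}]$-node, so $|\actvarset(q_s)| \leq \avd(\alpha)$. For every variable $x$ that is \emph{dynamically live} at $c_n$, the opening of $x$ occurred at or before step $s$, giving $x \preRel q_s$; moreover, liveness yields a continuation from $q_n$ that recalls $x$ without any intervening $\open{x}$, and prepending the opening-free segment $q_s \to \cdots \to q_n$ produces a path from $q_s$ witnessing $q_s \postRel x$. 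Hence the dynamically live variables of $c_n$ form a subset of $\actvarset(q_s)$ of size at most $k$.

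With the invariant in hand, the number of reachable reduced configurations is bounded by $|\crudeAutomatonNodes(\alpha)| \cdot (|w|+1) \cdot \sum_{j=0}^{k} \binom{|X|}{j} (2|F_w|)^j = |\alpha|^{\bigO(k)} |w|^{\bigO(k)}$, so a standard BFS over this graph runs in polynomial time for every fixed $k$. Correctness is immediate: every accepting run of $\canonicalMFA(\alpha)$ projects to an accepting reduced trajectory, and every accepting reduced trajectory lifts to an accepting actual run by filling in pruned memories with their initial values. The main obstacle I expect is the clean formulation and verification of the live-set invariant, as it requires reconciling the static, path-existential definition of $\actvarset(q)$ with the dynamic liveness along a single particular run.
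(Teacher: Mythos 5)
Your proposal is correct and is essentially the paper's own argument: the paper builds an explicit $\MFA(k)$ whose states carry a memory list of at most $k$ tracked variables, prunes a variable $x$ exactly when a state $p$ with $p \notpostRel x$ is entered, bounds the tracked set by $\actvarset$ at the state reached by the most recent $\open{x}$-transition (using $\avd(\alpha)\leq k$), treats recalls of untracked variables as $\eword$-moves, and then invokes Lemma~\ref{MFAMembershipLemma}. Your only deviation is cosmetic: you run the breadth-first search directly on the reduced configuration graph instead of factoring it through an intermediate $\MFA(k)$.
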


\begin{proof}
Let $\alpha \in \regex^{\avd\leq k}_{\Sigma, X}$ and let $w \in \Sigma^*$. We first compute the relations $\preRel$ and $\postRel$ as shown in the proof of Lemma~\ref{computeAVDLemma}, and we construct $\crudeAutomaton(\alpha)$. \par
In the following, we transform $\canonicalMFA(\alpha)$ into an equivalent $M'_{\alpha} \in \MFA(k)$. Intuitively speaking, we replace every state $q$ by states $(q, \memorylist)$ for every $\memorylist \in (X \cup \{\notinuse\})^k$. The idea of the \emph{memory lists} $\memorylist$ contained in the states is that they store the information which of the $m$ old memories of $\canonicalMFA(\alpha)$ are currently handled by which of the $k$ memories of $M'_{\alpha}$; more precisely, $\memorylist_q[\ell] = x$ if memory $\ell$ of $M'_{\alpha}$ currently plays the role of the old memory $x$ of $\canonicalMFA(\alpha)$, and $\memorylist_q[\ell] = \notinuse$ means that memory $\ell$ is currently ``not in use''. More precisely, we define $M'_{\alpha} = (Q', \Sigma, \delta', q'_0, F') \in \MFA(k)$ with $Q' = \{(q, \memorylist) \mid q \in \crudeAutomatonNodes(\alpha), \memorylist \in (X \cup \{\notinuse\})^k\}$, $q'_0 = (q_0, (\notinuse, \ldots, \notinuse))$ and $F' = \{(p_f, \memorylist) \mid \memorylist \in (X \cup \{\notinuse\})^k\}$, where $q'_0$ and $p_f$ is the initial and accepting state of $\canonicalMFA(\alpha)$. We note that $|Q'| = \bigO(|\alpha| |X|^k)$. \par
Next, we give a high-level description of the transitions of $M'_{\alpha}$. The general idea is that $M'_{\alpha}$ simulates the computation of $\canonicalMFA(\alpha)$. Whenever $\canonicalMFA(\alpha)$ uses some memory $x \in X$, $M'_{\alpha}$ chooses a memory $\ell$ with $\memorylist[\ell] = \notinuse$ and uses it in order to simulate memory $x$. This works fine as long as we do not run out of memories (i.\,e., $\canonicalMFA(\alpha)$ opens a memory, but $\memorylist[\ell] \neq \notinuse$ for all $\ell \in [k]$). We shall first define transitions such that $M'_{\alpha}$ can simulate $\canonicalMFA(\alpha)$ under the assumption that this problem does not occur. Later on, we will show how the transitions must be transformed and extended for the general case. 
\begin{itemize}
\item Any $b$-transition of $\canonicalMFA(\alpha)$ with $b \in \eSigma$ is just simulated without changing the memory list.
\item If $\canonicalMFA(\alpha)$ opens memory $x$, then $M'_{\alpha}$ opens some memory $\ell$ with $\memorylist[\ell] = \notinuse$ and sets $\memorylist[\ell] = x$.
\item If $\canonicalMFA(\alpha)$ closes memory $x$ or recalls memory $x$, then $M'_{\alpha}$ does the same with respect to memory $\ell$ with $\memorylist[\ell] = x$ and does not change the memory list.
\end{itemize}
With these transitions, $M'_{\alpha}$ can simulate $\canonicalMFA(\alpha)$ up to the situation where it performs an $\open{x}$-transition, but $\memorylist[\ell] \neq \notinuse$, for every $\ell \in [k]$, or a $\close{x}$- or $x$-transition, but $\memorylist[\ell] \neq x$, for every $\ell \in [k]$. \par
We next modify the transitions defined so far in the following way. For every $x \in X$, whenever $M'_{\alpha}$ moves from a state $(q, \memorylist_q)$ to a state $(p, \memorylist_p)$ such that $p \notpostRel x$, then all occurrences of $x$ in $\memorylist_p$ are replaced by $\notinuse$.\par
We claim that it is not possible now for $M'_{\alpha}$ to reach the situation that $\memorylist[\ell] = \memorylist[\ell'] = x$ for $1 \leq \ell < \ell' \leq k$ and $x \in X$. Initially, $\memorylist$ only stores $\notinuse$. The only way that $x$ is added to $\memorylist$ is that an $\open{x}$-transition is simulated. Since $\open{x}$-transitions are only triggered by states $t^{\instate}$ with $\nodelabel(t) = [x\{\}]$ and $t^{\instate} \notpostRel x$ (see Observation~\ref{relationsObservation}), a possible $x$-entry of $\memorylist$ will be replaced by $\notinuse$ before the next $\open{x}$-transition is to be simulated.\par
We now assume that $M'_{\alpha}$ reaches the situation that it tries to simulate an $\open{x}$-transition of $\canonicalMFA(\alpha)$, but $\memorylist[\ell] \neq \notinuse$, for every $\ell \in [k]$. Let $t^{\instate}$ with $\nodelabel(t) = [x\{\}]$ be the state that triggers this $\open{x}$-transition, let $q$ be the state this transition leads to, and let $y_1, y_2, \ldots, y_k \in X$ be the elements stored in $\memorylist$. We observe the following facts:
\begin{itemize}
\item $|\{y_1, y_2, \ldots, y_k\}| = k$ and $x \notin \{y_1, y_2, \ldots, y_k\}$: As shown above, $\memorylist[\ell] = \memorylist[\ell'] \neq \notinuse$ for $1 \leq \ell < \ell' \leq k$ is not possible, so $\memorylist$ stores $k$ distinct values. Furthermore, since $\nodelabel(t) = [x\{\}]$, we also have $t^{\instate} \notpostRel x$ (see Observation~\ref{relationsObservation}), which implies that a previous $x$-entry of $\memorylist$ would have been removed. Thus, $x \notin \{y_1, y_2, \ldots, y_k\}$.
\item $\{y_1, y_2, \ldots, y_k\} \subseteq \actvarset(q)$: For every $i \in [k]$, we have $y_i \preRel t^{\instate}$, since otherwise it is not possible for $y_i$ to be stored in $\memorylist$, and we also have $t^{\instate} \postRel y_i$, since otherwise $y_i$ cannot be in $\memorylist$. Moreover, $\actvarset(t^{\instate}) \subseteq \actvarset(q)$ holds due to the fact that there is just one transition from $t^{\instate}$ labelled with $\open{x}$.
\item $q \notpostRel x$: Since $x \preRel q$, $q \postRel x$ would imply $x \in \actvarset(q)$ and therefore $k + 1 \leq |\actvarset(q)| \leq \avd(\alpha) = k$, which is a contradiction. 
\end{itemize}
By definition, $q \notpostRel x$ means that there is no word $wx$ with $|w|_{\open{x}} = 0$ that can be read by $\canonicalNFA(\alpha)$ starting in $q$. This means that all possible further $x$-transitions are preceded by an $\open{x}$-transition. Consequently, if $M'_{\alpha}$ reaches the situation that it tries to simulate an $\open{x}$-transition of $\canonicalMFA(\alpha)$, but $\memorylist[\ell] \neq \notinuse$, for every $\ell \in [k]$, then we can simply ignore this $\open{x}$-transition of $\canonicalMFA(\alpha)$, i.\,e., we carry out an $\eword$-transition instead. Moreover, we will then necessarily also reach the situation that $M'_{\alpha}$ tries to simulate an $\close{x}$-transition of $\canonicalMFA(\alpha)$ (namely the one triggered by state $t^{\outstate}$), but $\memorylist[\ell] \neq x$, for every $\ell \in [k]$. We can also ignore this $\close{x}$-transition and just carry out an $\eword$-transition instead. \par
We only have to discuss the situation that $M'_{\alpha}$ tries to simulate an $x$-transition of $\canonicalMFA(\alpha)$, but $\memorylist[\ell] \neq x$, for every $\ell \in [k]$. Let us first assume that this happens when no $\open{x}$-transition has been simulated before. Then memory $x$ is empty, which means we can ignore the $x$-transition and just carry out an $\eword$-transition instead. Let us now assume that there has been an earlier $\open{x}$-transition triggered by some state $q$, and let us consider the $\open{x}$-transition that is the most recent one with respect to the $x$-transition to be simulated. There are two possibilities why this $\open{x}$-transition does not cause $x$ to be stored in $\memorylist$. The first one is that for the source state $p$ of this $\open{x}$-transition, we have $p \notpostRel x$. The second one is that in $q$ the memory list $\memorylist$ does not contain any occurrence of $\notinuse$, which, as explained above, also means that $p \notpostRel x$. However, $p \notpostRel x$ means again that there is no word $wx$ with $|w|_{\open{x}} = 0$ that can be read by $\canonicalNFA(\alpha)$ starting in $q$, which contradicts our assumption that the considered $\open{x}$-transition is the most recent one.\par
These considerations show that $M'_{\alpha}$ can simulate $\canonicalMFA(\alpha)$ and therefore $\lang(M_{\alpha}) = \lang(M'_{\alpha})$.\par
We can now check whether $w \in \lang(\alpha)$ by checking $w \in \lang(M'_{\alpha})$ in time $|Q'||w|^{\bigO(k)} = |\alpha| |X|^k |w|^{\bigO(k)}$ (see Lemma~\ref{MFAMembershipLemma}).
\end{proof}

The parameter $\avd$ has an obvious shortcoming: if for some $Y \subseteq X$, we have $x \preRel q$ for every $x \in Y$, then this only means that for every $x \in Y$ we can reach $q$ with $x$ defined, but not that it is possible to reach $q$ with \emph{all} $x \in Y$ defined at the same time. For example, $\alpha = ((x\{\ta^+\} y\{\tb^+\}) \altop z\{\tc^+\} \altop (x\{\tb^+\} u\{\tc^+\})) v\{\ta^+\} x y z u v$ has a maximum active variable degree of $|X| = 5$, while the maximum number of variables defined at the same time is only $3$ and we can easily define an $\MFA(3)$ for $\alpha$. Consequently, it seems that the active variable degree can be strengthened by extending the relation $\preRel$ to a relation of the form $\mathcal{P}(X) \times \crudeAutomatonNodes(\alpha)$ as follows: $\{y_1, y_2, \ldots, y_\ell\} \preRel q$ if and only if $\canonicalNFA(\alpha)$ can reach $q$ by reading a word $w$ with $|w|_{\open{y_i}} \geq 1$, for every $i \in [\ell]$. Then, we can define a \emph{strong active variable degree} by $\savd(\alpha) = \max\{|\actvarset(t^{\instate}) \cap Y| \mid Y \subseteq X, Y \preRel t^{\instate}, t \text{ has parent $t'$ with } \nodelabel(t') = [x\{\}]\}$.

\begin{theorem}\label{strongavdConphardnessTheorem}
Deciding whether $\savd(\alpha) \leq k$ for given $\alpha \in \regex_{\Sigma, X}$ and $k \in \mathbb{N}$ is $\conpclass$-hard.
\end{theorem}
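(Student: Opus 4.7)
The plan is to reduce from the coNP-complete problem UNSAT. Given a 3CNF formula $\phi$ with variables $v_1,\ldots,v_n$ and clauses $C_1,\ldots,C_m$ (WLOG every $v_i$ occurs with both polarities, by padding $\phi$ with suitable tautological clauses), I construct a regex $\alpha_\phi$ over variable set $X = \{y_1,\ldots,y_m,z\}$ as
\[
\alpha_\phi \;=\; \gamma_1\gamma_2\cdots\gamma_n \cdot z\{\eword\} \cdot y_1 y_2 \cdots y_m,
\]
where $\gamma_i = (\gamma_i^+ \altop \gamma_i^-)$ and $\gamma_i^{\pm}$ is the concatenation of $y_j\{\eword\}$ over all $j$ such that the literal $v_i^{\pm}$ occurs in $C_j$. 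The construction is clearly polynomial and syntactically well-formed, and the aim is to show that $\savd(\alpha_\phi) \leq m-1$ iff $\phi$ is unsatisfiable; this gives the reduction.

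The intuition is standard: the alternations in the $\gamma_i$'s encode a truth assignment $\tau \in \{+,-\}^n$, the $\tau$-path through $\gamma := \gamma_1\cdots\gamma_n$ opens exactly the clause-variables $y_j$ with $C_j$ satisfied by $\tau$, and the trailing recalls $y_1\cdots y_m$ keep every $y_j$ ``active'' downstream. The definition $z\{\eword\}$ provides a distinguished target node $t_z$ -- the unique $\eword$-child of $z\{\eword\}$. From $t_z^{\instate}$ the forward computation only uses $\eword$-transitions and the recalls, so $\actvarset(t_z^{\instate}) = \{y_1,\ldots,y_m\}$, and the sets $Y$ with $Y \preRel t_z^{\instate}$ are precisely the subsets of satisfaction-sets of assignments. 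Hence $\max\{|Y \cap \actvarset(t_z^{\instate})| : Y \preRel t_z^{\instate}\}$ is the maximum number of simultaneously satisfiable clauses, which equals $m$ iff $\phi$ is satisfiable. This already yields the easy direction $\phi \in \threesat \Rightarrow \savd(\alpha_\phi) \geq m$.

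The main obstacle is to control the remaining candidate targets, i.e., the $\eword$-children $t$ of the $y_j\{\eword\}$-definitions inside some $\gamma_i^{\pm}$, so that no such node can push $|Y \cap \actvarset|$ up to $m$ when $\phi$ is unsatisfiable. Suppose $t$ sits at the $r$-th of $s$ definitions inside $\gamma_i^{\pm}$. From $t^{\instate}$ the forward computation is forced through the remainder of $\gamma_i^{\pm}$, which re-opens every $y_{j'}$ lying strictly after position $r$; these late $y_{j'}$'s therefore fail $t^{\instate} \postRel y_{j'}$ and drop out of $\actvarset(t^{\instate})$. For the later $\gamma_{i'}$ with $i' > i$, polarities can be chosen independently, so any single $y_{j'}$ can be avoided and no further variables are lost. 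Consequently, unless $r = s$, the $s-r$ distinct late variables give $|\actvarset(t^{\instate})| \leq m - (s-r) \leq m-1$ already; and when $r = s$, achieving $|Y \cap \actvarset| = m$ forces a partial assignment $(\epsilon_1,\ldots,\epsilon_{i-1},\pm)$ of $v_1,\ldots,v_i$ under which every clause already contains a true literal in $\{v_1,\ldots,v_i\}$, which extends trivially to a satisfying assignment of $\phi$. Combining the two cases gives $\savd(\alpha_\phi) \leq m-1 \iff \phi \notin \threesat$, proving coNP-hardness. The delicate part of the argument is this intermediate-target bookkeeping -- in particular pinning down exactly which $y_{j'}$ are necessarily re-opened on the forward path, so that the exclusion from $\actvarset$ is tight and the threshold $m-1$ works as intended.
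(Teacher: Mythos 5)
Your proof is correct and follows essentially the same strategy as the paper's: the paper reduces from the complement of set cover rather than from UNSAT, but both constructions encode ``cover/satisfy everything'' as the question of whether all recall-variables can be simultaneously opened on a single path to a designated definition node that is followed by recalls of all those variables, so that the $\savd$-threshold is exceeded exactly on yes-instances of the underlying NP-complete problem. One remark: your intermediate-target bookkeeping can be collapsed, since for \emph{any} admissible target $t^{\instate}$ the condition $|Y\cap\actvarset(t^{\instate})|=m$ already forces $\{y_1,\dots,y_m\}\subseteq Y$ and hence $\{y_1,\dots,y_m\}\preRel t^{\instate}$, and any path to any node of $\crudeAutomaton(\alpha_\phi)$ opens only the $y_j$ of clauses satisfied by the partial assignment traced so far, so a satisfying assignment follows uniformly without distinguishing the cases $r<s$ and $r=s$.
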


\begin{proof}
We devise a reduction from the set cover problem:\smallskip\\
\begin{tabular}{ll}
\emph{Input}: &finite set $\mathcal{U}$, $\{B_1, B_2, \ldots, B_n\} \subseteq \mathcal{P}(\mathcal{U})$, and $k \in \mathbb{N}$.\\
\emph{Question}: & $\exists j_1, j_2, \ldots, j_k \in [n]$ with $\bigcup_{i \in [k]} B_{j_i} = \mathcal{U}$?
\end{tabular}\smallskip\\
Let $\mathcal{U} = \{x_1, x_2, \ldots, x_m\}$, $B_i = \{y_{i, 1}, y_{i, 2}, \ldots, y_{i, \ell_i}\} \subseteq \mathcal{U}$ and $k \in \mathbb{N}$ be an instance of the set cover problem. We transform this instance into an $\alpha \in \regex_{\Sigma, X}$ with $\Sigma = \{\tb\}$ and $X = \mathcal{U} \cup \{z\}$ as follows:
\begin{align*}
\beta_{i} &= y_{i, 1}\{\eword\} \: y_{i, 2}\{\eword\} \: \ldots \: y_{i, \ell_i}\{\eword\}\,, \hspace{1cm}  \text{for every $i \in [n]$,}\\
\gamma &= (\beta_{1} \altop \beta_{2} \altop \ldots \altop \beta_{n})^k\,,\\
\alpha &= z\{\eword\} \: \gamma \: \tb \: x_1 \: x_2 \ldots x_m \: z\,.
\end{align*}
We shall show that $\savd(\alpha) > n$ if and only if there are $j_1, j_2, \ldots, j_k \in [n]$ with $\bigcup_{i \in [k]} B_{j_i} = \mathcal{U}$. Let $t \in \crudeAutomatonNodes(\alpha)$ be the node that corresponds to the last occurrence of $\eword$ in $\gamma$. We observe that $\savd(\alpha) = \max\{|\actvarset(t^{\instate}) \cap Y| \mid Y \subseteq X, Y \preRel t^{\instate}\}$ and, since $\actvarset(t^{\instate}) = X$ obviously holds, $\savd(\alpha) > n$ is equivalent to $X \preRel t^{\instate}$. \par
For every $i \in [n]$, let $u_i = \open{y_{i, 1}}\close{y_{i, 1}}\,\open{y_{i, 2}}\close{y_{i, 2}} \ldots \open{y_{i, \ell_i}}\close{y_{i, \ell_i}}$. We note that $\lang(\canonicalNFA(\alpha)) = \{\open{z} \close{z} u_{j_1} u_{j_2} \ldots u_{j_k} \tb x_1 x_2 \ldots x_m z \mid j_i \in [n], i \in [k]\}$. Consequently, there are $j_1, j_2, \ldots, j_k \in [n]$ with $\bigcup_{i \in [k]} B_{j_i} = \mathcal{U}$ if and only if there is some $w = u \tb v \in \lang(\canonicalNFA(\alpha))$ with $|u|_{\open{y}} \geq 1$, for every $y \in X$. The second statement is, by definition of the relation $\preRel$, equivalent to $X \preRel t^{\instate}$. 
\end{proof}

In addition to the hardness of computing the strong active variable degree, it is also not entirely clear, how it could be used in the sense of Thm.~\ref{regexBoundedSCDTheorem}.\par 
By transforming the $\MFA(k)$ from Thm.~\ref{regexBoundedSCDTheorem} into a regex (see~\cite{Schmid2016} for details), we obtain the following corollary, which is worth mentioning, since deciding whether for a given $k$-variable regex there is an equivalent $(k-1)$-variable regex is undecidable (see~\cite{Freydenberger2013}).

\begin{corollary}
Every $\alpha \in \regex^{\avd\leq k}_{\Sigma, X}$ can be effectively transformed into a $\beta \in \regex_{\Sigma, X'}$ with $|X'| = k$ and $\lang(\alpha) = \lang(\beta)$.
\end{corollary}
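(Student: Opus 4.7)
The plan is to chain two effective constructions. First, starting from $\alpha \in \regex^{\avd \leq k}_{\Sigma, X}$, I would apply the construction from the proof of Theorem~\ref{regexBoundedSCDTheorem} to obtain the automaton $M'_{\alpha} \in \MFA(k)$ with $\lang(M'_{\alpha}) = \lang(\alpha)$. The crucial point is that $M'_{\alpha}$ uses exactly $k$ memories even when $|X| > k$: the memory lists $\memorylist \in (X \cup \{\notinuse\})^k$ carried inside the states are precisely the device that lets the $k$ physical memories of $M'_{\alpha}$ dynamically stand in for the $|X|$ nominal variables of $\alpha$. Second, I would invoke the effective back-translation from $\MFA(k)$ to regex established in~\cite{Schmid2016}, which produces, for any input $\MFA(k)$, an equivalent regex whose variable set has cardinality exactly $k$. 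Applying this translation to $M'_{\alpha}$ yields $\beta \in \regex_{\Sigma, X'}$ with $|X'| = k$ and $\lang(\beta) = \lang(M'_{\alpha}) = \lang(\alpha)$, which is the claim.

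The only step that is not immediate from what has already been proved is the fact that the $\MFA$-to-regex conversion preserves the number of memories as the number of variables. This is the structural analogue of Kleene's theorem for memory automata: memory $\ell$ of the $\MFA$ is simply identified with a fresh variable $x_\ell$, every $\open{\ell}$- and $\close{\ell}$-transition becomes the left, respectively right, boundary of a definition $x_\ell\{\cdot\}$, and every memory recall transition on $\ell$ becomes an occurrence of $x_\ell$, after which the classical state elimination procedure applies. I expect this bookkeeping to be the main (though standard) technical obstacle, and since it is worked out in~\cite{Schmid2016}, the corollary follows by assembling the two constructions.
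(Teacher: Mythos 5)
Your proposal matches the paper's argument exactly: the paper derives this corollary in a single sentence by converting the $k$-memory automaton $M'_{\alpha}$ from Theorem~\ref{regexBoundedSCDTheorem} back into a regex via the $\MFA(k)$-to-regex construction of~\cite{Schmid2016}, which is precisely your two-step chain. Your additional remarks on why that back-translation preserves the memory count as the variable count are a reasonable gloss on what~\cite{Schmid2016} establishes, so nothing is missing.
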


\section{Memory-Deterministic Regex}

Considering $\NFA$ as a matching tool for classical regular expressions, their non-determinism could be considered harmless: the computation may branch in every step, but all parallel branches will differ only in their current states. Consequently, we can handle all possible parallel branches of an $\NFA$-computation by maintaining a set of \emph{active} states, which only causes a factor of $|Q|$ compared to the linear running time of a $\DFA$ (this can also be considered as determinising an $\NFA$ ``on-the-fly'').\footnote{Technically, we only get a factor $|Q|$ if $|\delta(q, a)|$ is constant for all $q$ and $a$, but this is the case for $\NFA$ obtained from regular expressions (see Thompson~\cite{Thompson1968} and Section~\ref{sec:RegexToMFA}).} Considering the fact that transforming regular expressions to $\DFA$ may cause exponential size blow-ups, while $\NFA$ of asymptotically the same size can be easily obtained, this additional factor of $|Q|$ is often acceptable.\par
For more complicated automata, e.\,g., with additional storage, it is often the case that the deterministic variant can be handled easily (but is of weak expressive power), while non-determinism causes undecidability or intractability. A typical way to approach this problem is to restrict the nondeterminism, hoping to find a more appealing balance between expressive power and complexity. \emph{Purely} deterministic $\MFA$ have been used in~\cite{FreydenbergerSchmidJCSS} to define \emph{deterministic} regex, which can be matched efficiently (in time $\bigO(|\Sigma||\alpha|^2 + k|w|)$, where $k$ is the number of variables), but, on the other hand, seem to be unnecessarily restricted if efficient matchability is our main concern: deterministic regex do not cover classical regular expressions, and the class of deterministic regex languages does not contain the class of regular languages (note that the latter statement is stronger than the former).\par
Our goal is to find a class of regex that properly extends classical regular expressions and for which the nondeterminism is only as powerful (and therefore as harmless) as for classical regular expressions (or $\NFA$). Since the variables of regex (or the memories of $\MFA$) are responsible for intractability, the main idea is to impose determinism on memories, but allow the harmless kind of nondeterminism observed in classical $\NFA$. Formalising this somewhat vague objective is not an easy task. We shall next substantiate this claim by demonstrating that even very mild forms of nondeterminism are sufficient to make the acceptance problem of $\MFA$ intractable. In particular, this result suggests that our goal cannot be achieved by local restrictions on a syntactic level. \par
Let $M = (Q, \Sigma, \delta, q_0, F) \in \MFA(k)$. A state $q \in Q$ is called \emph{deterministic}\label{deterministicDef} if, for every $x \in \ekSigma \cup \memInstAlphabet_k$, there is at most one $x$-transition for $q$, and it is called \emph{$x$-restricted} for an $x \in \ekSigma \cup \memInstAlphabet_k$, if the existence of an $x$-transition for $q$ implies that $q$ has no $y$-transitions for any $y \in (\ekSigma \cup \memInstAlphabet_k) \setminus \{x\}$. The $\MFA$ $M$ is \emph{deterministic} if all states are deterministic and, for every $x \in [k] \cup \memInstAlphabet_k \cup \{\eword\}$, all states are also $x$-restricted.\footnote{This definition slightly differs from the one in~\cite{FreydenbergerSchmidJCSS}, but yields the same model.}

\begin{theorem}\label{memdethardnessTheorem}
The acceptance problem for $\MFA(k)$ is $\npclass$-complete, even if the input $\MFA$ $M = (Q, \Sigma, \delta, q_0, F)$ have the following restrictions: (1) $\Sigma = \{\ta, \tb\}$, (2) $M$ has no $\eword$-transitions, (3) for every $q \in Q$ and $x \in [k] \cup \memInstAlphabet_k \cup \{\eword\}$, $q$ is $x$-restricted, (4) every $q \in Q$ is either deterministic or satisfies $|\delta(q, \ta)| = 2$.
\end{theorem}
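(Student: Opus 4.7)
The plan is to prove membership and hardness separately. Membership in $\npclass$ is standard: any accepting computation of $M$ on $w$ can be encoded in polynomial length and verified in polynomial time, exactly as for the general regex matching problem. The work lies in showing $\npclass$-hardness under restrictions (1)--(4). I would reduce from $\threesat$: given $\phi = C_1 \wedge \cdots \wedge C_m$ over $x_1, \ldots, x_n$ with each $C_j = l_{j,1} \vee l_{j,2} \vee l_{j,3}$, I construct $M_\phi \in \MFA(n)$ and an input $w_\phi$ over $\Sigma = \{\ta, \tb\}$, obeying (1)--(4), such that $M_\phi$ accepts $w_\phi$ iff $\phi$ is satisfiable. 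The central idea: the only available source of nondeterminism is the binary $\ta$-branch permitted by (4), and this branch is used both to ``guess'' the truth assignment and to ``guess'' a satisfied literal of each clause; the assignment is then remembered in the $n$ memories, which is feasible without $\eword$-transitions because $\open{\ell}$ and $\close{\ell}$ do not consume input.

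Set $w_\phi = (\ta\tb\ta)^n\, Z_1 \cdots Z_m$ with $Z_j = \ta\ta\, y_{j,1} y_{j,2} y_{j,3}$, where $y_{j,t} = \ta$ if $l_{j,t}$ is positive and $y_{j,t} = \tb$ if negative. The variable gadget for $x_i$ has a start state $s_i$ with exactly two outgoing $\ta$-transitions, which launch a \emph{true}- and a \emph{false}-branch both consuming $\ta\tb\ta$ and ending in $s_{i+1}$: $s_i \xrightarrow{\ta} p_i^T \xrightarrow{\tb} q_i^T \xrightarrow{\open{i}} r_i^T \xrightarrow{\ta} u_i^T \xrightarrow{\close{i}} s_{i+1}$ leaves memory $i$ with content $\ta$, whereas $s_i \xrightarrow{\ta} p_i^F \xrightarrow{\open{i}} q_i^F \xrightarrow{\tb} r_i^F \xrightarrow{\close{i}} u_i^F \xrightarrow{\ta} s_{i+1}$ leaves it with content $\tb$. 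Identifying $s_{n+1}$ with $c_1$, the clause gadget for $C_j$ reads the initial $\ta\ta$ of $Z_j$ through two consecutive binary $\ta$-branches to select $t \in \{1,2,3\}$: $c_j \xrightarrow{\ta} \{d_j^1, d_j^{23}\}$, then $d_j^1 \xrightarrow{\ta} c_j^1$ and $d_j^{23} \xrightarrow{\ta} \{c_j^2, c_j^3\}$. From $c_j^t$, a deterministic chain reads $y_{j,1} y_{j,2} y_{j,3}$ in which the $p$-th transition is a plain $y_{j,p}$-transition if $p \neq t$ and a memory-recall of the variable occurring in $l_{j,t}$ if $p = t$; all three chains converge in $c_{j+1}$, and the sole accepting state is $c_{m+1}$.

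Correctness is direct: an accepting computation picks one $\ta$-branch per variable gadget (inducing an assignment $\tau$) and one literal index per clause, and after the variable phase memory $i$ holds $\ta$ or $\tb$ according to $\tau(x_i)$; the recall at position $t$ of $Z_j$ then consumes that content and matches the fixed $y_{j,t}$ iff $l_{j,t}$ is true under $\tau$, so $w_\phi \in \lang(M_\phi)$ iff every clause is satisfied by $\tau$. The restrictions hold by inspection: the alphabet is $\{\ta,\tb\}$; no $\eword$-transitions are ever introduced; every state has either only $\ta$-transitions (one or two of them, never mixed with others) or a single outgoing transition on one non-$\ta$ symbol or memory-op, so each state is $x$-restricted for all $x \neq \ta$, and the mixed-case constraint of (4) is vacuously satisfied since no state is mixed. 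The main obstacle I expect is the variable gadget: because $\eword$-transitions are forbidden and the binary $\ta$-branch must precede the point where the two branches diverge, both branches have to consume the same input factor yet leave memory $i$ in two distinguishable states; the asymmetric placement of $\open{i}$ and $\close{i}$ around the shared $\tb$ displayed above is the crux, and everything else (the clause gadget, the consistency of the recalls, and the verification of (1)--(4)) is bookkeeping.
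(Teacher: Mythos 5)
Your proof is correct, and it establishes the theorem by a genuinely different reduction than the paper's. The paper reduces from $\oneinthreethreesat$ \emph{without negated variables}: it uses $2n$ memories ($x_i$ and $\overline{x_i}$), lets the single nondeterministic $\ta$-branch in each variable gadget decide which of the two gets the symbol $\ta$ (the other remaining unopened, hence recalling as $\eword$), and then uses a completely \emph{deterministic} clause gadget that recalls all three literal-memories in sequence before a $\tb$; the consumed factor is $\ta\tb$ exactly when precisely one of the three memories holds $\ta$, which is why the exactly-one semantics of 1-in-3 SAT (and monotonicity) is needed. You instead reduce from ordinary $\threesat$ with $n$ memories, encode \emph{false} as the symbol $\tb$ rather than as an undefined memory, and add a second layer of binary $\ta$-branching inside each clause gadget to nondeterministically select which literal is verified by a memory recall while the other two positions are read as fixed symbols. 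Your variable gadget (the asymmetric placement of $\open{i}$/$\close{i}$ around the shared factor $\ta\tb\ta$) and your clause gadget both respect restrictions (1)--(4), the single-symbol memory contents keep the input alignment exact, and the equivalence with satisfiability goes through as you argue. What each approach buys: the paper's clause gadgets are deterministic and the whole construction is smaller and arguably cleaner, at the price of starting from a less standard NP-hard problem; yours starts from plain $\threesat$ and handles negation directly, at the price of extra nondeterministic branching (still binary on $\ta$, so within the stated restrictions). Both correctly leave $\npclass$-membership to the standard argument, as the paper itself does.
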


\begin{proof}
We conduct a reduction from $\oneinthreethreesat$ without negated variables. To this end, let $C = (c_1, c_2, \ldots, c_m)$ be a set of clauses $c_i = \{y_{i, 1}, y_{i, 2}, y_{i, 3}\}$, $1 \leq i \leq m$, with $\bigcup^m_{i = 1} c_i = \{x_1, x_2, \ldots, x_n\}$. We define an $\MFA(2n)$, which, for every $i \in [n]$, has a memory $x_i$ and a memory $\overline{x_i}$. For every $i \in [n]$, we construct the component shown in Fig.~\ref{memdethardnessFigure}(a) and, for every $i \in [m]$, we construct the component shown in Fig.~\ref{memdethardnessFigure}(b) (note that $\open{x}$ and $\close{x}$ are compressed to $\openshort{x}$ and $\closeshort{x}$, respectively). In order to obtain $M$, we combine these components by joining some of their states (joining two states means that they will be the \emph{same} state in $M$). More precisely, we join every $p_{i}$ with $t_{i}$, $0 \leq i \leq n-1$, we join $t_n$ with $r_0$, and we join every $r_{i}$ with $s_{i}$, $1 \leq i \leq m-1$. Finally, we let $p_0$ be the start state and $s_m$ the only accepting state.\par
Next, we show that $C$ is $1$-in-$3$ satisfiable if and only if $(\ta \ta \tb)^{n} (\ta \tb)^{m}$ is accepted by $M$. In each computation, the $\MFA$ $M$ will initially read the word $(\ta \ta \tb)^{n}$ (which happens in the components shown in Fig.~\ref{memdethardnessFigure}(a)), and the $(2i)^{\text{th}}$ occurrence of $\ta$ will be stored in either memory $x_i$ or $\overline{x_i}$. Then, in the components shown in Fig.~\ref{memdethardnessFigure}(b), $M$ will read $m$ occurrences of $\tb$, where the $j^{\text{th}}$ occurrence of $\tb$ is directly preceded by the contents of the memories corresponding to clause $c_j$. The word consumed in this second part is $(\ta \tb)^{m}$ if and only if every clause contains exactly one memory that stores $\ta$.\par
Consequently, transforming a CNF-formula into an $\MFA$ as described above is a polynomial reduction from $\oneinthreethreesat$ without negated variables to the acceptance problem for $\MFA$. Moreover, the $\MFA$ obtained by this reduction satisfies the structural restrictions of the statement of the theorem (see Figures.~\ref{memdethardnessFigure}(a)~and~\ref{memdethardnessFigure}(b)).
\end{proof}

\begin{figure}
\begin{center}
\begin{tabular}{c c}
\resizebox{4.5cm}{!}
{\includegraphics{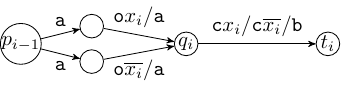}}
&
\resizebox{5.5cm}{!}
{\includegraphics{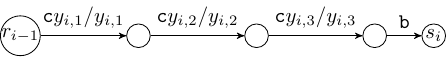}}\\
$(a)$ & $(b)$
\end{tabular}
\end{center}
\caption{Illustrations for the proof of Thm.~\ref{memdethardnessTheorem}.}
\label{memdethardnessFigure}
\end{figure}

The $\MFA$ of Thm.~\ref{memdethardnessTheorem} are quite restricted: the alphabet is binary and there are no $\eword$-transitions; moreover, each state has either just one outgoing transition or it has exactly $2$ $\ta$-transitions and not other transitions. Hence, they can be considered as being almost deterministic. In particular, the nondeterminism of the $\MFA$ of Thm.~\ref{memdethardnessTheorem} are especially restricted with respect to memories, since none of the non-deterministic branching points manipulate a memory. 
The actual problem seems to be that the undesired memory-nondeterminism does not present itself as a local nondeterministic choice, i.\,e., as two different transitions applicable in the same state that manipulate memories in different ways, but instead can arise much later in different computational branches that were created by a nondeterministic choice that seemingly does not cause memory-nondeterminism. This also suggests that a suitable restriction of the memory-nondeterminism can probably not be formulated as a local property for each separate state. Next, we define a property of $\MFA$ that covers our intuition of memory-determinism, but is rather complex in the sense that it depends on all possible computations of the $\MFA$, rather than on local properties of the transition function.\par

\subsection{Synchronised Memory Automata}

In the following, we shall show that the acceptance problem for so-called \emph{synchronised} $\MFA$ can be solved efficiently. In order to do this, we first need some algorithmic preliminaries.\par
Our computational model is the standard unit-cost RAM with logarithmic word size. We assume that $\NFA = (Q, \Sigma, \delta, q_0, F)$ (note that this includes $\MFA$, since they are, syntactically, $\NFA$) are given as directed graphs with vertices $Q$ (with special markers for the initial and accepting states) and $\delta$ is represented as edge-labels from $\Sigma$. Moreover, we assume that the out-degree is constant and for every vertex there is at most one symbol such that outgoing edges are labelled with this symbol. Hence, we can assume that the directed, edge-labelled graphs are represented by storing for each vertex the symbol for which outgoing edges exist, and also storing a set of the (constantly many) reachable vertices. From these assumptions, we directly conclude the following:
\begin{itemize}
\item The number of edges is $\bigO(|Q|)$, which also implies that $|\Sigma| = \bigO(|Q|)$ and therefore $|M| = \bigO(|Q|)$.
\item Given a vertex $q$ and $x \in \Sigma$, we can retrieve all $p$ with $p \in \delta(q, x)$ in constant time.
\item A breadth-first search can be performed in time $\bigO(|Q|)$.
\end{itemize}

\begin{remark}
The assumptions made above generally constitute a restriction to $\NFA$. They are nevertheless justified in our case, since they are all satisfied for the $\NFA$ and $\MFA$ obtained from regex, i.\,e., $\canonicalNFA(\alpha)$ and $\canonicalMFA(\alpha)$ (see Section~\ref{sec:prelim}). 
\end{remark}

For a sequence of memory instructions and $\eword$-symbols $C = (c_1, c_2, \ldots, c_n) \in (\memInstAlphabet_k \cup \{\eword\})^n$, we define $\compressedInstructions(C)$ as follows. For every $i \in [n]$, $c_i \in \compressedInstructions(C)$ if and only if $c_i \neq \eword$ and $c_i \in \{\open{x}, \close{x}\}$ implies that $c_j \notin \{\open{x}, \close{x}\}$, for every $j$ with $i < j \leq n$. Intuitively speaking, for every $x \in [k]$, we put only the very last occurrence (with respect to the sequence $C$) of any $\open{x}$ or $\close{x}$ into $\compressedInstructions(C)$ and ignore all the others. Obviously, applying transitions labelled with $c_1, c_2, \ldots, c_n$ in this order has the same effect as applying the memory instructions of $\compressedInstructions(c_1, c_2, \ldots, c_{n})$ in any order. A set $C \subseteq \memInstAlphabet_k$ is \emph{reduced} if, for every $x \in [k]$, $|\{\open{x}, \close{x}\} \cap C| \leq 1$. We note that for every $C' \in (\memInstAlphabet_k \cup \{\eword\})^n$, $\compressedInstructions(C')$ is reduced. \par
For every $q, p \in Q$ and reduced $C \subseteq \memInstAlphabet_k$, we write $(q, C, \eword) \to_{\contracted} p$ if there is a path from $q$ to $p$ of transitions labelled with $c_1, c_2, \ldots, c_n \in \memInstAlphabet_k \cup \{\eword\}$ such that $\compressedInstructions(c_1, c_2, \ldots, c_{n}) = C$. Furthermore, for every $x \in \kSigma$, we write $(q, C, x) \to_{\contracted} p$ if $(q, C, \eword) \to_{\contracted} p'$ with $p \in \delta(p', x)$.\par
We call $(q, C, x) \to_{\contracted} p$ a \emph{contracted transition}\footnote{It is discussed in the Appendix (Section~\ref{sec:appendixContractedTransitions}), why we cannot afford to actually compute all those contracted transitions.} and we set $\deltaContr(q, x) = \{p \mid \exists \text{ reduced } C \subseteq \memInstAlphabet_k: (q, C, x) \to_{\contracted} p\}$. In the following, let $\sigma = |\Sigma| + k$. A transition $(q, x) \to p$ is called \emph{consuming} if and only if $x \in \Sigma_k$.

\begin{lemma}\label{computeReachSetsLemma}
All sets $\deltaContr(q, x)$, $q \in Q$, $x \in \ekSigma$, can be computed in time $\bigO(|Q|^2\sigma)$.
\end{lemma}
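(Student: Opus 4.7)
The plan is to decouple the memory-instruction prefix of a contracted transition from its final $x$-transition. For any label-sequence $c_1,\ldots,c_n\in \memInstAlphabet_k\cup\{\eword\}$, the set $\compressedInstructions(c_1,\ldots,c_n)$ is automatically reduced, so $p\in\deltaContr(q,x)$ simply means there is a path from $q$ to some state $q'$ whose transitions all carry labels in $\memInstAlphabet_k\cup\{\eword\}$, followed by an $x$-transition from $q'$ to $p$. Hence it suffices to compute, for every $q\in Q$, the set $R(q)$ of all states reachable from $q$ in the subgraph $G'$ of $M$ obtained by retaining only the transitions labelled with elements of $\memInstAlphabet_k\cup\{\eword\}$.

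I would first traverse $M$ once in time $\bigO(|Q|)$ to build $G'$, and then, for each $q\in Q$, run a breadth-first search in $G'$ starting at $q$ to obtain $R(q)$. Under the assumptions stated just before the lemma (constant out-degree and constant-time retrieval of $\delta(p,x)$), each such BFS runs in time $\bigO(|Q|)$, so computing all $R(q)$ together costs $\bigO(|Q|^2)$ time. In the second phase, I iterate over all pairs $(q,x)\in Q\times\ekSigma$ and assemble $\deltaContr(q,x):=\bigcup_{p\in R(q)}\delta(p,x)$; each pair requires $\bigO(|R(q)|)=\bigO(|Q|)$ work, since for every $p\in R(q)$ the set $\delta(p,x)$ is accessible in constant time and has constant size. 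Because $|\ekSigma|=|\Sigma|+k+1=\bigO(\sigma)$, the number of pairs is $\bigO(|Q|\sigma)$, yielding the total running time $\bigO(|Q|^2\sigma)$.

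The only conceptual step is the initial observation: because the definition of $\deltaContr(q,x)$ existentially quantifies over the reduced set $C$, and because $\compressedInstructions$ always produces a reduced set, one never needs to track or enumerate concrete memory-instruction sequences; plain reachability in $G'$ is enough. Everything else is a routine combination of $|Q|$ breadth-first searches and a single sweep over state-symbol pairs, so I do not anticipate any serious obstacle beyond making the accounting for $|\ekSigma|$ versus $\sigma$ explicit.
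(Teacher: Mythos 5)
Your proposal is correct and matches the paper's own argument: for each $q$ the paper likewise performs a breadth-first search over the non-consuming transitions (yielding your $R(q)$, which it identifies with $\deltaContr(q,\eword)$) and then accumulates $\bigcup_{p}\delta(p,x)$ for all $x\in\kSigma$ during a second traversal, with the same $\bigO(|Q|^2\sigma)$ accounting. The key observation you isolate---that the existential quantification over reduced $C$ reduces the problem to plain reachability---is exactly what the paper's construction relies on implicitly.
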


\begin{proof}
For every $q \in Q$, we proceed as follows. First, we start a breadth-first search at $q$ that only considers non-consuming transitions and we build the corresponding tree (i.\,e., the breadth-first search tree), which requires time $\bigO(|Q|)$ (see the explanations from above). The states of this tree is exactly the set $\deltaContr(q, \eword)$. Next, for every $x \in \kSigma$, we initialise a set $A_x = \delta(q, x)$, and then we traverse the tree top-down (i.\,e., we repeat the breadth-first search) and every time we visit a state $p$, we add the set $\delta(p, x)$ to $A_x$ for every $x \in \kSigma$ (note that, as mentioned above, $|\delta(p, x)|$ is constant). After termination of this traversal, the sets $A_x$ are the sets $\deltaContr(q, x)$. These computations require time $\bigO(|Q|\sigma)$ for a fixed $q$, and therefore we need time $\bigO(|Q|^2\sigma)$ in order to compute all sets $\deltaContr(q, x)$.
\end{proof}

The \emph{contraction} of a computation $\vec{c}$ of some $M \in \MFA$ on some input $w$ is obtained by replacing every maximal sequence $(\vec{c}[i], \vec{c}[i + 1], \ldots, \vec{c}[j])$ whose corresponding computational steps are due to non-consuming transitions by $(\vec{c}[i], \vec{c}[j+1])$, or by $(\vec{c}[j])$, if $j = |\vec{c}|$. Obviously, if $\vec{c}$ is a contracted computation of some $\MFA$ $M$, then, for every $i \in [|\vec{c}|-1]$, $M$ has a contracted transition that can change $\vec{c}[i]$ to $\vec{c}[i + 1]$. \par
Two contracted computations $\vec{c}$ and $\vec{c'}$ for some $M \in \MFA$ are said to be \emph{synchronised} if and only if $\vec{c}[1] = \vec{c'}[1]$ and, for every $i \leq \min\{|\vec{c}|, |\vec{c'}|\}$, either $\vec{c}[i] = \vec{c'}[i]$ or $\vec{c}[i]$ and $\vec{c'}[i]$ only differ with respect to their states (i.\,e., their memory contents are the same, their memory statuses are the same, and their remaining inputs are the same). An $M \in \MFA$ is called \emph{synchronised} if any two contracted computations of $M$ on the same input are synchronised.\par

We next define some data-structures that are used by our algorithm for the acceptance problem of synchronised $\MFA$. Let $M = (Q, \Sigma, \delta, q_0, F)$ be a fixed $\MFA(k)$. Let $\synchmeminstrTable^M : Q \times \kSigma \to \mathcal{P}(\memInstAlphabet_k) \cup \{\nondef\}$ be a table such that, for every $q \in Q$ and $x \in \kSigma$, $\synchmeminstrTable^M[q, x]$ stores \emph{some} reduced $C \subseteq \memInstAlphabet_k$ for which there is a state $p$ and a contracted $x$-transition $(q, C, x) \to_{\contracted} p$ (or $\synchmeminstrTable^M[q, x] = \nondef$, if no such $C$ exists); note that for a fixed $M$, there are several valid possibilities for table $\synchmeminstrTable^M$. When $M$ is clear, we drop the superscript $M$.

\begin{lemma}\label{membershipPreprocessingLemma}
Table $\synchmeminstrTable$ can be computed in time $\bigO(|Q|^2\sigma)$.
\end{lemma}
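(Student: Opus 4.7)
The plan is to compute $\synchmeminstrTable$ by a single backward traversal of the non-consuming part of $M$, seeded by states with outgoing consuming transitions. By the representation assumption, each state of $M$ carries at most one outgoing symbol type, so when one exists, ``the'' consuming symbol of a state is well-defined, and we can recognise in $\bigO(1)$ per state whether it is a seed. A first pass over $Q$ sets $\synchmeminstrTable[q, x] := \emptyset$ whenever $q$'s unique outgoing label is some $x \in \kSigma$; this handles the contracted transitions whose non-consuming prefix is empty, in total time $\bigO(|Q|)$.

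The second phase propagates entries backward through non-consuming edges. Using a worklist initialised with the seeded states, I would repeatedly pop a state $q'$ and, for each non-consuming edge $(q, c, q')$ in $M$ with $c \in \memInstAlphabet_k \cup \{\eword\}$, and each $x \in \kSigma$ for which $\synchmeminstrTable[q', x]$ is already defined but $\synchmeminstrTable[q, x]$ is not, set $\synchmeminstrTable[q, x] := \compressedInstructions(c \cdot \synchmeminstrTable[q', x])$ and enqueue $q$. Once the worklist is empty, every $(q, x)$ that admits some contracted $x$-transition starting at $q$ has been assigned a witnessing reduced $C$. Each individual update is cheap, since $\synchmeminstrTable[q', x]$ is a reduced set of size $\bigO(k)$ and prepending a single instruction $c$ followed by reducing costs $\bigO(1)$ amortised per alteration.

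The main obstacle is bounding the total work by $\bigO(|Q|)$. The key ingredient is the synchronised hypothesis: if two non-consuming paths from a common state $q$ led to consuming $x$-transitions via reduced instruction sets $C_1 \neq C_2$, then on any input with a prefix reaching $q$ followed by the symbol corresponding to $x$, the two resulting contracted computations would, immediately after the $x$-transition, have configurations whose memory contents or memory statuses differ---not only their states---contradicting the definition of synchronisation. Hence the witness $C$ is essentially uniquely determined by $(q, x)$, so each table entry is set exactly once; combined with the constant out-degree of $M$ and the further structural restrictions that synchronisation imposes on the non-consuming subgraph (limiting how many distinct consuming symbols can be reached non-consumingly from any given state), this yields an amortised $\bigO(1)$ cost per traversed edge and hence total time $\bigO(|Q|)$. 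Making the synchronisation-based structural argument fully precise is the delicate part, since one has to rule out even subtle forms of memory-diverging non-determinism in parallel contracted computations.
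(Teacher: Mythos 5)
Your algorithm takes a genuinely different route from the paper's: the paper runs, for each source state $q$ separately, a forward breadth-first search over the non-consuming subgraph, accumulates a reduced set $C_p$ along the BFS tree top-down via $C_{p'} = \compressedInstructions(C_p \cup \{x\})$, and then reads off $\synchmeminstrTable[q,\cdot]$ from the states of that tree; you instead do one global backward data-flow propagation from the consuming transitions. The backward propagation itself is sound for \emph{correctness}: the table is only required to store \emph{some} reduced $C$ witnessing a contracted $x$-transition (the paper explicitly notes several valid tables exist), so a ``first writer wins'' policy needs no uniqueness argument at all, and your update rule $\compressedInstructions(c \cdot C)$ correctly computes the reduction of the extended path from the reduction of the suffix.

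The genuine gap is in the running-time argument, which is where you yourself locate the delicate part. First, the lemma carries no synchronisation hypothesis --- it is stated for an arbitrary fixed $M \in \MFA(k)$ --- so the ``synchronised hypothesis'' is simply not available. Second, even granting it, the two structural consequences you invoke are false. Uniqueness of the witness $C$ for a pair $(q,x)$ does not follow: $q$ may be unreachable (so no contradiction with synchronisation arises), and two distinct reduced sets such as $\{\close{1}\}$ and $\emptyset$ have identical effects on the memory statuses of a configuration in which memory $1$ is already closed, so synchronisation is not violated. Likewise, synchronisation does \emph{not} limit how many distinct consuming symbols are non-consumingly reachable from a state: an ordinary $\eword$-branching to an $\ta$-transition and a $\tb$-transition (as in any $\NFA$ for $\ta \altop \tb$) is perfectly synchronised, since on any fixed input at most one branch survives as a computation. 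Without these props, the natural cost of your worklist is $\Theta(|Q|\sigma)$ --- each of the $\bigO(|Q|)$ non-consuming edges is reconsidered once per symbol $x \in \kSigma$ whose entry propagates across it, and even initialising the $|Q| \times |\kSigma|$ table explicitly already costs $\Omega(|Q|\sigma)$ --- so the claimed $\bigO(|Q|)$ bound is not established. To repair the write-up, drop the synchronisation-based uniqueness argument entirely (it is not needed for correctness) and give an honest accounting of the propagation cost; note that the paper's own per-source BFS also spends $\bigO(|Q|)$ \emph{per state} $q$, and that in the only place the lemma is used the preprocessing is charged $\bigO(|Q|^2\sigma)$ anyway, so a bound of $\bigO(|Q|\sigma)$ or $\bigO(|Q|^2)$ would suffice for the application.
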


\begin{proof}
We first initialise $\synchmeminstrTable$ by $\synchmeminstrTable[q, x] = \nondef$ for every $q \in Q$ and $x \in \kSigma$, which can be done in time $\bigO(|Q|^2\sigma)$. Then, for every $q \in Q$, we proceed as follows. First, we start a breadth-first search in $q$ that only considers non-consuming transitions and we build the corresponding tree (i.\,e., the breadth-first search tree), which requires time $\bigO(|Q|)$. Next, we compute a set $C_p \subseteq \memInstAlphabet_k$ for every state $p$ in this tree, such that $\deltaContr(q, C_p, \eword) \to_{\contracted} p$. This can be done as follows. We first initialise sets $C_p = \emptyset$ for every $p \in Q$. Then we traverse the tree top-down and every time we reach a new state $p'$ from a parent state $p$ via some $x$-transition (note that $x \in (\memInstAlphabet_k \cup \{\eword\})$), we set $C_{p'} = \compressedInstructions(C_p \cup \{x\})$ if $x \neq \eword$ and we set $C_{p'} = C_p$ otherwise. This requires time $\bigO(|Q|)$ (recall the explanations from the beginning of this subsection and also note that the computation of $\compressedInstructions(C_p \cup \{x\})$ only requires to add one element or to replace an element by a new one (this can be done by, e.\,g., a bit-vector implementation of the sets)). Finally, we traverse the tree another time top-down and for every visited state $p$ and every $x \in \kSigma$ with $\delta(p, x) \neq \emptyset$, we set $\synchmeminstrTable[q, x] = C_{p}$. This step requires time $\bigO(|Q|\sigma)$. \par
It can be easily verified that this procedure computes $\synchmeminstrTable$ correctly. Moreover, the total time required for these computations is $\bigO(|Q|^2 \sigma)$.
\end{proof}

Let $R : [k] \to \{\opened, \closed\}$ be a table and let $C \subseteq \memInstAlphabet_k$ be reduced. We define 
\begin{equation*}
(R \cup C)[i] = 
\begin{cases} 
\opened, & \text{if $\open{i} \in C$}\,,\\
\closed, & \text{if $\close{i} \in C$}\,,\\
R[i], & \text{else}\,.
\end{cases} 
\end{equation*}

The intuition here is that $R$ describes the memory statuses of some configuration of an $\MFA$ and $C$ describes some memory instructions (in the form of a reduced set of memory instructions as defined above). Then, $R \cup C$ simply describes the memory statuses after applying the instructions from $C$.

For a word $w \in \Sigma^*$, we store factors of $w$ by their start and end positions, which means that concatenating two adjacent factors or storing a factor in a program variable only requires constant time (note that since we assume a RAM with logarithmic word size, storing and manipulating positions of $w$ can be done in constant time). The \emph{longest common extension} data-structure $\lce_w$ is defined such that, for $i, j$, $1 \leq i < j \leq |w|$, $\lce_w(i, j)$ is the length of the longest common prefix of $w[i..|w|]$ and $w[j..|w|]$ (which can be retrieved in constant time). In particular, note that $\lce_w$ can be used in order to check in constant time whether a factor $w[i..j]$ is a prefix of a factor $w[i'..j']$, denoted by $w[i..j] \prefrel w[i'..j']$. In the following, we assume that we have $\lce_w$ at our disposal, which can be constructed in linear time (see, e.\,g.,~\cite{FischerHeun2006}).\par
Next, we define an algorithm (see Fig.~\ref{acceptanceAlgoFigure}) that, for a synchronised $M \in \MFA$ and a word $w$, decides whether or not $w \in \lang(M)$. Note that we use $\vec{\eword}$ and $\vec{\closed}$ as short hand for length-$k$ vectors each component of which is $\eword$ and $\closed$, respectively.

\begin{theorem}\label{matchingAlgoTheorem}
Let $M \in \MFA$ be synchronised and let $w \in \Sigma^*$. On input $(M, w)$, algorithm $\SynchMemAlgo$ decides whether or not $w \in \lang(M)$, and it can be implemented such that it has running-time $\bigO(|w||Q|^3\sigma)$.
\end{theorem}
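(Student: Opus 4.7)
The plan is to decompose the argument into two parts: (i) a correctness proof for $\SynchMemAlgo$ on a synchronised $M \in \MFA(k)$, which crucially exploits the synchronisation hypothesis to collapse all parallel branches into one shared memory configuration; and (ii) a running-time analysis built on the preprocessing supplied by Lemmas~\ref{computeReachSetsLemma} and~\ref{membershipPreprocessingLemma} together with the $\lce_w$ structure.

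For correctness, I would establish the following invariant maintained by the main loop: after the algorithm has processed the prefix $w[1..i]$, the active set $A_i \subseteq Q$ equals the set of states $q$ such that some contracted computation of $M$ on input $w$ is in state $q$ with remaining input $w[i+1..|w|]$, and the single pair $(\vec u, \vec r)$ of memory contents and statuses stored by the algorithm coincides with the memory part of \emph{every} such computation. The second clause is exactly what synchronisation buys: any two computations of $M$ on $w$ that have read the same prefix agree on memory contents, memory statuses, and remaining input, so maintaining a single shared copy is justified. The inductive step iterates over $q \in A_i$ and each consuming symbol $x$ applicable at position $i+1$, which is either $x = w[i+1]$ or $x \in [k]$ with the current content of memory $x$ being a prefix of $w[i+1..|w|]$ (checked in $\bigO(1)$ via $\lce_w$). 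For every such $(q,x)$ the successor states are read off from the precomputed $\deltaContr(q,x)$, and the memory configuration is updated by applying $\synchmeminstrTable[q,x]$, which by synchronisation equals the unique correct new configuration. Acceptance then reduces to testing $A_{|w|} \cap F \neq \emptyset$ after one final round of non-consuming contracted transitions $\deltaContr(\cdot,\eword)$.

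For the running time, the preprocessing costs $\bigO(|Q|^2 \sigma)$ to build $\deltaContr$ (Lemma~\ref{computeReachSetsLemma}), $\bigO(|Q|)$ to build $\synchmeminstrTable$ (Lemma~\ref{membershipPreprocessingLemma}), and $\bigO(|w|)$ to build $\lce_w$. The main loop executes at most $|w|$ iterations because each iteration advances the input pointer by at least one position: a letter transition consumes one symbol, a non-empty recall consumes $|u_x| \geq 1$ symbols, and empty recalls are absorbed into $\synchmeminstrTable$. A straightforward accounting gives $\bigO(|Q|^2 \sigma)$ per iteration, since one visits $\bigO(|Q|)$ states in $A_i$, enumerates $\bigO(\sigma)$ applicable consuming symbols per state, and reads off up to $\bigO(|Q|)$ successors per $(q,x)$ from $\deltaContr$, with constant-time memory updates via $\synchmeminstrTable$. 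A further factor of $|Q|$ arises from the implementation cost of consolidating these candidate successors into the next active set and verifying that the memory updates produced from different active states coincide (guaranteed by synchronisation, but requiring confirmation in the implementation to justify storing a single shared configuration). Summing over $|w|$ iterations yields the claimed $\bigO(|w||Q|^3 \sigma)$ bound.

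The hardest part will be turning the semantic synchronisation property—which is phrased as a condition over all pairs of \emph{global} computations of $M$—into an operational invariant the algorithm can maintain step by step without ever enumerating computations. The delicate point is justifying that, across all contracted consuming transitions taken from all active states in one iteration, the resulting memory configurations really do coincide, so that one shared copy suffices; this is precisely where synchronisation is used, and $\synchmeminstrTable$ is designed to supply one arbitrary correct witness per $(q,x)$. Secondary technicalities—variable-length memory recalls (including the empty case) and the non-consuming tail after the last input symbol—are handled uniformly via $\lce_w$, $\synchmeminstrTable$, and a closing application of $\deltaContr(\cdot,\eword)$.
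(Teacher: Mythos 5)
Your overall strategy matches the paper's: maintain an active state set plus a \emph{single} shared memory configuration, justify the collapse via synchronisation (this is exactly the paper's Claim~1), and use the precomputed $\deltaContr$, $\synchmeminstrTable$ and $\lce_w$ for the per-step work. However, there is a genuine gap in your treatment of memory recall transitions on \emph{empty} memories. You assert that ``empty recalls are absorbed into $\synchmeminstrTable$'' and hence that every iteration advances the input pointer, so the main loop runs at most $|w|$ times. This is false: $\synchmeminstrTable$ and $\deltaContr(\cdot,\eword)$ only contract paths of non-consuming transitions, i.e.\ those labelled with $\open{x}$, $\close{x}$ or $\eword$. A recall of memory $x$ with $U[x]=\eword$ is a \emph{consuming-type} transition (it appears in $\Lambda$ because $\eword \prefrel v$) that consumes zero symbols, and chains of such recalls are not contracted away. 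Without further measures your algorithm can loop forever on such chains, and your acceptance test (``$A_{|w|}\cap F\neq\emptyset$ after one final round of $\deltaContr(\cdot,\eword)$'') can miss words that are accepted only after additional empty-memory recalls once the input is exhausted. The paper handles this with the counter $b$: it allows up to $|Q|$ consecutive iterations that recall empty memories and proves (Claim~2) that exceeding this bound certifies an unproductive loop, while staying within it suffices to reach any accepting state reachable via empty recalls.

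This omission also undermines your running-time accounting. The true iteration count is $\bigO(|w||Q|)$, not $|w|$, precisely because between any two input-consuming steps there can be up to $|Q|$ empty-recall iterations; that is where the paper's third factor of $|Q|$ comes from. Your ``further factor of $|Q|$ from consolidating candidate successors and verifying that memory updates coincide'' is not needed for that purpose (synchronisation guarantees coincidence, so no verification is required, and the per-iteration cost of building the next active set is already $\bigO(|Q|^2 k)$ from enumerating $q\in A$, $x\in\Lambda$ and the up-to-$|Q|$ successors in each $\deltaContr(q,x)$). You arrive at the stated bound, but by attributing the extra $|Q|$ to the wrong source; the correct accounting is $\bigO(|Q|^2 k)$ per iteration times $\bigO(|w||Q|)$ iterations.
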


\begin{figure}
\begin{algorithm}[H]
\SetAlgoNoEnd
\LinesNumbered
\SetSideCommentRight
\SetFillComment
\SetKwInOut{Input}{Input}
\SetKwInOut{Output}{Output}
\Input{Synchronised $M \in \MFA(k)$, $w \in \Sigma^*$.}
\Output{\textsf{Yes} if and only if $w \in \lang(M)$.}

For all $q \in Q$, $x \in \ekSigma$, compute sets $\deltaContr(q, x)$, compute table $\synchmeminstrTable$\label{preprocess}\;
$A := \{q_0\}$, $U := (u_1, \ldots, u_k) := \vec{\eword}$, $R := \vec{\closed}$, $v := w$, $b := 0$\;
\While{$b \leq |Q|$\label{mainLoop}}
{
      $\Lambda := \{x \in [k] \mid U[x] \prefrel v\} \cup \{v[1]\}$\label{LambdaLine}\;
	\If{$\exists q \in A, \exists x \in \Lambda : \deltaContr(q, x) \neq \emptyset$\label{mainLoopProperty}} 
		{ 
      	\textbf{if} $x \in \Sigma$ \textbf{then} $u := x$ \textbf{else} $u := U[x]$\label{setxLine}\;
		\textbf{if} $u = \eword$ \textbf{then} $b := b + 1$ \textbf{else} $b := 0$\label{countewordtrans}\;
		$R := R \cup \synchmeminstrTable[q,x]$\label{mainBodyStart}\;
		\For{$\ell \in [k]$ with $R[\ell] = \opened$\label{updateMemOne}}
	      	{
	      	$U[\ell] := U[\ell] u$\label{updateMemTwo}\;
		}
		$v := v[|u|+1..|v|]$\label{updatev}\;
		$A := \bigcup_{q \in A} \bigcup_{x \in \Lambda}\deltaContr(q, x)$\label{mainBodyEnd}\;
		\If{$(v = \eword) \wedge ((A \cup \bigcup_{q \in A} \deltaContr(q, \eword)) \cap F \neq \emptyset)$\label{acceptLineOne}}
		{
			\textbf{return} \textsf{Yes}\label{acceptLineTwo}\;
		}
		}
		\Else
			{
			\textbf{return} \textsf{No}\label{returnNoLine}\label{loopCondition}\;
			}
}
\textbf{return} \textsf{No}\;
\caption{$\SynchMemAlgo$}\label{synchMemAlgoLabelSecond}
\end{algorithm}
\caption{An algorithm for solving the acceptance problem for synchronised $\MFA$.}
\label{acceptanceAlgoFigure}
\end{figure}

\begin{proof}
All the following references to certain line numbers of the algorithm refer to Algorithm~\ref{synchMemAlgoLabelSecond} stated in Fig.~\ref{acceptanceAlgoFigure}.\par
We prove the correctness and the required running-time of the algorithm separately:\smallskip\\
\noindent\textbf{Correctness}: Let $m$ be the number of successful iterations of the main loop of the algorithm, i.\,e., iterations where the condition of Line~\ref{mainLoopProperty} is satisfied. By $\Lambda_i$, $A_i$, $U_i$, $R_i$ and $v_i$, we denote the values of the variables $\Lambda$, $A$, $U$, $R$ and $v$ at the beginning of the $i^{\text{th}}$ iteration, and by $q_i$ and $x_i$, we denote the state from $A_i$ and element from $\Lambda_i$, respectively, for which the condition of Line~\ref{mainLoopProperty} is satisfied at the $i^{\text{th}}$ iteration. \par
We note that at the $i^{\text{th}}$ iteration, the elements $A_i$, $U_i$, $R_i$ and $v_i$ encode a set of configurations 
\begin{equation*}
\mathcal{C}_{i} = \{(q, v_i, (U_i[1], R_i[1]), \ldots, (U_i[k], R_i[k])) \mid q \in A_i\} 
\end{equation*}
and, initially, $\mathcal{C}_{1}$ only contains the start configuration of $M$ on $w$.\smallskip\\ 
\noindent\emph{Claim $1$}: For every $i \in [m]$, $\mathcal{C}_{i}$ contains all configurations that can be reached from the initial configuration of $M$ on input $w$ by $(i-1)$ contracted transitions.\smallskip \\ 
\noindent\emph{Proof of Claim $1$}: The statement is obviously true for $i = 1$. Now let $i$ be arbitrary with $2 \leq i \leq m$ and assume that the statement holds for $i-1$. Let $c, c'$ be arbitrary configurations from $\mathcal{C}_{i-1}$ (possibly $c = c'$) with states $t_{i-1}$ and $t'_{i-1}$, respectively. \par
Since $M$ is synchronised and since $\mathcal{C}_{i-1}$ contains configurations with remaining input $v_{i-1}$ that can be reached from the initial configuration of $M$ on input $w$ by $(i-2)$ contracted transitions, it is not possible that there are contracted transitions $(t_{i-1}, C_{i-1}, y_{i-1}) \to_{\contracted} t_i$ and $(t'_{i-1}, C'_{i-1}, y'_{i-1}) \to_{\contracted} t'_i$ for $c$ and $c'$, respectively, such that $C_{i-1}$ and $C'_{i-1}$ have different effects on memory statuses $(R_{i-1}[1], \ldots, R_{i-1}[k])$ or $y_{i-1}$ and $y'_{i-1}$ cause different prefixes of $v_{i-1}$ to be consumed. Consequently, all contracted transitions applicable to configurations from $\mathcal{C}_{i-1}$ are such that they consume the same prefix of the remaining input $v_{i-1}$ (although the transitions may recall different memories, or one transition recalls a memory while the other consumes a single input symbol) and have the same effect on the memory statuses (although the actual memory instructions of the transitions might differ), i.\,e., they will lead to configurations that can only differ with respect to their states. Consequently, the set of all configurations that can be obtained by applying a contracted transition to a configuration from $\mathcal{C}_{i-1}$ can be obtained as follows: The new remaining input $v_i$ and the new memory configurations $(U_i[1], R_i[1]), \ldots, (U_i[k], R_i[k])$ (which are the same for all the configurations that can be obtained by applying a contracted transition to a configuration from $\mathcal{C}_{i-1}$) is obtained by carrying out one \emph{arbitrary} of these transitions (i.\,e., consuming a prefix of the input according to the transition and changing the memory statues $(U_{i-1}[1], R_{i-1}[1]), \ldots, (U_{i-1}[k], R_{i-1}[k])$ according to the memory instructions of the transition), while the new states can be obtained by collecting \emph{all} states that are reachable by \emph{any} contracted transition from some state of a configuration from $\mathcal{C}_{i-1}$ that is applicable (i.\,e., that consumes the first symbol from the remaining input or recalls a memory that stores a prefix of the remaining input).\par
We observe that in Line~\ref{LambdaLine}, we compute the set of all terminal symbols and memory recalls that can be part of an applicable transition, in Line~\ref{mainLoopProperty}, we check whether there is at least one applicable transition and if this is the case, we compute the new set $\mathcal{C}_{i}$ as described above in Lines~\ref{mainBodyStart}~to~\ref{mainBodyEnd}. Consequently, we can conclude that $\mathcal{C}_{i}$ contains all configurations that can be reached from a configuration of $\mathcal{C}_{i-1}$ by a contracted transition; thus, with the induction hypothesis, $\mathcal{C}_{i}$ contains all configurations that can be reached from the initial configuration of $M$ on input $w$ by $(i-1)$ contracted transitions.
\hfill (Claim $1$) $\square$\smallskip\\
By Claim~$1$, the algorithm searches all possible configurations that are reachable from the initial configuration of $M$ on input $w$. If among the current configurations there is an accepting one (i.\,e., the input is completely consumed and an accepting state is reached), then, due to Lines~\ref{acceptLineOne}~and~\ref{acceptLineTwo}, the algorithm terminates with output \textsf{Yes} (note that we also have to check whether an accepting state can be reached by non-consuming transitions). If there is no applicable transition for any of the current configuration from $\mathcal{C}_i$, then the condition in Line~\ref{mainLoopProperty} is not satisfied and therefore the algorithm terminates with output \textsf{No}. However, if the input has been completely consumed and the conditions in Line~\ref{acceptLineOne} is not satisfied, we cannot necessarily conclude that no accepting configuration is reachable, since there might be applicable transitions that recall empty memories and that eventually lead to an accepting configuration. Therefore, we proceed with the computation even though the remaining input is empty (note that in this case, Line~\ref{LambdaLine} is interpreted as $\Lambda = \{x \in [k] \mid U[x] = \eword\}$). \par
The only case not discussed so far is when we reach a loop of transitions that recall empty memories, but none of the traversed configurations are accepting (i.\,e., the remaining input is non-empty or none of the states is accepting). Due to the condition of Line~\ref{mainLoop}, which is not satisfied if the counter $b$ exceeds $|Q|$, and Line~\ref{countewordtrans}, in which $b$ is incremented if empty memories are recalled and reset to $0$ otherwise, the algorithm returns \textsf{No} if there is a sequence of at least $|Q| + 1$ consecutive transitions that recall empty memories. The following claim shows that this is correct (and therefore concludes the proof of correctness).\smallskip\\
\noindent\emph{Claim $2$}: If $b = |Q| + 1$ in Line~\ref{mainLoop}, then $w \notin \lang(M)$.\smallskip \\
\noindent\emph{Proof of Claim $2$}: A fundamental observation is that if for some $i$ and every $j$ with $i + 1 \leq j \leq i + d$, the set $\mathcal{C}_{j}$ is obtained from $\mathcal{C}_{j-1}$ by transitions that recall empty memories, and $(\bigcup^{i + d - 1}_{j = i} A_j) \cap A_{i + d} \neq \emptyset$, then a transition that recalls an empty memory is also applicable in iteration $i + d$ and therefore $A_{i + d + 1}$ is obtained from $A_{i + d}$ by transitions that recall an empty memory. This is due to the fact that in these iterations every empty memory stays empty and can therefore always be recalled if a state has such a transition. If $d$ is large enough, then, with respect to the visited states, there must exist a loop of transitions that recall an empty memory. More precisely, if $d > |Q|$, then, for every $j' \geq i + d$, there must be some state in $A_{j'}$ that allows a transition that recalls an empty memory. Therefore, $M$ is in an infinite loop of transitions that recall empty memories. We check whether this is happening by counting in $b$ (see Line~\ref{countewordtrans}) the number of such consecutive iterations caused by transitions that recall empty memories and interrupt the main loop (and return \textsf{No}) if $b$ properly exceeds $|Q|$.\par
If this happens with $v_i \neq \eword$, this is correct since $M$ cannot finish to consume its input in the loop of transitions that recall empty memories. Let us assume that in iteration $i$, i.\,e., when $M$ enters the loop, the remaining input is empty. In this case, we have to check whether an accepting state is reachable from an active one by performing transitions that recall empty memories, followed by a sequence of non-consuming transitions. If this is the case, then it must be possible to reach this state by at most $|Q|$ transitions that recall empty memories (followed by a sequence of non-consuming transitions). Consequently, it is sufficient to perform at most $|Q|$ more iterations and check whether accepting states can be reached from active states by non-consuming transitions, which is done in Lines~\ref{acceptLineOne}~and~\ref{acceptLineTwo}.
\hfill (Claim $2$) $\square$\smallskip\\
\noindent\textbf{Running-time}: We estimate the running-time by estimating the time required for the preprocessing in Line~\ref{preprocess}, the time required for one iteration of the main loop, and the maximum number of iterations of the main loop. According to Lemmas~\ref{computeReachSetsLemma}~and~\ref{membershipPreprocessingLemma}, the preprocessing can be done in time $\bigO(|Q|^2\sigma)$. Regarding an execution of the main loop, we observe that Line~\ref{countewordtrans} requires constant time and, since we can concatenate and compare factors in constant time (due to the $\lce_w$ data-structure), Lines~\ref{setxLine}~and~\ref{updatev} also require constant time. For executing each of the Lines~\ref{LambdaLine}~and~\ref{mainBodyStart}, as well as the complete loop of Lines~\ref{updateMemOne}~and~\ref{updateMemTwo}, time $\bigO(k)$ is sufficient (due to the $\lce_w$ data-structure), while the evaluation of the condition of Line~\ref{mainLoopProperty} requires time $\bigO(|Q|k)$. This leaves Line~\ref{mainBodyEnd}~and~\ref{acceptLineOne}, which require time $\bigO(|Q|^2k)$ and $\bigO(|Q|^2)$, respectively. Summing up, an execution of the main loop requires time $\bigO(|Q|^2k)$. Next, we note that the main loop is interrupted as soon as it is executed for $|Q| + 1$ times without reducing the remaining input; thus, it can be executed for $\bigO(|w||Q|)$ times in the worst case. We conclude that the total running-time of the algorithm is $\bigO(|Q|^2\sigma + |w||Q|^3k) = \bigO(|w||Q|^3\sigma)$.
\end{proof}

Technically, the class of $\alpha$ with synchronised $\canonicalMFA(\alpha)$ can be matched efficiently. Unfortunately, this class is of little use, since deciding membership to it is intractable.

\begin{figure}
\begin{center}
\begin{tabular}{c c c}
\resizebox{4cm}{!}
{\includegraphics{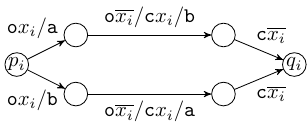}}
&
\resizebox{4cm}{!}
{\includegraphics{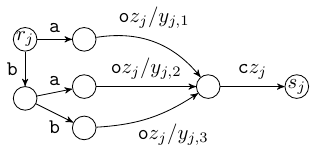}}
&
\resizebox{3.5cm}{!}
{\includegraphics{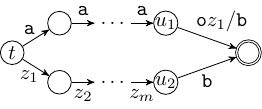}}\\
$(a)$ & $(b)$ & $(c)$
\end{tabular}
\end{center}
\caption{Illustrations for the proof of Thm.~\ref{synchronisedHardnesTheorem}.}
\label{memdethardnessFigureTwo}
\end{figure}

\begin{theorem}\label{synchronisedHardnesTheorem}
Deciding whether a given $\MFA$ is synchronised is $\conpclass$-hard.
\end{theorem}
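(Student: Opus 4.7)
The plan is to reduce $\threesat$ to the non-synchronisation problem, which would give $\conpclass$-hardness for synchronisation. Given a 3CNF formula $\phi$ with variables $x_1,\ldots,x_n$ and clauses $c_1,\ldots,c_m$ where the $k$-th literal of clause $c_j$ refers to variable $x_{i_{j,k}}$ with polarity $p_{j,k}\in\{\tzero,\tone\}$ ($p_{j,k}=\tone$ for a positive literal, $p_{j,k}=\tzero$ for a negative one), I would construct in polynomial time an $M_\phi\in\MFA(n)$ over alphabet $\{\tzero,\tone,\#\}$ that is not synchronised if and only if $\phi$ is satisfiable.

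The construction would have three phases. An \emph{assignment-reading phase} deterministically processes a prefix $b_1\cdots b_n\in\{\tzero,\tone\}^n$ of the input by looping $\open{x_i}$, reading $b_i$ (which is thereby appended to the open $x_i$), and $\close{x_i}$ for each $i$; because the read-transitions are forced by the input symbols, no state-level branching occurs and every computation on the same input leaves this phase with identical memory contents $x_i=(b_i,\closed)$. After consuming a $\#$, a \emph{clause-verification phase} handles each clause $c_j$ by first non-deterministically branching (in the state component only) to one of three successors indexed by $k\in\{1,2,3\}$ and then consuming a fixed three-symbol block $p_{j,1}p_{j,2}p_{j,3}$ such that the $k$-th position is consumed by a memory recall of $x_{i_{j,k}}$ and the other two by plain reads. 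Because recalls do not modify memory, all three branches stay synchronised on the memory component at every contracted step; branch $k$ survives clause $c_j$ exactly when the content of $x_{i_{j,k}}$ equals $p_{j,k}$, i.e., iff literal $\ell_{j,k}$ is true under the stored assignment. A \emph{final gadget} at the state $r_m$, reached only after all clauses have been verified, offers two non-deterministic $\eword$-continuations to the accept state $q_f$: a silent one making no memory changes, and a noisy one that performs $\open{x_1}$ immediately followed by $\close{x_1}$, resetting $x_1$ to $(\eword,\closed)$.

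For correctness, I would establish that $r_m$ is reachable on the input $b_1\cdots b_n\# p_{1,1}p_{1,2}p_{1,3}\#\cdots\# p_{m,1}p_{m,2}p_{m,3}$ iff the assignment $b_1\cdots b_n$ satisfies $\phi$. If $\phi$ is satisfiable, such an input exists and the two $\eword$-forks at $r_m$ produce two contracted computations on that input of the same length, both ending at $q_f$ with empty remaining input but disagreeing on the content of $x_1$, so $M_\phi$ is not synchronised. If $\phi$ is unsatisfiable, then $r_m$ is unreachable on every input and the noisy fork is never executed; the only memory-modifying transitions ever performed are those of the deterministic assignment phase, which all computations on the same input execute identically, so every pair of computations on the same input agrees at every contracted step on memory contents, memory statuses, and remaining input, and $M_\phi$ is synchronised.

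The main obstacle is to ensure that the clause-phase non-determinism does not spuriously break synchronisation in the unsatisfiable case. The key design choice is to encode satisfaction-checking entirely through memory \emph{recalls} (which are read-only) and to let the three-way branching only rearrange \emph{which} of the three positions of a clause block is consumed via a recall, while all three branches consume the same input prefix; then every state-level branching preserves memory and status components, and the only possible source of memory divergence between computations becomes the final $\eword$-fork, whose reachability coincides exactly with the existence of a satisfying assignment.
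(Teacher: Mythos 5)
Your overall strategy (reduce $\threesat$ to non-synchronisation, store a truth assignment in memories during a deterministic prefix, then expose a nondeterministic fork with divergent memory effects) is the same as the paper's, but your clause gadget does not actually enforce satisfiability, and this breaks the reduction. The problem is the step ``branch $k$ survives clause $c_j$ exactly when the content of $x_{i_{j,k}}$ equals $p_{j,k}$.'' A memory recall transition does not compare the memory content against a constant; it merely requires that the \emph{remaining input} continues with the memory content. Since the input is universally quantified in the definition of a synchronised $\MFA$, an adversarial input can always supply whatever the recalled memory happens to store. Concretely, for any assignment $b_1\cdots b_n$ and any choice of branch $k_j$ per clause, the input $b_1\cdots b_n\,\#\,w_1\,\#\cdots\#\,w_m$ with $w_j$ equal to $p_{j,1}p_{j,2}p_{j,3}$ except that position $k_j$ is replaced by $b_{i_{j,k_j}}$ makes branch $k_j$ survive clause $c_j$ regardless of whether the literal $\ell_{j,k_j}$ is true. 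Hence $r_m$ is reachable for \emph{every} formula, your final $\eword$-fork always produces two computations disagreeing on the content of $x_1$, and $M_\phi$ is non-synchronised even when $\phi$ is unsatisfiable. The reduction therefore maps all instances to ``no,'' and the claim ``if $\phi$ is unsatisfiable then $r_m$ is unreachable on every input'' is false.

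The paper's construction avoids exactly this trap by making the \emph{desynchronising fork itself} be the satisfiability test rather than gating it behind reachability of a single state: after a (input-)deterministic phase that loads complementary values into memories $x_i,\overline{x_i}$ and copies one literal-memory per clause into a clause-memory $z_j$, the fork at state $t$ leads to $u_1$ and $u_2$ along two paths that consume the same input \emph{only if} $z_1\cdots z_m=\ta^m$ (one path plain-reads $\ta$'s, the other recalls the $z_j$'s), and only beyond $u_1/u_2$ do the two branches treat memory $z_1$ differently. Thus two non-synchronised computations on a common input exist iff both $u_1$ and $u_2$ are reachable on that input, iff all $z_j$ store $\ta$, iff $C$ is satisfiable. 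To repair your proof you need an analogous mechanism: the two diverging continuations must be simultaneously traversable on a common input only when the stored assignment satisfies every clause, e.g., by pairing a plain read of the ``satisfied'' constant in one branch against a memory recall in the other, rather than relying on a recall alone to certify a literal's truth.
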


\begin{proof}
We conduct a reduction from $\threesat$ to the problem of deciding whether a given $\MFA$ is not synchronised. Let $C = (c_1, c_2, \ldots, c_m)$ be a set of clauses $c_i = \{y_{i, 1}, y_{i, 2}, y_{i, 3}\}$, $1 \leq i \leq m$, with $\bigcup^m_{i = 1} c_i = \{x_1, \overline{x_1}, x_2, \overline{x_2}, \ldots, x_n, \overline{x_n}\}$. We define an $\MFA(2n+m)$ $M$, which, for every $i \in [n]$, has a memory $x_i$ and a memory $\overline{x_i}$, and, for every $j \in [m]$, a memory $z_j$. We first construct, for every $i \in [n]$, the component shown in Fig.~\ref{memdethardnessFigureTwo}(a), and, for every $j \in [m]$, the component shown in Fig.~\ref{memdethardnessFigureTwo}(b). These components are combined by joining states $q_i, p_{i+1}$, $1 \leq i \leq n-1$, joining states $q_n, r_1$, and joining states $s_j, r_{j+1}$, $1 \leq j \leq m-1$. Finally, we construct the component shown in Fig.~\ref{memdethardnessFigureTwo}(c), we join states $s_{m}, t$ and let $p_1$ be the start state (the only accepting state is shown in Fig.~\ref{memdethardnessFigureTwo}(c)). \par
Obviously, this reduction can be computed in polynomial-time and it only remains to prove its correctness.\par
If there is an input $w$, such that $M$ can reach both state $u_1$ and state $u_2$ by completely consuming $w$, then, since the transition with source $u_1$ stores $\tb$ in memory $z_1$, while the one with source $u_2$ does not change the content of memory $z_1$ (which must store $\ta$, since otherwise we cannot reach $u_1$ and $u_2$ with the same input word), there are non-synchronised computations with respect to input $w \tb$. On the other hand, since the part of $M$ that consists in the components of Fig.~\ref{memdethardnessFigureTwo}(a)~and~(b) is completely deterministic and the only nondeterminism of $M$ relies in the state $t$, the only non-synchronised computations must be due to an input that can lead $M$ into both states $u_1$ and $u_2$. Consequently, $M$ is non-synchronised if and only if it can reach both states $u_1$ and $u_2$ by consuming the same input. Obviously, such an input $w$ exists if and only if it is possible for $M$ to reach state $t$ with all memories $z_{j}$, $1 \leq j \leq m$, storing $\ta$. \par
Now we assume that $C$ is satisfiable and consider the following computation of $M$. If $x_i$ is assigned value \emph{true}, then, in the component of Fig.~\ref{memdethardnessFigureTwo}(a), we store $\ta$ in $x_i$ and $\tb$ in $\overline{x_i}$, and if $x_i$ is assigned value \emph{false}, then we store $\tb$ in $x_i$ and $\ta$ in $\overline{x_i}$. Since every clause contains a \emph{true} literal, it is possible to traverse the components of Fig.~\ref{memdethardnessFigureTwo}(b) in such a way that every memory $z_j$, $1 \leq j \leq m$, stores value $\ta$ when we reach state $t$. On the other hand, if we can reach $t$ with all memories $z_{j}$, $1 \leq j \leq m$, storing $\ta$, then for every $j \in [m]$, there is at least one memory among the memories $y_{j,1}, y_{j,2}, y_{j,3}$ that stores $\ta$, which, considering that the components of Fig.~\ref{memdethardnessFigureTwo}(a) force every pairs of memories $x_i, \overline{x_i}$ to store complementary values (with respect to the values $\ta$ and $\tb$), directly translates into a satisfying assignment of $C$.
\end{proof}

To achieve the goal stated at the beginning of this section, we formulate a slightly weaker, but sufficient criterion for the synchronisation property, which can be efficiently checked.

\subsection{Memory-Deterministic Regex}

Let $M = (Q, \Sigma, \delta, q_0, F) \in \MFA$. Recall that, for any $p, q \in Q$ and $x \in \kSigma$, $(q, C, x) \to_{\contracted} p$ means that we can reach $p$ from $q$ by reading some non-consuming symbols $c_{1}, c_{2}, \ldots, c_{m} \in \memInstAlphabet_k \cup \{\eword\}$ followed by reading $x$, such that $\compressedInstructions(c_{1}, c_{2}, \ldots, c_{m}) = C$. Moreover, for every $q \in Q$ and $x \in \Sigma_k$, the set $\deltaContr(q, x)$ contains all states $p$ such that $(q, C, x) \to_{\contracted} p$ holds for some $C$; in particular, $\deltaContr(q, x) = \emptyset$ means that from $q$ we cannot read $x$ after some sequence of non-consuming symbols.

We say that $q_1 \in Q$ and $q_2 \in Q$ are \emph{memory synchronised}, denoted by $q_1 \memSync q_2$, if the following conditions are satisfied.

\begin{itemize}
\item For every reduced $C_1, C_2 \subseteq \memInstAlphabet_k$, $x \in \kSigma$ and $p_1, p_2 \in Q$, $((q_1, C_1, x) \to_{\contracted} p_1) \wedge ((q_2, C_2, x) \to_{\contracted} p_2) \Rightarrow (C_1 = C_2)$,
\item For every $x \in [k]$, 
\begin{itemize}
\item $\deltaContr(q_1, x) \neq \emptyset \Rightarrow (\deltaContr(q_2, y) = \emptyset$ for every $y \in \Sigma_k \setminus \{x\})$.
\item $\deltaContr(q_2, x) \neq \emptyset \Rightarrow (\deltaContr(q_1, y) = \emptyset$ for every $y \in \Sigma_k \setminus \{x\})$.
\end{itemize}
\end{itemize}

Intuitively speaking, these conditions mean that if for both $q_1$ and $q_2$ we can reach consuming transitions reading the same symbol $x$, then the non-consuming transitions leading to these $x$-transitions must have the same effect on the memories. Moreover, if after some non-consuming transitions from $q_1$ we can reach a memory recall transition for $x \in [k]$, then $q_2$ is only allowed to reach consuming transitions that also recall the same memory $x$ (and vice versa). Also note that `$\memSync$' is not necessarily reflexive, i.\,e., there might be a state $q$ with $q \memSync q$.

We say that $q_1$ and $q_2$ are \emph{synchronised reachable}, denoted by $q_1 \samereach q_2$, if there is a word $w$ and synchronised computations $\vec{c}$ and $\vec{c'}$ of $M$ on input $w$ with $|\vec{c}| = |\vec{c'}| = m$ and the states of $\vec{c}[m]$ and $\vec{c'}[m]$ are $q_1$ and $q_2$, respectively. An $\MFA(k)$ $M = (Q, \Sigma, \delta, q_0, F)$ is \emph{memory-deterministic} (or an $\MDMFA(k)$, for short) if all states $q_1, q_2 \in Q$ that are synchronised reachable are also memory synchronised, i.\,e., for all $q_1, q_2 \in Q$, $q_1 \samereach q_2$ implies $q_1 \memSync q_2$.

\begin{lemma}\label{memDetSyncLemma}
Every $M \in \MDMFA$ is synchronised.
\end{lemma}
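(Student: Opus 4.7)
The plan is to prove the claim by a straightforward induction on the step index in parallel contracted computations, with memory-determinism supplying the inductive step. Concretely, for two contracted computations $\vec{c}, \vec{c'}$ of $M$ on the same input $w$, I will show by induction on $i$ that for every $i$ with $1 \leq i \leq \min\{|\vec{c}|, |\vec{c'}|\}$, the configurations $\vec{c}[i]$ and $\vec{c'}[i]$ agree on memory contents, memory statuses and remaining input; this is precisely the definition of $\vec{c}$ and $\vec{c'}$ being synchronised.

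The base case $i = 1$ is trivial, since $\vec{c}[1] = \vec{c'}[1]$ is the unique initial configuration of $M$ on $w$. For the inductive step, let $q_1$ and $q_2$ be the states occurring in $\vec{c}[i]$ and $\vec{c'}[i]$. The prefixes $(\vec{c}[1], \ldots, \vec{c}[i])$ and $(\vec{c'}[1], \ldots, \vec{c'}[i])$ are themselves contracted computations and, by the inductive hypothesis, are synchronised. Hence $q_1 \samereach q_2$, and by memory-determinism $q_1 \memSync q_2$; note that by symmetry of the definition of $\samereach$, also $q_2 \memSync q_1$. Writing the $(i+1)$-th contracted transitions as $(q_1, C_1, x_1) \to_{\contracted} p_1$ and $(q_2, C_2, x_2) \to_{\contracted} p_2$, I would first argue that $x_1 = x_2$, then invoke the first memory-sync condition to obtain $C_1 = C_2$, and finally observe that applying identical memory-instruction sets with a common consumed factor to configurations that already agree on memory and remaining input preserves this agreement.

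To establish $x_1 = x_2$ when both transitions are consuming, I would use the second memory-sync condition: if $x_1 \in [k]$, then $\deltaContr(q_2, y) = \emptyset$ for every $y \in \Sigma_k \setminus \{x_1\}$, which (since $\deltaContr(q_2, x_2) \neq \emptyset$) forces $x_2 = x_1$; symmetric reasoning handles the case $x_2 \in [k]$, and if both $x_1, x_2 \in \Sigma$ they must coincide with the first symbol of the common remaining input. The subcase where both $x_1 = x_2 = \eword$ (a final non-consuming contracted step on both sides) is handled by the first condition applied with $x = \eword$, which gives $C_1 = C_2$ and so identical effects on the memory statuses while leaving input and memory contents untouched.

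The main obstacle, I expect, is the ``boundary'' subcase in which one computation takes a purely non-consuming final contracted step while the other still takes a consuming step; here the inductive hypothesis together with $q_1 \memSync q_2$ must be used to show that the consuming transition cannot in fact disturb the remaining input or the open-memory contents (otherwise synchronisation would fail already between the computations $(\vec{c}[1], \ldots, \vec{c}[i+1])$ and $(\vec{c'}[1], \ldots, \vec{c'}[i+1])$), so such a transition must recall an empty memory and thus leave the shared memory/input-portion of the configuration unchanged. Once this edge case is dispatched, the induction closes and synchronisation of $\vec{c}$ and $\vec{c'}$ follows immediately.
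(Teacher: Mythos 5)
Your proposal is correct and follows essentially the same route as the paper's proof: an induction along the two contracted computations, using the synchronised prefixes to obtain $q_1 \samereach q_2$ and hence $q_1 \memSync q_2$, then the second memory-sync condition to force $x_1 = x_2$ and the first to force $C_1 = C_2$. The extra $\eword$/boundary subcases you flag do not really arise, since every step of a contracted computation is a consuming contracted transition and synchronisation is only required up to $\min\{|\vec{c}|,|\vec{c'}|\}$, but treating them explicitly is harmless.
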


\begin{proof}
In this proof, we use the notation $c \vdash^{\contracted}_M c'$ or $c \vdash^{\contracted, *}_M c'$ in order to denote that $c'$ follows from $c$ (in one or several steps, respectively) in a contracted computation.\par
Let $M \in \MDMFA$ and let $\vec{c}$ and $\vec{c'}$ be two different contracted computations of $M$ on the same input. We shall show that $\vec{c}$ and $\vec{c'}$ are synchronised.\par
Let $i$, $1 \leq i \leq \max\{|\vec{c}|, |\vec{c'}|\} - 1$, be such that $\vec{c}[i]$ and $\vec{c'}[i]$ only differ in their states (or are identical), let $q_1$ and $q_2$ be the states of $\vec{c}[i]$ and $\vec{c'}[i]$, and let $(q_1, C_1, x_1) \to_{\contracted} p_1$ and $(q_2, C_2, x_2) \to_{\contracted} p_2$ be the contracted transitions responsible for $\vec{c}[i] \vdash^{\contracted} \vec{c}[i+1]$ and $\vec{c'}[i] \vdash^{\contracted} \vec{c'}[i+1]$, respectively. In particular, we observe that this means $q_1 \samereach q_2$. We can also note that since $\vec{c}[1] = \vec{c'}[1]$, such an $i$ exists. \par
If $x_1 \in [k]$, then we have $(q_1, C_1, x_1) \to_{\contracted} p_1$ with $x_1 \in [k]$; thus, the memory-determinism of $M$ implies that $\deltaContr(q_2, y) = \emptyset$ for every $y \in \Sigma_k \setminus \{x\}$, which implies that $x_1 = x_2$. Consequently, $(q_1, C_1, x_1) \to_{\contracted} p_1$ and $(q_2, C_2, x_1) \to_{\contracted} p_2$, which, since $q_1 \memSync q_2$, implies that $C_1 = C_2$. \par
If, on the other hand, $x_1 \in \Sigma$, then $x_2 \in [k]$ would be a contradiction to $q_1 \memSync q_2$, which implies that $x_2 \in \Sigma$. Thus, the contracted transitions $(q_1, C_1, x_1) \to_{\contracted} p_1$ and $(q_2, C_2, x_2) \to_{\contracted} p_2$ are both reading a symbol from the remaining input and since the remaining inputs of $\vec{c}[i]$ and $\vec{c'}[i]$ are the same, these two symbols must be the same, i.\,e., $x_1 = x_2$. As before, the memory-determinism now implies that $C_1 = C_2$. In both these cases, it follows that $\vec{c}[i+1]$ and $\vec{c'}[i+1]$ only differ with respect to their states (or are identical) and, since $q_1 \samereach q_2$, also $p_1 \samereach p_2$.\par
Consequently, by inductive application of this argument, it follows that $\vec{c}$ and $\vec{c'}$ are synchronised.
\end{proof}

We now define the class $\regex^{\mdet}_{\Sigma, X} = \{\alpha \in \regex_{\Sigma, X} \mid \canonicalMFA(\alpha) \in \MDMFA\}$ of \emph{memory-deterministic regex}, which, according to Thm.~\ref{matchingAlgoTheorem} and Lem.~\ref{memDetSyncLemma}, can be matched efficiently. 

\begin{corollary}
The $\regex^{\mdet}_{\Sigma, X}$-matching problem can be solved in time $\bigO(|w||\alpha|^3\sigma)$.
\end{corollary}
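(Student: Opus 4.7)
The plan is to chain together three already-established results: the linear-time construction of $\canonicalMFA(\alpha)$ (Lemma~\ref{computeCrudeAutomatonLemma}), the implication from memory-determinism to synchronisation (Lemma~\ref{memDetSyncLemma}), and the efficient algorithm for the synchronised $\MFA$ acceptance problem (Theorem~\ref{matchingAlgoTheorem}). The overall algorithm is: on input $(\alpha, w)$ with $\alpha \in \regex^{\mdet}_{\Sigma, X}$, first build $\canonicalMFA(\alpha)$, then run $\SynchMemAlgo(\canonicalMFA(\alpha), w)$ and return its answer.

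For correctness, since $\alpha \in \regex^{\mdet}_{\Sigma, X}$, by definition $\canonicalMFA(\alpha) \in \MDMFA$, so Lemma~\ref{memDetSyncLemma} yields that $\canonicalMFA(\alpha)$ is synchronised. Hence the hypotheses of Theorem~\ref{matchingAlgoTheorem} are met and $\SynchMemAlgo$ correctly decides whether $w \in \lang(\canonicalMFA(\alpha))$; since $\lang(\canonicalMFA(\alpha)) = \lang(\alpha)$ (as established in Section~\ref{sec:prelim}), this settles the matching problem.

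For the running time, Lemma~\ref{computeCrudeAutomatonLemma} produces $\canonicalMFA(\alpha)$ in time $\bigO(|\alpha|)$ and, crucially, its state set $Q$ has size $\bigO(|\alpha|)$; moreover the construction in Section~\ref{sec:prelim} ensures that the resulting automaton satisfies the standing structural assumptions on $\NFA$/$\MFA$ stated just before Lemma~\ref{computeReachSetsLemma} (constant out-degree, at most one outgoing symbol per state), which are needed to legitimately apply Theorem~\ref{matchingAlgoTheorem}. Substituting $|Q| = \bigO(|\alpha|)$ into the $\bigO(|w||Q|^3\sigma)$ bound of Theorem~\ref{matchingAlgoTheorem} gives $\bigO(|w||\alpha|^3\sigma)$, which dominates the $\bigO(|\alpha|)$ preprocessing and yields the claimed bound.

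There is essentially no technical obstacle; the only point worth verifying carefully is that the structural assumptions made in the analysis of $\SynchMemAlgo$ really do hold for $\canonicalMFA(\alpha)$ (this is already remarked on in the paper) and that the parameter $\sigma = |\Sigma| + k$ in the bound is treated consistently. Once these bookkeeping checks are made, the corollary follows immediately by composition of the cited results.
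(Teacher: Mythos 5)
Your proposal is correct and matches the paper's own reasoning exactly: the corollary is obtained by composing Lemma~\ref{computeCrudeAutomatonLemma} (so $|Q| = \bigO(|\alpha|)$), Lemma~\ref{memDetSyncLemma} (memory-determinism implies synchronisation), and Theorem~\ref{matchingAlgoTheorem} (the $\bigO(|w||Q|^3\sigma)$ algorithm). Your additional remark about verifying the structural assumptions on $\canonicalMFA(\alpha)$ is a worthwhile bookkeeping check that the paper also notes in its remark preceding Lemma~\ref{computeReachSetsLemma}.
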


In order to substantiate this result, we discuss a more practically motivated example of a memory-deterministic regex in Section~\ref{sec:ExampleMDRegex} in the Appendix.

We shall next see that whether an $\MFA$ (and therefore a regex) is memory-deterministic can be checked in polynomial-time, which is the main benefit of the class $\regex^{\mdet}_{\Sigma, X}$. To this end, we first have to compute the relation $\memSync$. This is not entirely trivial, since it depends on contracted transitions, which we cannot afford to explicitly compute (see the remark in Section~\ref{sec:appendixContractedTransitions} of the Appendix). The idea is to first compute a data structure for answering queries of the form: ``given $q \in Q, x \in \kSigma, c \in \memInstAlphabet_k$, are there $p \in Q$ and reduced $C \subseteq \memInstAlphabet_k$ with $(q, C, x) \to_{\contracted} p$ and $c \in C$?'' These can be computed by analysing $\canonicalNFA(\alpha)$ and are sufficient to evaluate $\memSync$.

\begin{lemma}\label{computeMemSync}
The relation $\memSync$ can be computed in time $\bigO(|Q|^3\sigma k)$.
\end{lemma}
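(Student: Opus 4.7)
The plan is as follows. First, using Lemma~\ref{computeReachSetsLemma}, compute all sets $\deltaContr(q, x)$ in time $\bigO(|Q|^2\sigma)$; this lets us decide in constant time whether $q$ admits any contracted $x$-transition at all, which is needed to handle the hypothesis of the second clause of $\memSync$ and to detect vacuous instances of the first clause.

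The core of the algorithm is to compute, for every $q \in Q$, $x \in \kSigma$, and $c \in \memInstAlphabet_k$, the table entry $T[q, x, c]$ indicating whether some contracted transition $(q, C, x) \to_{\contracted} p$ satisfies $c \in C$. For a fixed $c \in \{\open{y}, \close{y}\}$, I would work in a two-layer product graph whose nodes are pairs $(q, b)$ with $b \in \{0,1\}$, the bit $b$ recording whether the last memory-$y$ instruction read so far equals $c$. Transitions labelled $\eword$ or an instruction for a memory $y' \neq y$ preserve $b$; a transition labelled $c$ leads to the $b=1$ layer; a transition labelled with the complementary instruction for $y$ leads to $b=0$; consuming transitions do not appear in this subgraph. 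A single BFS from $(q,0)$ identifies every state of the form $(q',1)$ reachable by non-consuming transitions whose compressed sequence contains $c$; for each such $q'$ and each outgoing $x$-transition ($x \in \kSigma$) we set $T[q, x, c] = 1$. Each BFS costs $\bigO(|Q|)$, so populating $T$ over all $(q, c)$ pairs costs $\bigO(|Q|^2 k)$.

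The subtle step is that $T$ alone does not detect when a \emph{single} state $q$ admits two contracted $x$-transitions with different memory-instruction sets, which is part of the first $\memSync$-condition (by taking $q_1 = q_2$ in that condition implicitly, as any two contracted $x$-transitions out of the same state must coincide). To fix this, I would compute a dual table $\overline{T}[q, x, c]$ that records the existence of a contracted $x$-transition from $q$ with $c \notin C$; the very same two-layer BFS yields this by looking instead at reachable $(q',0)$ states with an $x$-transition, so no asymptotic overhead is incurred. Then ``all contracted $x$-transitions from $q$ share one common $C^* = \{c : T[q,x,c]\}$'' is equivalent to the conjunction, over all $c$, of $\neg(T[q,x,c] \wedge \overline{T}[q,x,c])$.

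With $\deltaContr$, $T$, and $\overline{T}$ in hand, $\memSync$ is evaluated pair-by-pair: for each $(q_1, q_2) \in Q^2$ and each $x \in \kSigma$ for which both $\deltaContr(q_1,x)$ and $\deltaContr(q_2,x)$ are non-empty, verify the uniqueness check above for each of $q_1, q_2$ and verify $T[q_1,x,c] = T[q_2,x,c]$ for every $c$; separately, for the second clause, test whether any $x \in [k]$ has $\deltaContr(q_1,x) \neq \emptyset$ and, if so, that $\deltaContr(q_2,y) = \emptyset$ for all $y \in \Sigma_k \setminus \{x\}$. Each pair costs $\bigO(\sigma k)$, yielding $\bigO(|Q|^2 \sigma k)$ overall, well within the claimed $\bigO(|Q|^3 \sigma k)$ bound. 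The main obstacle is conceptual rather than combinatorial: recognising that the existential query $T$ must be supplemented by its dual in order to force intra-state uniqueness, after which the pairwise characterisation of $\memSync$ becomes a direct table lookup.
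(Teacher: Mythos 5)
Your proposal is correct and follows essentially the same route as the paper's proof: your tables $T$ and $\overline{T}$ are exactly the paper's queries $\mathsf{Q2}_{\in}$ and $\mathsf{Q2}_{\notin}$, your two-layer graph is precisely the cross-product with the paper's constant-size automaton $N$ tracking the last instruction for memory $y$, and the final pairwise evaluation (including the intra-state uniqueness point, which the paper handles implicitly by running its check also for $q_1 = q_2$) is the same. The only difference is that you fuse the paper's intermediate all-pairs $\mathsf{Q1}$ table into a single BFS per $(q,c)$, which yields the slightly better bound $\bigO(|Q|^2\sigma k)$ but is not a different argument.
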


\begin{proof}
First, we compute the sets $\deltaContr(q, x)$, which, according to Lemma~\ref{computeReachSetsLemma}, requires time $\bigO(|Q|^2\sigma)$. We recall that for fixed $q_1, q_2 \in Q$, we have $q_1 \memSync q_2$, if the following conditions are satisfied.
 
\begin{itemize}
\item For every reduced $C_1, C_2 \subseteq \memInstAlphabet_k$, $x \in \kSigma$ and $p_1, p_2 \in Q$, $((q_1, C_1, x) \to_{\contracted} p_1) \wedge ((q_2, C_2, x) \to_{\contracted} p_2) \Rightarrow (C_1 = C_2)$.
\item For every $x \in [k]$, 
\begin{itemize}
\item $\deltaContr(q_1, x) \neq \emptyset \Rightarrow (\deltaContr(q_2, y) = \emptyset$ for every $y \in \Sigma_k \setminus \{x\})$.
\item $\deltaContr(q_2, x) \neq \emptyset \Rightarrow (\deltaContr(q_1, y) = \emptyset$ for every $y \in \Sigma_k \setminus \{x\})$.
\end{itemize}
\end{itemize}

The second property can be easily verified for all $q_1, q_2 \in Q$ by checking, for every $x \in [k]$ and $y \in \kSigma \setminus \{x\}$, whether $\deltaContr(q_1, x) \neq \emptyset$ and $\deltaContr(q_2, y) \neq \emptyset$, or whether $\deltaContr(q_2, x) \neq \emptyset$ and $\deltaContr(q_1, y) \neq \emptyset$. Since checks of the form $\deltaContr(q, x) = \emptyset$ can be done in constant time, this requires a total time of $\bigO(|Q|^2k\sigma)$. \par
In order to show how the first property can be checked for all $q_1, q_2 \in Q$, we define the following types of queries:\medskip\\
\begin{tabular}{ll}
$\mathsf{Q1}_{\in}(q, p, c)$: & does there exists a reduced set $C \subseteq \memInstAlphabet_k$ \\
& such that $(q, C, \eword) \to_{\contracted} p$ and $c \in C$?\\
$\mathsf{Q2}_{\in}(q, x, c)$: & does there exists a state $p \in Q$ and a reduced set $C \subseteq \memInstAlphabet_k$\\
& such that $(q, C, x) \to_{\contracted} p$ and $c \in C$?\\
\end{tabular}\medskip\\
With $\mathsf{Q1}_{\notin}(q, p, c)$ and $\mathsf{Q2}_{\notin}(q, x, c)$, we denote the variants of these queries with $c \notin C$ instead of $c \in C$.\par
Next, we discuss how queries $\mathsf{Q1}_{\in}(q, p, c)$ and $\mathsf{Q1}_{\notin}(q, p, c)$ can be evaluated. Let $c = \open{y}$ for some $y \in [k]$. Obviously, $\mathsf{Q1}_{\in}(q, p, \open{y})$ holds if in $M$ there is a path from $q$ to $p$ of transitions that are labelled with symbols from $\memInstAlphabet_k \cup \{\eword\}$, such that there is at least one $\open{y}$-transition in this path that is not followed by a $\close{y}$-transition. Checking the existence of such a path can be done as follows. We remove all $z$-transitions with $z \in \kSigma$, we replace all $z$-transition with $z \in \memInstAlphabet_k \setminus \{\open{y}, \close{y}\}$ by $\eword$-transitions, and we keep all original $\eword$-transitions. We have now obtained an $\NFA$ $M'$ over the alphabet $\{\open{y}, \close{y}\}$ with $\eword$-transitions. We declare $q$ to be the start state and $p$ to be the only accepting state. The $\NFA$ $M'$ accepts exactly the words over $\{\open{y}, \close{y}\}$ that correspond to the memory instructions for memory $y$ on a path of transitions that are labelled with symbols from $\memInstAlphabet_k \cup \{\eword\}$ and that leads from state $q$ to state $p$; moreover, constructing $M'$ can be done in time $\bigO(|Q|)$, and $M'$ has $|Q|$ states and at most $\bigO(|Q|)$ transitions. Consequently, the property to be checked holds if and only if $\lang(M') \cap \lang(N) \neq \emptyset$, where $N$ is an automaton for $\lang((\open{y} \altop \close{y})^* \open{y}^+)$ (note that $N$ has a constant number of states). Let $q_{0,N}$ be the start state and let $q_f$ be the only accepting state of $N$. By constructing the cross-product automaton of $M'$ and $N$ and checking reachability from $(q, q_{0, N})$ to $(p, q_f)$, we can check whether $\lang(M') \cap \lang(N) \neq \emptyset$ in time $\bigO(|Q|)$ (recall that there are $\bigO(|Q|)$ transitions). The argument for the case $c = \close{y}$ is analogous and in order to evaluate queries $\mathsf{Q1}_{\notin}(q, p, c)$, we let $N$ be an automaton for $\lang(((\open{y} \altop \close{y})^* \close{y}^+) \altop \eword)$.\par
Hence, we can evaluate the $\mathsf{Q1}_{\in}(q, p, c)$- and $\mathsf{Q1}_{\notin}(q, p, c)$-queries for \emph{all} $q, p \in Q$ and $c \in \memInstAlphabet_k$ in time $\bigO(|Q|^3k)$. Therefore, we assume in the following that $\mathsf{Q1}_{\in}$- and $\mathsf{Q1}_{\notin}$-queries can be answered in constant time. \par
By using $\mathsf{Q1}_{\in}$-queries, we can evaluate $\mathsf{Q2}_{\in}(q, x, c)$-queries as follows. We first check if there is a $p' \in \deltaContr(q, \eword)$ and a reduced $C \subseteq \memInstAlphabet_k$ with $(q, C, \eword) \to_{\contracted} p'$ and $c \in C$, which requires $\bigO(|Q|)$ $\mathsf{Q1}_{\in}$ queries. If this is the case, then we check whether there is a $p \in \delta(p', x)$, which can be done in constant time. Analogously, we can use $\mathsf{Q1}_{\notin}$-queries in order to evaluate $\mathsf{Q2}_{\notin}$-queries. This means that we can evaluate the $\mathsf{Q2}_{\in}(q, x, c)$- and $\mathsf{Q2}_{\notin}(q, x, c)$-queries for \emph{all} $q \in Q$, $x \in \kSigma$ and $c \in \memInstAlphabet_k$ in time $\bigO(|Q|^2\sigma k)$. As for $\mathsf{Q1}_{\in}$- and $\mathsf{Q1}_{\notin}$-queries, we shall now also assume that $\mathsf{Q2}_{\in}$- and $\mathsf{Q2}_{\notin}$-queries can be answered in constant time. \par
It remains to show how to check for every $q_1, q_2 \in Q$ whether the first property from above holds, i.\,e., the property:
\smallskip
\begin{quote}
For every reduced $C_1, C_2 \subseteq \memInstAlphabet_k$, $x \in \kSigma$ and $p_1, p_2 \in Q$, $((q_1, C_1, x) \to_{\contracted} p_1) \wedge ((q_2, C_2, x) \to_{\contracted} p_2) \Rightarrow (C_1 = C_2)$.
\end{quote}
\smallskip
To do this, we check for every $q_1, q_2 \in Q$, $x \in \kSigma$ and $c \in \memInstAlphabet_k$, whether $\mathsf{Q2}_{\in}(q_1, x, c)$ and $\mathsf{Q2}_{\notin}(q_2, x, c)$ hold, or whether $\mathsf{Q2}_{\notin}(q_1, x, c)$ and $\mathsf{Q2}_{\in}(q_2, x, c)$ hold. This requires time $\bigO(|Q|^2\sigma k)$. The total time required for all these computations is $\bigO(|Q|^3\sigma k)$.
\end{proof}

For checking whether a given $\MFA$ is memory-deterministic, we have to check whether there are states $q_1$ and $q_2$ such that $q_1 \samereach q_2$ and $q_1 \notmemSync q_2$. As shown by Lemma~\ref{computeMemSync}, checking whether or not $q_1 \memSync q_2$ for some $q_1, q_2 \in Q$ can be done in time $\bigO(|Q|^3\sigma k)$. The difficulty of checking whether $q_1 \samereach q_2$ is that this is not defined in terms of local properties of $q_1$ and $q_2$, or solely in terms of the structure of $M$. Thus, we define next a predicate that depends on structural properties of $M$ and that can be shown to be characteristic for $M$ being \emph{non}-memory-deterministic.

\begin{definition}
A triple $(q, p_1, p_2) \in Q^3$ is a \emph{non-synchronised branching triple}, denoted by the predicate $\nonSyncBranch(q, p_1, p_2)$, if $p_1$ and $p_2$ are not memory synchronised, i.\,e., $p_1 \notmemSync p_2$ (recall the definition of memory synchronised states from above) and either $q = p_1 = p_2$ or there are contracted transitions $(t_{1, i}, C_{i}, x_{i}) \to_{\contracted} t_{1, i+1}$ and $(t_{2, i}, C_i, x_{i}) \to_{\contracted} t_{2, i+1}$, $1 \leq i \leq m - 1$, such that $t_{1, 1} = t_{2, 1} = q$, $t_{1, m} = p_1$, $t_{2, m} = p_2$, and $t_{1, i} \memSync t_{2, i}$, for every $i \in \{2, 3, \ldots, m-1\}$.
\end{definition}

We first show that the existence of states $q, p_1, p_2$ with $\nonSyncBranch(q, p_1, p_2)$ characterises non-memory-determinism, and then we show how to check whether there are such states.

\begin{lemma}\label{branchingTriplesCharacterisationLemma}
$(\exists q_1, q_2 \in Q: q_1 \samereach q_2 \wedge q_1 \notmemSync q_2) \iff (\exists q, p_1, p_2 \in Q: \nonSyncBranch(q, p_1, p_2))$.
\end{lemma}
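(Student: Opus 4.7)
The plan is to prove both directions by linking actual synchronised computations of $M$ to the structural notion of a non-synchronised branching triple.

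For $(\Rightarrow)$, I will fix witnessing synchronised contracted computations $\vec{c}, \vec{c}'$ of equal length $m$ on some input $w$, with states $s_i, s'_i$ at step $i$ and $s_m = q_1$, $s'_m = q_2$. Since $s_1 = s'_1 = q_0$ and $s_m \notmemSync s'_m$, there is a smallest $i^* \in [m]$ with $s_{i^*} \notmemSync s'_{i^*}$. The key sublemma is that for every $i < i^*$ the two contracted transitions taken at step $i$ must share their labels, that is, they are of the form $(s_i, C_i, x_i) \to_{\contracted} s_{i+1}$ and $(s'_i, C_i, x_i) \to_{\contracted} s'_{i+1}$ with the \emph{same} $C_i$ and $x_i$. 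Synchronisation of $\vec{c}[i+1]$ and $\vec{c}'[i+1]$ forces both transitions to consume the same prefix of the remaining input and to apply the same net changes to memory statuses and contents; combined with $s_i \memSync s'_i$, the second clause of memory-synchronisation rules out one side performing a memory recall while the other takes a differently-labelled action (which would contradict $\deltaContr(\cdot, y) = \emptyset$ for $y \neq x_i$), and the case of two terminal reads is handled directly by synchronisation. Once $x_i$ coincides, the first clause of $\memSync$ forces $C_i$ to coincide. Setting $q := q_0$, $p_1 := s_{i^*}$, $p_2 := s'_{i^*}$ and using the shared $(C_i, x_i)$ as the edge labels then yields $\nonSyncBranch(q, p_1, p_2)$; the corner case $i^* = 1$ degenerates harmlessly to $\nonSyncBranch(q_0, q_0, q_0)$ with $m = 1$.

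For $(\Leftarrow)$, I will take a witnessing triple with state sequences $(t_{1,i})_{i \in [m]}, (t_{2,i})_{i \in [m]}$ joined by shared labels $(C_i, x_i)_{i \in [m-1]}$. Assuming (after discarding unreachable states as a preprocessing step) that $q$ is reachable from $q_0$, the relation $q \samereach q$ holds trivially by taking the same contracted computation on both sides. I will then extend both computations simultaneously: at step $i$, apply $(t_{1,i}, C_i, x_i) \to_{\contracted} t_{1,i+1}$ on one side and $(t_{2,i}, C_i, x_i) \to_{\contracted} t_{2,i+1}$ on the other, constructing the input $w$ on the fly by appending the terminal $x_i$ if $x_i \in \Sigma$, or the current content of memory $x_i$ if $x_i \in [k]$. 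Identical labels ensure that both sides consume the same prefix and apply the same memory instructions, so synchronisation is preserved and each transition is applicable by construction. The endpoints then witness $p_1 \samereach p_2$, while $p_1 \notmemSync p_2$ is built into the definition of $\nonSyncBranch$.

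The main obstacle will be the forward sublemma asserting that $\memSync$ between intermediate states forces matching $(C_i, x_i)$ on both sides of a synchronised pair of computations. The delicate case is ruling out one side recalling a memory while the other either recalls a different memory (that coincidentally happens to have equal content) or reads a terminal; this is precisely what the second clause of $\memSync$ is designed to preclude. A subsidiary issue in the backward direction is ensuring applicability of the shared transitions as $w$ is built, but this is automatic since both sides share the same memory configuration throughout the construction.
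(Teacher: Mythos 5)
Your proof is correct and follows essentially the same route as the paper's: the forward direction rests on the same key observation that synchronisation of the configurations together with $\memSync$ of the intermediate states forces the two contracted transitions to share both $x_i$ and $C_i$ (ruling out a recall versus a differently-labelled action via the second clause of $\memSync$, and mismatched terminals via the common remaining input), and the backward direction builds the witnessing input by appending $x_i$ or the current content of memory $x_i$ exactly as the paper does. The only (harmless) variation is that you anchor the branching triple at $q_0$ using the minimal index where $\memSync$ fails, whereas the paper anchors it at the last configuration where the two computations coincide; both satisfy the definition of $\nonSyncBranch$.
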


\begin{proof}
We start with the \emph{only if} direction and assume that there are $q_1, q_2 \in Q$ with $q_1 \samereach q_2$ and $q_1 \notmemSync q_2$. If $q_1 = q_2$, then we have that $q_1 \notmemSync q_1$ and therefore $\nonSyncBranch(q_1, q_1, q_1)$ holds; thus, we assume that $q_1 \neq q_2$ in the following. \par
By definition, $q_1 \samereach q_2$ implies that there is a word $w$ and synchronised computations $\vec{c}$ and $\vec{c'}$ of $M$ on input $w$ with $|\vec{c}| = |\vec{c'}| = m$ and the states of $\vec{c}[m]$ and $\vec{c'}[m]$ are $q_1$ and $q_2$, respectively. Since $\vec{c}[1] = \vec{c'}[1]$ and $\vec{c}[m] \neq \vec{c'}[m]$ (this follows from $q_1 \neq q_2$), there is some $j$ with $2 \leq j \leq m$ such that $\vec{c}[i] = \vec{c'}[i]$, $1 \leq i \leq j-1$, and $\vec{c}[j] \neq \vec{c'}[j]$. Since $\vec{c}[j]$ and $\vec{c'}[j]$ can only differ with respect to their states, we can conclude that $\vec{c}[j]$ has a state $t$ and $\vec{c'}[j]$ has a state $t'$, and $t \neq t'$ (note that if $j = m$, then $t = q_1$ and $t' = q_2$). Moreover, let $p$ be the common state of $\vec{c}[j-1]$ and $\vec{c'}[j-1]$.  \par
If $t \notmemSync t'$, then $\nonSyncBranch(p, t, t')$ holds. If, on the other hand, $t \memSync t'$, then, since $q_1 \notmemSync q_2$, there must be some $\ell$ with $j + 1 \leq \ell \leq m$ such that $\vec{c}[\ell]$ and $\vec{c'}[\ell]$ have states $s$ and $s'$ with $s \notmemSync s'$ (note that if $\ell = m$, then $s = q_1$ and $s' = q_2$), and, for every $i$ with $j+1 \leq i \leq \ell-1$, $\vec{c}[i]$ and $\vec{c'}[i]$ have some states $r_i$ and $r'_i$ with $r_i \memSync r'_i$. This implies that $\nonSyncBranch(p, s, s')$ holds.\par
In order to prove the other direction, we assume that, for some $q, p_1, p_2 \in Q$, $\nonSyncBranch(q, p_1, p_2)$ holds. By definition of $\nonSyncBranch$, this means that $p_1 \notmemSync p_2$; thus, it only remains to show $p_1 \samereach p_2$. Let $v$ be some word that leads $M$ to state $q$, i.\,e., on input $v$, $M$ can reach a configuration $(q, \eword, (u_1, r_1), \ldots, (u_k, r_k))$. Since $\nonSyncBranch(q, p_1, p_2)$ holds, there are contracted transitions $(t_{1, i}, C_{i}, x_{i}) \to_{\contracted} t_{1, i+1}$ and $(t_{2, i}, C_i, x_{i}) \to_{\contracted} t_{2, i+1}$, $1 \leq i \leq m - 1$, such that $t_{1, 1} = t_{2, 1} = q$, $t_{1, m} = p_1$, $t_{2, m} = p_2$, and $t_{1, i} \memSync t_{2, i}$, for every $i \in [m-1]$.
We now define a word $v'$ as follows: we initially set $v' = \eword$ and start $M$ on configuration $(q, \eword, (u_1, r_1), \ldots, (u_k, r_k))$ (i.\,e., the configuration that can be reached by $M$ on input $v$). Then we carry out the transitions $(t_{1, i}, C_{i}, x_{i}) \to_{\contracted} t_{1, i+1}$, $1 \leq i \leq m - 1$, one by one, and after applying $(t_{1, i}, C_{i}, x_{i}) \to_{\contracted} t_{1, i+1}$, we append $x_{i}$ to $v'$, if $x_{i} \in \Sigma$, and we append the current content of memory $x_{i}$ to $v'$, if $x_{i} \in [k]$. Obviously, $(q, v', (u_1, r_1), \ldots, (u_k, r_k)) \vdash^{\contracted, *} (p_1, \eword, (u'_1, r'_1), \ldots, (u'_k, r'_k))$, due to the transitions $(t_{1, i}, C_{i}, x_{i}) \to_{\contracted} t_{1, i+1}$, $1 \leq i \leq m - 1$, and, analogously, $(p_2, \eword, (u'_1, r'_1), \ldots, (u'_k, r'_k))$ is reachable from $(q, v', (u_1, r_1), \ldots, (u_k, r_k))$ via the states $t_{2, i+1}$, $1 \leq i \leq m - 1$ (in particular, note that these two computations are synchronised). Since, $(q_0, vv', (\closed, \eword), \ldots, (\closed, \eword)) \vdash^{\contracted, *} (q, v', (u_1, r_1), \ldots, (u_k, r_k))$, this directly implies that $p_1 \samereach p_2$.
\end{proof}

\begin{lemma}\label{CheckBranchingTriplesLemma}
Whether there are $q, p_1, p_2 \in Q$ with $\nonSyncBranch(q, p_1, p_2)$ can be checked in $\bigO(|Q|^5 + |Q|^3\sigma k)$.
\end{lemma}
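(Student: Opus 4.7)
The plan is to reduce the existence of a non-synchronised branching triple to a reachability problem on an auxiliary product graph on $Q \times Q$. As preprocessing, I would first compute the sets $\deltaContr(q,x)$ for all $q \in Q$ and $x \in \ekSigma$ (Lemma~\ref{computeReachSetsLemma}) and the relation $\memSync$ (Lemma~\ref{computeMemSync}); the resulting $\bigO(|Q|^2\sigma)$ and $\bigO(|Q|^3\sigma k)$ costs will be dominated by the main step. Then I define a directed graph $G$ on vertex set $Q \times Q$, with an edge from $(t_1,t_2)$ to $(p_1,p_2)$ exactly when $t_1 \memSync t_2$ \emph{and} there exists some $x \in \ekSigma$ with $p_1 \in \deltaContr(t_1,x)$ and $p_2 \in \deltaContr(t_2,x)$. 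The algorithm then performs a multi-source BFS from the diagonal $\{(q,q) \mid q \in Q\}$ and returns \textsf{Yes} iff some vertex $(p_1,p_2)$ with $p_1 \notmemSync p_2$ is visited (including diagonal vertices themselves, to handle the degenerate case below).

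The correctness of this reduction rests on one key observation: the ``matching $C$'' condition in the definition of $\nonSyncBranch$ comes \emph{for free} along edges of $G$. Indeed, if $t_1 \memSync t_2$ and both contracted transitions $(t_1,C_1,x)\to_{\contracted} p_1$ and $(t_2,C_2,x)\to_{\contracted} p_2$ exist, then by the first clause in the definition of $\memSync$ we must have $C_1 = C_2$. Consequently, every $G$-path $(q,q) = (t_{1,1},t_{2,1}) \to \cdots \to (t_{1,m},t_{2,m}) = (p_1,p_2)$ with memory-synchronised intermediate vertices (which is automatic, since $G$-edges only leave memory-synchronised pairs) can be realised by a pair of genuine contracted-transition sequences witnessing $\nonSyncBranch(q,p_1,p_2)$, and conversely any such witness with $m \geq 2$ yields a $G$-path. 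The degenerate $m = 1$ case, where $q = p_1 = p_2$ and $\nonSyncBranch(q,q,q)$ reduces to $q \notmemSync q$, is handled by inspecting the diagonal vertices themselves at the start of the BFS.

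The step I expect to be the main obstacle is the running-time analysis, since a naive edge enumeration must not hide an extra factor. For a fixed $x \in \ekSigma$, summing over all source pairs $(t_1,t_2)$ the number of edges labelled by $x$ is at most $(\sum_t |\deltaContr(t,x)|)^2 \leq (|Q|^2)^2 = |Q|^4$; summed over $x$ and noting that $|\ekSigma| = \bigO(\sigma) = \bigO(|Q|)$ under the representation assumptions of Section~\ref{sec:prelim}, the total number of edges in $G$ is $\bigO(|Q|^5)$. Crucially, enumerating these edges as Cartesian products $\deltaContr(t_1,x) \times \deltaContr(t_2,x)$ does not require us to list the reduced sets $C$ explicitly: the $\memSync$-property at the source pair ensures that a single $C$ is shared by all realising transition-pairs, so we pay only for the endpoints. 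Building $G$ in this way and running BFS are each $\bigO(|V|+|E|) = \bigO(|Q|^5)$, so the overall running time meets the claimed bound.
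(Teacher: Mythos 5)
Your reduction is correct in substance, but it takes a genuinely different route from the paper's. The paper iterates over all $\bigO(|Q|^3)$ triples $(q,p_1,p_2)$ with $p_1 \notmemSync p_2$, builds for each a pair of $\eword$-free $\NFA$ over $\kSigma$ with start state $q$ and accepting states $p_1$ and $p_2$, and tests $\lang(M_{q,p_1}) \cap \lang(M_{q,p_2}) \neq \emptyset$ on a cross-product automaton of size $\bigO(|Q|^2)$. Crucially, that test does \emph{not} verify the intermediate $\memSync$ and matching-$C$ conditions of $\nonSyncBranch$; instead the paper argues that if a common label word exists but these conditions fail along the realising paths, then the \emph{first} point of failure itself yields a (different) non-synchronised branching triple, so detecting any such common word suffices. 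You instead bake those conditions directly into the edge relation of a single product graph on $Q \times Q$ and run one multi-source BFS from the diagonal; your observation that the first clause of $\memSync$ at the source pair forces $C_1 = C_2$ is exactly what makes this sound and renders the first-failure argument unnecessary. Both approaches meet the $\bigO(|Q|^5)$ bound (your bottleneck is enumerating the Cartesian products $\deltaContr(t_1,x)\times\deltaContr(t_2,x)$, the paper's is the $|Q|^3 \cdot |Q|^2$ loop); yours yields a more direct correctness proof and a single reachability computation, while the paper's avoids any reasoning about the product-graph edge set. One detail to repair: the edge labels should range over $\kSigma$ rather than $\ekSigma$. The clauses of $\memSync$ only quantify over $x \in \kSigma$, so your ``matching $C$ comes for free'' argument does not apply to $\eword$-labelled contracted transitions, and the witnesses in $\nonSyncBranch$ arise from steps of contracted computations, which are labelled by consuming symbols anyway; with $\eword$-edges included, a reachable $\notmemSync$ pair would not necessarily certify a genuine branching triple. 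Restricting to $x \in \kSigma$ (the non-consuming closure is already absorbed into $\deltaContr$) fixes this without affecting the running time.
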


\begin{proof}
We assume that the relation $\memSync$ is computed (which, according to Lemma~\ref{computeMemSync}, requires time $\bigO(|Q|^3\sigma k)$), and we assume that we have the sets $\deltaContr(q, x)$ at our disposal, which, according to Lemma~\ref{computeReachSetsLemma}, requires time $\bigO(|Q|^2\sigma)$. \par
Let $q, p_1, p_2 \in Q$ be fixed. Since $p_1 \memSync p_2$ implies that $\nonSyncBranch(q, p_1, p_2)$ does not hold, we only have to check $q, p_1, p_2 \in Q$ with $p_1 \notmemSync p_2$; thus, we assume that $p_1 \notmemSync p_2$ in the following. \par
We construct two $\NFA$ $M_{q, p_1}$ and $M_{q, p_2}$ from $M$ as follows. The automata $M_{q, p_1}$ and $M_{q, p_2}$ have the same states as $M$, and the only accepting state is $p_1$ and $p_2$, respectively; the start state of both $M_{q, p_1}$ and $M_{q, p_2}$ is $q$. Every $x$-transition with $x \in \memInstAlphabet_k$ of $M_{q, p_1}$ and $M_{q, p_2}$ is replaced by an $\eword$-transitions and all other transitions are left unchanged, i.\,e., $M_{q, p_1}$ and $M_{q, p_2}$ are $\NFA$ over the alphabet $\kSigma$ and with $\eword$-transitions. Next, we remove the $\eword$-transitions of $M_{q, p_1}$ and $M_{q, p_2}$ by setting $\delta(q, x) = \deltaContr(q, x)$, for every $q \in Q$ and $x \in \kSigma$. \par
If $\nonSyncBranch(q, p_1, p_2)$ holds, then there are contracted transitions $(t_{1, i}, C_{i}, x_{i}) \to_{\contracted} t_{1, i+1}$ and $(t_{2, i}, C_i, x_{i}) \to_{\contracted} t_{2, i+1}$, $1 \leq i \leq m - 1$, such that $t_{1, 1} = t_{2, 1} = q$, $t_{1, m} = p_1$, $t_{2, m} = p_2$, and $t_{1, i} \memSync t_{2, i}$, for every $i \in [m-1]$. In particular, this means that $x_1 x_2 \ldots x_{m-1} \in \lang(M_{q, p_1}) \cap \lang(M_{q, p_2})$. \par
On the other hand, if there is some $w \in \lang(M_{q, p_1}) \cap \lang(M_{q, p_2})$, then there are $(t_{1, i}, C_{1,i}, w[i]) \to_{\contracted} t_{1, i+1}$ and $(t_{2, i}, C_{2,i}, w[i]) \to_{\contracted} t_{2, i+1}$, $1 \leq i \leq |w|$, such that $t_{1, 1} = t_{2, 1} = q$, $t_{1, |w|+1} = p_1$, $t_{2, |w|+1} = p_2$ (note that these transitions are with respect to the original $\MFA$ $M$). If, in addition, $C_{1, i} = C_{2, i}$ and $t_{1, i} \memSync t_{2, i}$ for every $i$ with $1 \leq i \leq |w|$, then $\nonSyncBranch(q, p_1, p_2)$ holds. On the other hand, if one of these conditions is not satisfied, then let $s$, $1 \leq s \leq |w|-1$, be minimal such that $C_{1, i} = C_{2, i}$ and $t_{1, i} \memSync t_{2, i}$, $1 \leq i \leq s$. This implies that $t_{1, s+1} \notmemSync t_{2, s+1}$ or that $C_{1, s+1} \neq C_{2, s+1}$, which also implies $t_{1, s+1} \notmemSync t_{2, s+1}$. In both cases, $\nonSyncBranch(q, t_{1, s+1}, t_{2, s+1})$ holds. \par
Consequently, by checking $\lang(M_{q, p_1}) \cap \lang(M_{q, p_2}) \neq \emptyset$, for every $q, p_1, p_2 \in Q$ with $p_1 \notmemSync p_2$, we can check whether there are $q, p_1, p_2 \in Q$ such that $\nonSyncBranch(q, p_1, p_2)$ holds.\par
For fixed $q, p_1, p_2 \in Q$ with $p_1 \notmemSync p_2$, constructing $M_{q, p_1}$ and $M_{q, p_2}$ requires time $\bigO(|Q|)$. Next, we construct the cross-product automaton $M_{q, p_1, p_2}$ of $M_{q, p_1}$ and $M_{q, p_2}$, which accepts $\lang(M_{q, p_1}) \cap \lang(M_{q, p_2})$. Since both $M_{q, p_1}$ and $M_{q, p_2}$ have $|Q|$ states, $M_{q, p_1, p_2}$ has $|Q|^2$ states and therefore size $\bigO(|Q|^2)$. Checking whether $\lang(M_{q, p_1, p_2}) \neq \emptyset$ can therefore be done in time $\bigO(|M_{q, p_1, p_2}|) = \bigO(|Q|^2)$. We conclude that checking whether there are $q, p_1, p_2 \in Q$ such that $\nonSyncBranch(q, p_1, p_2)$ holds can be done in time $\bigO(|Q|^5)$. With the initial preprocessing, we get a total running time of $\bigO(|Q|^5 + |Q|^3\sigma k)$.
\end{proof}

These lemmas from above show how to check whether a given $\MFA$ is memory deterministic.

\begin{lemma}\label{checkMdetLemma}
Given $M \in \MFA$, we can decide whether $M \in \MDMFA$ in time $\bigO(|Q|^5 + |Q|^3\sigma k)$.
\end{lemma}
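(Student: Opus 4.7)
The plan is to combine the three preceding lemmas into a single routine. By the definition of $\MDMFA$, we have $M \notin \MDMFA$ precisely when there exist $q_1, q_2 \in Q$ with $q_1 \samereach q_2$ and $q_1 \notmemSync q_2$. By Lemma~\ref{branchingTriplesCharacterisationLemma}, this in turn is equivalent to the existence of $q, p_1, p_2 \in Q$ with $\nonSyncBranch(q, p_1, p_2)$. Hence the decision procedure simply checks whether such a non-synchronised branching triple exists.

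Concretely, I would proceed in three phases. First, compute the sets $\deltaContr(q, x)$ for all $q \in Q$ and $x \in \ekSigma$, which by Lemma~\ref{computeReachSetsLemma} takes $\bigO(|Q|^2 \sigma)$ time. Second, compute the relation $\memSync$ using Lemma~\ref{computeMemSync}, which takes $\bigO(|Q|^3 \sigma k)$ time. Third, invoke the procedure of Lemma~\ref{CheckBranchingTriplesLemma} to check whether some triple $(q, p_1, p_2)$ satisfies $\nonSyncBranch(q, p_1, p_2)$; if no such triple exists, output that $M \in \MDMFA$, otherwise output that $M \notin \MDMFA$. This last phase runs in $\bigO(|Q|^5)$ time.

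For the total running time, it suffices to note that the model assumptions enforce $|\Sigma| = \bigO(|Q|)$ and $k = \bigO(|Q|)$ (memory instructions contribute to the edge-label alphabet and each used memory requires at least one $\open{x}$-transition), hence $\sigma = \bigO(|Q|)$ and $\sigma k = \bigO(|Q|^2)$. Therefore the first two phases cost $\bigO(|Q|^3) + \bigO(|Q|^5) = \bigO(|Q|^5)$, which is absorbed by the dominant third phase. There is no real obstacle here; the statement is a straightforward consequence of the infrastructure already developed, and the proof is essentially a bookkeeping step that assembles Lemmas~\ref{computeMemSync}, \ref{branchingTriplesCharacterisationLemma}, and \ref{CheckBranchingTriplesLemma} into a single algorithm whose correctness is immediate from the characterisation, and whose complexity bound follows by summing the individual running times.
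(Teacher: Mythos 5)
Your proposal is correct and follows essentially the same route as the paper: compute $\memSync$ via Lemma~\ref{computeMemSync}, check for a non-synchronised branching triple via Lemma~\ref{CheckBranchingTriplesLemma}, and conclude using the characterisation of Lemma~\ref{branchingTriplesCharacterisationLemma}. Your extra remark that $\sigma, k = \bigO(|Q|)$ under the model assumptions, so the preprocessing is dominated by the $\bigO(|Q|^5)$ triple check, is a fair and slightly more explicit accounting than the paper gives.
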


\begin{proof}
We first compute the relation $\memSync$ in time $\bigO(|Q^3|\sigma k)$ (see Lemma~\ref{computeMemSync}). Then we check whether there are $q, p_1, p_2 \in Q$ such that $\nonSyncBranch(q, p_1, p_2)$ holds in time $\bigO(|Q|^5 + |Q|^3\sigma k)$ (see Lemma~\ref{CheckBranchingTriplesLemma}). Since there are $q, p_1, p_2 \in Q$ such that $\nonSyncBranch(q, p_1, p_2)$ holds if and only if $M$ is not memory-deterministic (see Lemma~\ref{branchingTriplesCharacterisationLemma}), we have checked whether $M \in \MDMFA$ in total time $\bigO(|Q|^5 + |Q|^3\sigma k)$.
\end{proof}

Finally, Proposition~\ref{computeCrudeAutomatonProposition} and Lemma~\ref{checkMdetLemma} directly yield the following.

\begin{theorem}\label{checkMdetRegexTheorem}
Given $\alpha \in \regex_{\Sigma, X}$, we can decide whether $\alpha \in \regex^{\mdet}_{\Sigma, X}$ in time $\bigO(|\alpha|^5)$.
\end{theorem}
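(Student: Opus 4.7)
The plan is to observe that Theorem~\ref{checkMdetRegexTheorem} is essentially an immediate consequence of the infrastructure already set up, so the proof amounts to chaining the constructions together and tracking the size bounds.

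First, given $\alpha$ (represented as the syntax tree $\syntaxtree(\alpha)$), I would invoke Lemma~\ref{computeCrudeAutomatonLemma} to build $\crudeAutomaton(\alpha)$, and thereby obtain $\canonicalMFA(\alpha)$, in time $\bigO(|\alpha|)$. By the very definition of $\regex^{\mdet}_{\Sigma, X}$, we have $\alpha \in \regex^{\mdet}_{\Sigma, X}$ if and only if $\canonicalMFA(\alpha) \in \MDMFA$, so the regex-level decision problem reduces directly to the automaton-level one. Next, I would apply Lemma~\ref{checkMdetLemma} to the automaton $\canonicalMFA(\alpha)$, which decides membership in $\MDMFA$ in time $\bigO(|Q|^5)$ where $Q$ is the state set.

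To finish, I only need the bound $|Q| = \bigO(|\alpha|)$, which is exactly what Lemma~\ref{computeCrudeAutomatonLemma} gives us (the construction produces at most three vertices per node of $\syntaxtree(\alpha)$). Plugging this in, the check for memory-determinism runs in time $\bigO(|\alpha|^5)$, which dominates the $\bigO(|\alpha|)$ cost of building the automaton. Adding the two yields the claimed total running time of $\bigO(|\alpha|^5)$.

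There is essentially no obstacle in this final step: all the conceptual work (characterising non-memory-determinism through non-synchronised branching triples, evaluating $\memSync$ via contracted-transition queries, and cross-product reachability for $\nonSyncBranch$) is done inside Lemma~\ref{checkMdetLemma}. The only thing to be mildly careful about is ensuring that $\canonicalMFA(\alpha)$ satisfies the structural assumptions imposed on input automata in Section~\ref{sec:prelim} (constant out-degree, at most one outgoing-edge symbol per state, succinct adjacency representation), but this was already noted in the remark following those assumptions, so no additional argument is needed.
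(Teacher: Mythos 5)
Your proposal matches the paper's argument exactly: the paper derives this theorem by directly combining Lemma~\ref{computeCrudeAutomatonLemma} (construction of $\crudeAutomaton(\alpha)$ of size $\bigO(|\alpha|)$ in linear time) with Lemma~\ref{checkMdetLemma} (deciding $\MDMFA$-membership in $\bigO(|Q|^5)$). Your additional remark about the structural assumptions on $\canonicalMFA(\alpha)$ is correct and consistent with the paper's own remark in Section~\ref{sec:prelim}.
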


\section{Conclusions}

We presented two different approaches to efficient matching of regex. Since backreferences are the source of intractability, both approaches rely on restricting them somehow. The difference is that the concept of the active variable degree tries to reduce the number of required backreferences (although implicitly by re-using the memories of memory automata), while the concept of memory determinism relies on restricting how backreferences can be used (by imposing determinism for them). \par
The active variable degree can be considered as a complexity parameter that partitions the class of all regex into an infinite hierarchy of increasing matching complexity, i.\,e., a matching complexity that is exponential only in the active variable degree. In terms of efficiency and possible practical application, there are two main obstacles. Firstly, matching regex with active variable degree at most $k$ by constructing $\MFA(k)$ (as explained in Section~\ref{AVDSection}) only leads to acceptable running-times if $k$ is rather small. Secondly, the running times are of the form $|\alpha||w|^{\bigO(\avd(\alpha))}$, which is rather problematic under the reasonable assumption that $w$ is large and $\alpha$ is small. \par
From a language theoretical point of view, the hierarchy induced by the active variable degree is related to the natural hierarchy induced by the number of backreferences or variables that are necessary to describe the language of some regex $\alpha$ (called $\minVarPara(\alpha)$ in the following). Since regex with $k$ variables can be matched in time $|\alpha||w|^{\bigO(k)}$, the parameter $\minVarPara(\alpha)$ has a similar meaning as the active variable degree. However, its algorithmic application is questionable, since it is not computable (see~\cite{Freydenberger2013}). In this regard, the active variable degree can be interpreted as a computable upper bound for $\minVarPara(\alpha)$. \par
Memory deterministic regex, on the other hand, have the nice property that they can be matched in time $\bigO(|w||\alpha|^3\sigma)$, i.\,e., in a running-time of the form $p(|\alpha|)|w|$ for a polynomial $p$, or in linear time if measured in data-complexity. Their disadvantage is that checking memory determinism requires time $\bigO(|\alpha|^5)$ and even though their expressive power properly extends the one of deterministic regex (see~\cite{FreydenbergerSchmidJCSS}), their relevance is unclear in a practical context (however, see Section~\ref{sec:ExampleMDRegex} in the Appendix for a discussion of a more practically motivated memory-deterministic regex). However, the fact that they properly extend classical regular expressions (unlike the deterministic regex from~\cite{FreydenbergerSchmidJCSS}), which are without doubt of high practical relevance, justifies some hope that they could be used for practical purposes.\par
A natural extension of memory determinism would be to allow the contents and statuses of only few specific memories to differ in different computational branches, while all others must be synchronised. The concept of memory-determinism could be extended accordingly, e.\,g., if only memories $Y \subseteq [k]$ are allowed to be nondeterministic, then ``memory-determinism'' means that $q_1 \samereach q_2$ implies $q_1 \memSync q_2$, but the definition of $\memSync$ is changed such that $q_1 \memSync q_2$, if the following conditions are satisfied.
 
\begin{itemize}
\item For every reduced $C_1, C_2 \subseteq \memInstAlphabet_k$, $x \in \kSigma$ and $p_1, p_2 \in Q$, 
\begin{align*}
&((q_1, C_1, x) \to_{\contracted} p_1) \wedge ((q_2, C_2, x) \to_{\contracted} p_2) \:\: \Rightarrow \\
&(C_1 \setminus \{\open{j}, \close{j} \mid j \in Y\} = C_2 \setminus \{\open{j}, \close{j} \mid j \in Y\})\,.
\end{align*}
\item For every $x \in [k] \setminus Y$, 
\begin{itemize}
\item $\deltaContr(q_1, x) \neq \emptyset \Rightarrow (\deltaContr(q_2, y) = \emptyset$ for every $y \in \Sigma_k \setminus \{x\})$.
\item $\deltaContr(q_2, x) \neq \emptyset \Rightarrow (\deltaContr(q_1, y) = \emptyset$ for every $y \in \Sigma_k \setminus \{x\})$.
\end{itemize}
\end{itemize}
\par

Both the active variable degree as well as the concept of memory determinism allow obvious improvements, i.\,e., the \emph{strong} active variable degree and \emph{synchronised} memory automata, respectively. However, as substantiated by $\conpclass$-hardness results (Theorems~\ref{strongavdConphardnessTheorem}~and~\ref{synchronisedHardnesTheorem}), these improvements lead to intractability and are therefore not investigated further.  \par
Last but not least, for all our concepts and results we heavily rely on memory automata, which further substantiates their usefulness as a matching tool for regex. In particular, we believe that it would be worthwhile to implement a regex matching tool that is based on memory automata, which would also be a first step in implementing the approaches developed in this work.


\newpage

\appendix

\section{A Detailed Definition of Regex}

Here, we define the concept of regular expressions with backreferences in more detail and also give a sound definition of their semantics.\par
Let $X$ denote a finite set of \emph{variables} (as a convention, we normally use symbols like $x, y, z, x_1, x_2, x_3, \ldots$ to denote variables). The set $\regex_{\Sigma, X}$ of \emph{regular expressions with backreferences} (\emph{over $\Sigma$ and $X$}), also denoted by \emph{regex}, for short, is recursively defined as follows:
\begin{itemize}
\item For every $a \in \eSigma$, $a \in \regex_{\Sigma, X}$ and $\var(a) = \emptyset$.
\item For every $\alpha,\beta\in \regex_{\Sigma, X}$. 
\begin{itemize}
\item $(\alpha \cdot \beta) \in \regex_{\Sigma, X}$ and $\var((\alpha \cdot \beta)) = \var(\alpha)\cup\var(\beta)$,
\item $(\alpha \altop \beta) \in \regex_{\Sigma, X}$ and $\var((\alpha \altop \beta)) = \var(\alpha)\cup\var(\beta)$,
\item $(\alpha)^+ \in \regex_{\Sigma, X}$ and $\var((\alpha)^+) = \var(\alpha)$.
\end{itemize}
\item For every $x \in X$, $x \in \regex_{\Sigma, X}$ and $\var(x) = \{x\}$.
\item For every $\alpha \in \regex_{\Sigma, X}$ and $x \in X \setminus \var(\alpha)$, $x\{\alpha\} \in \regex_{\Sigma, X}$ and $\var(x\{\alpha\}) = \var(\alpha) \cup \{x\}$.
\end{itemize}
For $\alpha \in \regex_{\Sigma, X}$, we use $\alpha^*$ as a shorthand form for $\alpha^+ \altop \eword$, and we usually omit the operator `$\cdot$', i.\,e., we use juxtaposition. If the underlying alphabet $\Sigma$ and set $X$ of variables is negligible or clear from the context, we also denote the set of regex by $\regex$. In a regex, we call an occurrence of symbol $x \in X$ a \emph{reference to variable $x$} and a subexpression of the form $x\{\alpha\}$ a \emph{binding of variable $x$}; if we just talk about (\emph{occurrences of}) \emph{variables}, then we refer to a reference or a binding. We note that the subset of $\regex_{\Sigma, X}$ that can be created by the first two points from above is exactly the set of regular expressions over $\Sigma$, which we also call \emph{classical} regular expressions, in order to distinguish them from the regex defined above, and which we denote by $\regexp_{\Sigma}$. \par
There are several ways of how the semantics of regex can be formally defined (see the discussion in~\cite{FreydenbergerSchmid2017arxiv, FreydenbergerSchmidJCSS}). We use a rather simple one (introduced in~\cite{Schmid2016} and also used in~\cite{FreydenbergerSchmid2017, FreydenbergerSchmidJCSS}) that only relies on so-called \emph{ref-words} and \emph{ref-languages}, and classical regular expressions. Let $\paraAlphabet = \{\lpara_x, \rpara_x \mid x \in X\}$ be a set that contains a special pair of brackets for each variable. For an $\alpha \in \regex$, the \emph{ref-version} of $\alpha$ is the classical regular expression $\alpha_{\refexp} \in \regexp_{\Sigma \cup X \cup \paraAlphabet}$ that is obtained from $\alpha$ by recursively replacing each variable binding $x\{\beta\}$ by $\lpara_x \beta \rpara_x$. The ref-language of $\alpha$ is then defined by $\reflang(\alpha) = \lang(\alpha_{\refexp})$. For a \emph{ref-word} $w \in \reflang(\alpha)$, the \emph{dereference} $\deref(w)$ is obtained by recursively replacing every occurrence of a variable $x$ by $\beta$, where $\lpara_x \beta \rpara_x$ is the next matching pair of $\lpara_x, \rpara_x$ brackets to the left of this occurrence of $x$, or $\varepsilon$ if no such pair exists. Note that, by definition of regex and its ref-versions, every $w \in \reflang(\alpha)$ is well-formed with respect to each individual pair of brackets $\lpara_x, \rpara_x$, and it is impossible that some $x$ occurs between matching pairs of $\lpara_x, \rpara_x$. We extend $\deref$ from ref-words to ref-languages in the natural way. Finally, for every $\alpha \in \regex$, we define $\lang(\alpha) = \{\deref(w) \mid w \in \reflang(\alpha)\}$.\par
The following example illustrates these definitions.

\begin{example}\label{regexExample}
Consider $\alpha = ((x\{\ta^+\} y\{\tb^+\}) \altop y\{\ta^+\}) (x \td)^+ y \in \regex_{\Sigma, X}$ with $\Sigma = \{\ta, \tb, \td\}$, $X = \{x, y\}$, and the corresponding $\alpha_{\refexp} = ((\lpara_x \ta^+ \rpara_x \lpara_y\tb^+\rpara_y) \altop (\lpara_y\ta^+\rpara_y)) (x \td)^+ y \in \regexp_{\Sigma \cup X \cup \paraAlphabet}$ with $\paraAlphabet = \{\lpara_x, \rpara_x, \lpara_y, \rpara_y\}$. We note that, e.\,g., $w_1 = \lpara_x \ta \ta \rpara_x \lpara_y\tb\rpara_y x \td y \in \reflang(\alpha)$ and $w_2 = \lpara_y \ta \ta \ta \rpara_y x \td x \td y \in \reflang(\alpha)$, which implies $\deref(w_1) = \ta \ta \tb \ta \ta \td \tb \in \lang(\alpha)$ and $\deref(w_2) = \ta \ta \ta \td \td \ta \ta \ta \in \lang(\alpha)$. In particular, due to the presence of an alternation operator, $\reflang(\alpha)$ can contain words with occurrences of variable $x$ that are not preceded by a pair of brackets $\lpara_x, \rpara_x$ and are therefore replaced by $\varepsilon$ by the dereference-function $\deref$. As can be easily verified, the language described by $\alpha$ is $\{\ta^n \tb^m (\ta^n \td)^k \tb^m \mid n, m, k \geq 1\} \cup \{\ta^n \td^k \ta^n \mid n, k \geq 1\}$. 
\end{example}

\section{A Remark on Contracted Transitions of Memory Automata}\label{sec:appendixContractedTransitions}

For our definitions and results, the concept of contracted transitions are central. Computing all those contracted transitions and replacing the normal transitions by the contracted ones could be considered as making an $\MFA$ $\eword$-free. More precisely, we could compute for every $q \in Q$ and $x \in \kSigma$ the set $\mathcal{CT}(q, x)$ of all $(p, C)$ such that $q \to_{\contracted} (p, C, x)$. However, we observe that $|\mathcal{CT}(q, x)|$ is not necessarily polynomially bounded. For example, if, for every $x \in \memInstAlphabet_k$, and $j$, $1 \leq j \leq k$, we have $(p_{j-1}, x) \to p_{j}$, then it can be easily verified that for every reduced $C \in \memInstAlphabet_k$, we have $p_0 \to_{\contracted} (p_k, C, x)$, which implies that $|\mathcal{CT}(q, x)| \geq 2^k$. On the other hand, changing the model of $\MFA$ such that it only allows what we denoted by contracted transitions is problematic, since then these contracted transitions need to be computed in the transformation from regex to $\MFA$. This also shows and justifies why our concept of memory determinism is complicated on a technical level.

\section{Example of a Memory-Deterministic Regex}\label{sec:ExampleMDRegex}

\begin{figure}

\resizebox{12.5cm}{!}
{
\begin{tikzpicture}[node distance=10mm,on grid,>=stealth',auto, 
state/.style={circle,draw=black,inner sep=0pt,minimum size=4mm}]
\node[state]  (start)  at (0,0)  {$$};
\node[state, draw=none]  (prestart)  [below=of start]  {$$};
\node[state]  (r1)  [right=of start, xshift=0.5cm]  {$r$};
\node[state]  (r2)  [right=of r1, xshift=0.5cm]  {$$};
\node[state]  (n1)  [right=of r2]  {$$};
\node[state]  (s1)  [right=of n1, yshift=1cm]  {$$};
\node[state]  (s2)  [right=of s1]  {$$};
\node[state]  (s3)  [right=of s2]  {$$};
\node[state]  (s4)  [right=of n1, yshift=-1cm]  {$$};
\node[state]  (s5)  [right=of s4]  {$$};
\node[state]  (s6)  [right=of s5]  {$$};
\node[state]  (s7)  [below=of s3]  {$$};
\node[state]  (c1)  [right=of s7, yshift=1cm]  {$$};
\node[state]  (c2)  [right=of c1]  {$$};
\node[state]  (c3)  [right=of c2]  {$$};
\node[state]  (c4)  [right=of c3]  {$$};
\node[state]  (c5)  [right=of s7, yshift=-1cm]  {$$};
\node[state]  (c6)  [right=of c5]  {$$};
\node[state]  (c7)  [right=of c6]  {$$};
\node[state]  (c8)  [below=of c4]  {$$};

\node[state]  (co1) [right=of c8]  {$t$};
\node[state]  (co2) [above=of co1]  {$$};
\node[state]  (r3)  [right=of co2]  {$t'$};
\node[state]  (r4)  [below=of r3]  {$s$};
\node[state]  (r5)  [below=of r4]  {$t''$};
\node[state,accepting]  (r6)  [below=of r5]  {$$};

\path[->] 
(prestart) edge node {$$} (start)
(start) edge node[above] {$\texttt{[add]}$} (r1)
(r1) edge node[above] {$\openshort{x}$} (r2)
(r2) edge node[above] {$\ta$} (n1)
(n1) edge node[above] {$\tzero$} (s1)
(s1) edge node[above] {$\_$} (s2)
(s2) edge node[above] {$\ta$} (s3)
(s3) edge node[left] {$\eword$} (s7)
(n1) edge node[below] {$\ta$} (s4)
(s4) edge node[below] {$\_$} (s5)
(s5) edge node[below] {$\tzero$} (s6)
(s6) edge node[left] {$\eword$} (s7)
(s7) edge node[above] {$\eword$} (c1)
(c1) edge node[above] {$\tzero$} (c2)
(c2) edge node[above] {$\_$} (c3)
(c3) edge node[above] {$\ta$} (c4)
(c4) edge node[left] {$\eword$} (c8)
(s7) edge node[below] {$\ta$} (c5)
(c5) edge node[below] {$\_$} (c6)
(c6) edge node[below] {$\tzero$} (c7)
(c7) edge node[below] {$\eword$} (c8)
(c8) edge node[above] {$\ta$} (co1)
(co1) edge node[right] {$\closeshort{x}$} (co2)
(co2) edge node[above] {$\texttt{;}$} (r3)
(r3) edge node[right] {$\texttt{[add]}$} (r4)
(r4) edge node[right] {$x$} (r5)
(r5) edge node[right] {$\texttt{;}$} (r6)
(start) edge[loop above] node {$\Sigma'$} (start)
(n1) edge[loop above] node {$\ta$} (n1)
(s1) edge[loop above] node {$\tzero$} (s1)
(s3) edge[loop above] node {$\ta$} (s3)
(s4) edge[loop below] node {$\ta$} (s4)
(s6) edge[loop below] node {$\tzero$} (s6)
(c1) edge[loop above] node {$\ta$} (c1)
(c2) edge[loop above] node {$\tzero$} (c2)
(c4) edge[loop above] node {$\ta$} (c4)
(c5) edge[loop below] node {$\ta$} (c5)
(c6) edge[loop below] node {$\ta$} (c6)
(c7) edge[loop below] node {$\tzero$} (c7)
(co1) edge[loop below] node {$\ta$} (co1)
(r3) edge[loop above] node {$\Sigma'$} (r3)
(r6) edge[loop left] node {$\Sigma$} (r6);

\end{tikzpicture}
}
\caption{An $\MFA$ for the regex $r$ for checking addresses. The symbol $\ta$ represents the character group $[\texttt{a}\text{-}\texttt{z}]$ (i.\,e., an arc labelled with $\ta$ represents individual arcs for each of the symbols in $[\texttt{a}\text{-}\texttt{z}]$) and the symbol $\tzero$ represents the character group $[\texttt{0}\text{-}\texttt{9}]$.}
\label{addressExampleFigure}
\end{figure}

We discuss a more elaborate example of a memory-deterministic regex.\par
Assume that we are dealing with input strings that have two occurrences of a postal address (both occurring between the symbols ``\texttt{[add]}'' and ``$\texttt{;}$''), for which we want to check whether they are identical and whether they have the right form. In the following definition of a regex performing this task, we use \emph{character groups} $[\texttt{a}\text{-}\texttt{z}]$ to denote the expression $(\texttt{a} \altop \texttt{b} \altop \ldots \altop \texttt{z})$, ``$\_$'' denotes the space character and $\Sigma$ is the complete alphabet. The regular expression $r_{\textsf{add}}$ that checks the correct format of the address is given by $r_{\textsf{add}} = r_{\textsf{name}} \cdot r_{\textsf{street}} \cdot r_{\textsf{city}} \cdot r_{\textsf{country}}$, where $r_{\textsf{name}} = r_{\textsf{country}} = [\texttt{a}\text{-}\texttt{z}]^+$ are simple expressions, but $r_{\textsf{street}} = ([\texttt{0}\text{-}\texttt{9}]^+\_[\texttt{a}\text{-}\texttt{z}]^+ \altop [\texttt{a}\text{-}\texttt{z}]^+\_[\texttt{0}\text{-}\texttt{9}]^+)$ and $r_{\textsf{city}} = ([\texttt{a}\text{-}\texttt{z}]^*[\texttt{0}\text{-}\texttt{9}]^+\_[\texttt{a}\text{-}\texttt{z}]^+ \altop [\texttt{a}\text{-}\texttt{z}]^+\_[\texttt{a}\text{-}\texttt{z}]^*[\texttt{0}\text{-}\texttt{9}]^+)$ must cater for the different possible orders of street name and street number (and also for the different possible orders of zip code and city name). The regex that solves the task described above is then given by $r = (\Sigma')^* \: \texttt{[add]} \: x\{r_{\textsf{add}}\} \: \texttt{;} \: (\Sigma')^* \: \texttt{[add]} \: x \: \texttt{;} \: \Sigma^*$, where $\Sigma' = \Sigma\setminus\{\texttt{[add]}, \texttt{;}\}$. Consider the $\MFA$ $M$ shown in Fig.~\ref{addressExampleFigure} that is equivalent to the regex $r$. For the sake of convenience, $M$ differs from $\canonicalMFA(r)$, but it can be easily obtained from $\canonicalMFA(r)$ by contracting the obvious paths of $\eword$-transitions. \par
It can be verified that $r$ is not a deterministic regex (in the sense of~\cite{FreydenbergerSchmidJCSS} and as defined on page~\pageref{deterministicDef}); this already follows from the fact that $r_{\textsf{add}}$ is not a deterministic (classical) regular expression. Moreover, $M$ is not a deterministic $\MFA$. In the following, we show that $M$ is indeed memory-deterministic. First, we recall that $M$ is memory-deterministic if, for all $q_1, q_2 \in Q$, $q_1 \samereach q_2$ implies $q_1 \memSync q_2$, where $q_1 \samereach q_2$ means that there is a word $w$ and synchronised computations $\vec{c}$ and $\vec{c'}$ of $M$ on input $w$ with $|\vec{c}| = |\vec{c'}| = m$ and the states of $\vec{c}[m]$ and $\vec{c'}[m]$ are $q_1$ and $q_2$, respectively, and $q_1 \memSync q_2$ means that the following properties are satisfied:

\begin{itemize}
\item For every reduced $C_1, C_2 \subseteq \memInstAlphabet_k$, $x \in \kSigma$ and $p_1, p_2 \in Q$, $((q_1, C_1, x) \to_{\contracted} p_1) \wedge ((q_2, C_2, x) \to_{\contracted} p_2) \Rightarrow (C_1 = C_2)$.
\item For every $x \in [k]$, 
\begin{itemize}
\item $\deltaContr(q_1, x) \neq \emptyset \Rightarrow (\deltaContr(q_2, y) = \emptyset$ for every $y \in \Sigma_k \setminus \{x\})$.
\item $\deltaContr(q_2, x) \neq \emptyset \Rightarrow (\deltaContr(q_1, y) = \emptyset$ for every $y \in \Sigma_k \setminus \{x\})$.
\end{itemize}
\end{itemize}

We note that there is only one state with a memory recall transition, namely $s$ (see Fig.~\ref{addressExampleFigure}). Moreover, there is no other state $s'$ with $s \neq s'$ and $s \samereach s'$ (this is due to the fact that any word that leads to state $s$ must have a suffix ``$\texttt{;} u \texttt{[add]}$'' with $u \in (\Sigma')^*$ and therefore cannot lead to any other state different from $s$). \par
Consequently, the property of memory-determinism can only be violated by states $q_1, q_2 \in Q$ with $q_1 \samereach q_2$ and $x \in \Sigma$, such that $(q_1, C_1, x) \to_{\contracted} p_1$ and $(q_2, C_2, x) \to_{\contracted} p_2$ with $C_1 \neq C_2$. For $q_1 = q_2$, this is obviously never the case. If $\{q_1, q_2\} \cap \{r, t\} = \emptyset$ (where $r$ and $t$ are as shown in Fig.~\ref{addressExampleFigure}), then $C_1 = C_2 = \emptyset$. If $q_1 = r$, then $q_1 \samereach q_2$ and $\texttt{[add]} \notin \Sigma'$ implies $q_2 = r$. Finally, if $q_1 = t$, then $x \in \{\ta, \texttt{;}\}$ must hold and we observe that $(t, \emptyset, \ta) \to_{\contracted} t$ and $(t, \{\close{x}\}, \texttt{;}) \to_{\contracted} t'$ are the only contracted transitions for $t$. All other states $p$ with some contracted transition $(p, C, \ta) \to_{\contracted} p'$ satisfy $C = \emptyset$ and for all other states $p$ with some contracted transition $(p, C, \texttt{;}) \to_{\contracted} p'$ we note that $t \notsamereach p$ (note that this latter case only applies to $p = t''$).\par
Consequently, $M$ is memory-deterministic and therefore $r$ is a memory-deterministic regex.

\end{document}